\newtheorem{thm}{Theorem}
\newtheorem{prop}{Proposition}
\newtheorem{lem}{Lemma}
\newtheorem{cor}{Corollary}
\newtheorem{exam}{Example}
\newtheorem{rem}{Remark}
\def\g{{\mathbf g}}
\def\0{{\mathbf 0}}
\newcommand{\F}{\mathbb{F}}
\newcommand{\Z}{\mathbb{Z}}
\begin{document}

%
%
%

\title{New families of self-dual codes}

\author{
Lin Sok\thanks{ School of Mathematical Sciences, Anhui University, Hefei, Anhui, 230601, { \tt soklin\_heng@yahoo.com}}
}

\date{}
\maketitle
\begin{abstract}
In the recent paper entitled ``Explicit constructions of MDS self-dual codes" accepted in { IEEE Transactions on Information Theory}, doi: 10.1109/TIT.2019.2954877, the author has constructed families of MDS self-dual codes from genus zero algebraic geometry (AG) codes, where the AG codes of length $n$ were defined using two divisors $G$ and $D=P_1+\cdots+P_n.$ In the present correspondence, we explore more families of optimal self-dual codes from AG codes. New families of MDS self-dual codes with odd characteristics and those of almost MDS self-dual codes are constructed explicitly from genus zero and genus one curves, respectively. More families of self-dual codes are constructed from algebraic curves of higher genus.

\end{abstract}
{\bf Keywords:} Self-orthogonal codes, self-dual codes, MDS codes, almost MDS codes, optimal codes, algebraic curves, algebraic geometry codes, differential algebraic geometry codes\\
\section{Introduction}
Self-dual codes are one of the most interesting classes of linear codes that find diverse applications in cryptographic protocols (secret sharing schemes) introduced in \cite{Cram, DouMesSol,Massey} and combinatorics \cite{MacSlo}. 
It is well-known that binary self-dual codes are asymptotically good~\cite{MacSloTho}. 

MDS codes form an optimal family of classical codes. They are closely related to combinatorial designs \cite[p. 328]{MacSlo}, and finite geometries \cite[p. 326]{MacSlo}. Due to their largest error correcting capability for given length and dimension, MDS codes are of great interest in both theory and practice. The most well-known family of MDS linear codes is that of Reed-Solomon codes. MDS linear codes exist in a very restrict condition on their lengths as the famous MDS conjecture states: for every linear
$[n, k, n -k + 1]$ MDS code over $\F_q,$ if $1 < k < q,$ then $n \le q + 1,$ except when $q$ is even and $k = 3 $ or $k = q -1,$
in which cases $n \le q + 2.$ The conjecture was proved by Ball \cite{Ball} for $q$ a prime. However, for self-dual case, the conjecture may not be true.

Due to the reasons mentioned above, MDS self-dual codes have been of much interest to many researchers. As we have already known that determining the parameters of a given linear code is a challenging problem in coding theory. However, the parameters of an MDS self-dual code are completely determined by its length. Constructions of MDS self-dual codes are valuable. For classical constructions of MDS self-dual codes, we refer to \cite{BetGeoMas,GeoKou,KimLee,Gue}. Existing families of MDS self-dual codes can be described as follows. 
Grassl {\em et al.} \cite{GraGul} constructed MDS codes of all lengths over ${\mathbb F}_{2^m}$  and of all highest possible length over finite fields of odd characteristics. Jin {\em et al.} \cite{JinXin} proved the existence of MDS self-dual codes over ${\mathbb F}_q$ in odd characteristic for $q\equiv 1 \pmod 4$ and for $q$ a square of a prime for some restricted lengths. Using the same technique developed in \cite{JinXin}, more families of MDS self-dual codes have been constructed in \cite{Yan,FangFu}.  Tong {\em et al.} \cite{TongWang} gave constructions of MDS Euclidean self-dual codes through cyclic duadic codes. The families of known MDS self-dual codes are summarized in Table \ref{table:1}.

\begin{table}
\caption{Existing families of MDS self-dual codes, $\eta$: the quadratic character of $\F_q$}
{\scriptsize
$$
\begin{array}{llc}

q&n&\text{References}\\
\hline\\
q=2^m& n\le q&\\
q=p^m,p\text{ odd prime }& n= q+1&\cite{GraGul}\\
\hline
q=r^2&n\le r&\\
q=r^2,r\equiv 3 \pmod 4&n=2tr,t\le (r-2)/2&\cite{JinXin}\\
\hline
q\equiv 3\pmod 4&n\equiv 3\pmod 4,(n-1)|(q-1)&\\
q\equiv 1 \pmod 4&(n-1)|(q-1)&\cite{TongWang}\\
\hline
q \equiv 1 \pmod4 &n|(q - 1), n < q - 1 &\\
q \text{ odd } &(n - 1)|(q - 1),\eta(1 - n) = 1 &\\
q \text{ odd } &(n - 2)|(q - 1),\eta(2 - n) = 1 &\\
q = r^2, r \text{ odd } &n = tr, t \text{ even }, 1 \le t \le r &\\
q = r^2, r \text{ odd } &n = tr + 1, t\text{ odd }, 1 \le t \le r &\\
q = r^s , r \text{ odd }, s \ge  2 &n = lr, l\text{ even }, 2l|(r - 1) &\cite{Yan}\\
q = r^s , r \text{ odd }, s \ge  2 &n = lr, l\text{ even }, (l - 1)|(r - 1),\eta(1 - l)=1 &\\
q = r^s , r \text{ odd }, s \ge  2 &n = lr + 1, l\text{ odd } , l|(r - 1),\eta(l) = 1 &\\
q = r^s , r \text{ odd }, s \ge  2 &n = lr + 1, l\text{ odd } , (l - 1)|(r - 1),\eta(l - 1) =\eta(-1) = 1 &\\
q = p^m , p\text{ odd prime }  &n= pr + 1, r|m &\\
q = p^m , p\text{ odd prime }  &n= 2p^e, 1 \le e < m,\eta(-1) = 1 &\\
\hline
q=p^m&n|(q-1), (q-1)/n\text{ even }&\\
q=p^m,m\text{ even },r=p^s,s|\frac{m}{2}&n=2tr^\ell,0\le \ell<m/s,1\le t\le (r-1)/2&\\
q=p^m,q\equiv 1\pmod 4& n=2p^\ell,0\le \ell <m&\\
q=p^m,m\text{ even },r=p^s,s|\frac{m}{2}&n=(2t+1)r^\ell+1,0\le \ell <m/s,0 \le t \le (r -1)/2 \text{ or } (\ell,t)=(m/s,0)&\cite{FangFu}\\
q=p^m,q\equiv 1 \pmod 4 &n=p^\ell+1,0\le \ell\le m\\
q=p^m &(n-2)|(q-1),\eta(2-n)=1\\
\hline
q=p^m& n=n_0, p|n_0, (n_0-1)|(q-1)\\
q=p^m,  q\equiv 1\pmod 4&  n=n_0+1,p|n_0, (n_0-1)|(q-1)\\
q=p^m,  q\equiv 1\pmod 4 &    n=p^r+1,1\le r\le m, r|m\\
 q=p^m, q \text{ a square }& n=n_0, (n_0-1)|(q-1) \\
q=p^m, q \text{ a square }&  n=n_0+1, (n_0-1)|(q-1)\\
q=p^m, q \text{ a square }&   n=2n_0,n_0  \text{ odd }, (n_0-1)|(q-1)\\
q=p^m  & n=n_0,n_0|\frac{(q-1)}{2} \\
q=p^m,  q\equiv 1\pmod 4 &   n=n_0+1,  n_0|\frac{(q-1)}{2}\\
q=p^m,q\equiv 1\pmod 4&  n=2p^r,1\le r< m, r|m&\cite{Sok}\\
q=p^m,m=2m_0& n=(t+1)n_0+2, n_0=p^{m_0}-1, n_0\equiv 0\pmod 4,  t  \text{ odd } ,  0< t\le \frac{n_0}{2}+1\\
q=p^m,m=2m_0& n=(t+1)n_0+2, n_0=p^{m_0}-1, n_0\equiv 2\pmod 4,  0< t\le \frac{n_0}{2}\\
q=p^m, q \text{ a square } &n=(t+1)n_0+2, n_0=\frac{q-1}{p^r+1}\text{ even}, 1\le r< m, \frac{n_0(p^r+1)}{2(p^r-1)}  \text{ odd },    t  \text{ odd } ,  1\le t\le p^r\\ 
q=p^m,  q \text{ a square } & n=(t+1)n_0+2,n_0=\frac{q-1}{p^r+1}\text{ even}, 1\le r< m, \frac{n_0(p^r+1)}{2(p^r-1)}  \text{ even },    1\le t\le p^r\\
q=p^m,   q \text{ a square } & n=(t+1)n_0+2, n_0=\frac{q-1}{p^r-1}\text{ even}, 1\le r< m,r|\frac{m}{2},  1\le t\le p^r-2\\

\end{array}
$$
}
\label{table:1}
\end{table}
 
The discovery of algebraic geometry codes in 1981 was due to Goppa \cite{Goppa}, where they were also called geometric Goppa codes. Goppa showed in his paper \cite{Goppa} how to construct linear codes from algebraic curves over a finite field. Despite a strongly theoritical construction, algebraic geometry (AG) codes have asymptotically good parameters, and it was the first time that linear codes improved the so-called Gilbert-Vasharmov bound. Self-dual AG codes were studied by Stichtenoth \cite{Stich88} and  Driencourt et {\em al.} \cite{DriSti}, where they first characterized such codes. However, the construction of MDS self-dual AG codes with odd characteristics or almost MDS self-dual AG codes  was not considered there. 

On the contrary to the MDS case, almost MDS codes exist more frequently, and it is thus worth exploring families of self-dual codes in such a case and those of optimal self-dual codes. In \cite{Stich88}, Stichtenoth gave constructions of self-orthogonal AG codes (and self-dual AG codes for some special cases) but did not consider an embedding the self-orthogonal codes into the self-dual ones.

In this paper, we will discover more families of optimal self-dual codes from algbraic curves over finite fields. Optimal self-dual codes are constructed from rational points on the curves and embedding their orthogonal subcodes. We improve the construction \cite{Sok} and other known constructions over $\F_q$ with $q$ a prime (see Theorem \ref{thm:6}) and also give explict constructions of the cosets of $\F_q$ with desired properties (see Lemma \ref{lem:alpha_ij} and Lemma \ref{lem:alpha_ij_2}). Due to Lemma \ref{lem:embedding2-odd-g} and Lemma \ref{lem:embedding2-even-g}, new classes of self-dual codes with prescribed minimum distance are constructed. Additionally, we construct MDS self-dual codes with new parameters 
$[24,12,13]_{37}$, $ [32,16,17]_{41}$, $[26,13,14]_{61}$, $[42,21,22]_{61}$, $[50,25,26]_{73}$, $[24,12,13]_{81},$ almost MDS self-dual codes with new parameters $[16,8,8]_9,[16,8,8]_{16}$, $[18,9,9]_{16}$, $[20,10,10]_{16}$, $[22,11,11]_{16}$, $[24,12,12]_{16}$ and optimal self-dual codes with new parameters $[28,14,12]_9,[26,13,12]_{16},[28,14,13]_{16},[30,15,14]_{16},[32,16,15]_{16}.$

The paper is organized as follows: Section \ref{section:pre} gives preliminaries and background on algebraic geometry codes. Section \ref{section:sd-AGcode} provides explicit constructions of self-dual codes from various algebraic curves. New families of self-dual codes are presented as well some numerical examples are also given. We end up with concluding remark in Section \ref{section:conclusion}.

\section{Preliminaries}\label{section:pre}

Let $\F_q$ be the finite field with $q$ elements. A 
linear code of length $n$ and dimension $k$ over ${{\mathbb F}_q},$ denoted as $q$-ary $[n,k]$ code, is a $k$-dimensional subspace
of  $ {\mathbb F}_q^n$. The
(Hamming) weight wt$({\bf{x}})$ of a vector ${\bf{x}}=(x_1, \dots,
x_n)$ is the number of nonzero coordinates in it. The {\em minimum
distance ({\rm{or}} minimum weight) $d(C)$} of $C$ is
$d(C):=\min\{{\mbox{wt}}({\bf{x}})|{\bf{x}} \in C, {\bf{x}} \ne
{\bf{0}} \}$. The parameters of an $[n,k]$ code with minimum distance $d$ are written $[n,k,d]$. 
If $C$ is an $[n,k,d]$ code, then from the Singleton bound, its minimum distance is bounded above by
$$d\le n-k+1.$$
A code meeting the above bound is called {\em Maximum Distance Separable} ({MDS}). A code is called {\em almost} MDS if its minimum distance is one unit less than the MDS case. A code is called {\it optimal} if it has the highest
possible minimum distance for its length and dimension.
The {\em Euclidean inner product} of ${\bf{x}}=(x_1,
\dots, x_n)$ and ${\bf{y}}=(y_1, \dots, y_n)$ in ${\mathbb F}_q^n$ is
${\bf{x}}\cdot{\bf{y}}=\sum_{i=1}^n x_i y_i$. The {\em dual} of $C$,
denoted by $C^{\perp}$, is the set of vectors orthogonal to every
codeword of $C$ under the Euclidean inner product. A linear code
$C$ is called 
{\em self-orthogonal} if $C\subset C^{\perp}$ and 
{\em self-dual} if $C=C^{\perp}$. It is well-known that a self-dual code can only exist for even lengths.

We refer to Stichtenoth \cite{Stich} for undefined terms related to algebraic function fields. 

Let ${\cal X}$ be a smooth projective curve of genus $g$ over $\F_q.$
The field of rational functions of ${\cal X}$ is denoted by $\F_q({\cal X}).$ Function fields of
algebraic curves over a finite field can be characterized as
finite separable extensions of $\F_q(x)$. We identify points on the curve ${\cal X}$ with places of the
function field $\F_q({\cal X}).$ A point on ${\cal X}$ is called rational if all of
its coordinates belong to $\F_q.$ Rational points can be identified
with places of degree one. We denote the set of $\F_q$-rational
points of ${\cal X}$ by  ${\cal X}(\F_q)$.

A divisor $G$ on the curve ${\cal X}$ is a formal sum $\sum\limits_{P\in {\cal X}}n_PP$ with only finitely many nonzeros $n_P\in \Z$.
The support of $G$ is defined as $supp(G):=\{P|n_P\not=0\}$. The degree of $G$ is defined by $\deg(G):=\sum\limits_{P\in {\cal X}}n_P\deg(P)$. 
For two divisors $G=\sum\limits_{P\in {\cal X}}n_PP$ and $H=\sum\limits_{P\in {\cal X}}m_PP$,  we say that $G\ge H$ if $n_P\ge m_P$ for all places $P\in {\cal X}$.

It is well-known that a nonzero polynomial $f(x)\in \F_q(x)$ can be factorized into irreducible factors as $f(x)=\alpha \prod\limits_{i=1}^s p_i(x)^{e_i},$ with $\alpha\in \F_q^*.$ Moreover, any irreducible polynomial $p_i(x)$ corresponds to a place, say $P_i$. We define the valuation of $f$ at $P_i$ as $v_{P_i}(f):=t$ if $p_i(x)^t|f(x)$ but $p_i(x)^{(t+1)}\not|f(x).$ 

For a nonzero rational function $f$ on the curve $\cal X$, we define the ``principal" divisor of $f$ as
$$(f):=\sum\limits_{P\in {\cal X}}v_P(f)P.$$ 

If $Z(f)$ and  $N(f)$ denotes the set of zeros and poles of $f$ respectively, we define the zero divisor and pole divisor of $f$, respectively by

$$
\begin{array}{c}
(f)_0:=\sum\limits_{P\in Z(f)}v_{P}(f)P,\\
(f)_\infty:=\sum\limits_{P\in N(f)}-v_{P}(f)P.\\
\end{array}
$$
Then $(f)=(f)_0-(f)_\infty $, and it is well-known that the principal divisor $f$ has degree $0.$

We say that two divisors $G$ and $H$ on the curve $\cal X$ are equivalent if $G=H+(z)$ for some rational function $z\in \F_q({\cal X}).$

For a divisor $G$ on the curve $\cal X$, we define
$${\cal L}(G):=\{f\in \F_q({\cal X})\backslash \{0\}|(f)+G\ge 0\}\cup \{0\},$$ 
and 
$${\Omega}(G):=\{\omega\in \Omega\backslash \{0\}|(\omega)-G\ge 0\}\cup \{0\},$$
where $\Omega:=\{fdx|f\in \F_q({\cal X})\}$, the set of differential forms on $\cal X$. It is well-known that, for a differential form $\omega$ on $\cal X$, there exists a unique a rational function $f$ on $\cal X$ such that $$\omega=fdt,$$
where $t$ is a local uniformizing parameters. In this case, we define the divisor associated to $\omega$ by $$(\omega)=\sum\limits_{P\in {\cal X}}v_P(\omega)P, $$
where $v_P(\omega):=v_P(f).$

Through out the paper, we let $D=P_1+\cdots+P_n$, called the rational divisor, where $(P_i)_{1\le i \le n}$ are places of degree one, and $G$ a divisor with $supp(D)\cap supp(G)=\emptyset$. Define the algebraic geometry code by
$$
C_{\cal L}(D,G):=\{(f(P_1),\hdots,f(P_n))|f\in {\cal L}(G)\},
$$
and the differential algebraic geometry code as
$$
C_{\Omega}(D,G):=\{(\text{Res}_{P_1}(\omega),\hdots,\text{Res}_{P_n}(\omega))|\omega\in {\Omega}(G-D)\},
$$
where $\text{Res}_{P}(\omega)$ denotes the residue of $\omega$ at point $P.$


%

The parameters of an algebraic geometry code $C_{\cal L}(D,G)$ is given as follows.
\begin{thm}\cite[Corollary 2.2.3]{Stich}\label{thm:distance} Assume that $2g -2 < deg(G) < n.$ Then the code $C_{\cal L}(D,G)$  has parameters $[n,k,d]$ satisfying
\begin{equation}
k=\deg (G)-g+1\text{ and } d\ge n-\deg (G).
\label{eq:distance}
\end{equation}

\end{thm}

The dual of the algebraic geometry code $C_{\cal L}(D,G)$ can be described as follows.

\begin{lem}\cite[Theorem 2.2.8]{Stich}\label{lem:dual1} With above notation, the two codes $C_{\cal L}(D,G)$ and $C_{\Omega}(D,G)$ are dual to each other.
\end{lem}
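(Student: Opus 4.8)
The plan is to establish the duality in two stages: first show that every codeword of $C_{\Omega}(D,G)$ is orthogonal to every codeword of $C_{\cal L}(D,G)$, and then verify that the two codes have complementary dimensions, so that the orthogonality is forced to be an equality. The essential input is the residue theorem, which asserts that for any differential $\eta$ on the (complete) curve ${\cal X}$ one has $\sum_{P\in {\cal X}}\text{Res}_P(\eta)=0$.

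For the orthogonality, I would take arbitrary $f\in {\cal L}(G)$ and $\omega\in {\Omega}(G-D)$ and study the single differential $\eta:=f\omega$. By definition $(f)\ge -G$ and $(\omega)\ge G-D$, hence
$$(\eta)=(f)+(\omega)\ge -G+(G-D)=-D,$$
so $\eta$ is regular away from $\{P_1,\dots,P_n\}$ and has at most a simple pole at each $P_i$. Consequently $\text{Res}_P(\eta)=0$ for every $P\notin \text{supp}(D)$. At each $P_i$ the function $f$ is regular (since $\text{supp}(G)\cap \text{supp}(D)=\emptyset$) and $\omega$ has at most a simple pole, so the local residue identity gives $\text{Res}_{P_i}(f\omega)=f(P_i)\,\text{Res}_{P_i}(\omega)$. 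The residue theorem applied to $\eta$ then yields
$$0=\sum_{P\in {\cal X}}\text{Res}_P(f\omega)=\sum_{i=1}^n f(P_i)\,\text{Res}_{P_i}(\omega),$$
which is exactly the Euclidean inner product of the codeword $(f(P_1),\dots,f(P_n))\in C_{\cal L}(D,G)$ with the codeword $(\text{Res}_{P_1}(\omega),\dots,\text{Res}_{P_n}(\omega))\in C_{\Omega}(D,G)$. This proves $C_{\Omega}(D,G)\subseteq C_{\cal L}(D,G)^{\perp}$.

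It then remains to match dimensions. The evaluation map ${\cal L}(G)\to \F_q^n$ has kernel ${\cal L}(G-D)$, so $\dim C_{\cal L}(D,G)=\ell(G)-\ell(G-D)$; likewise the residue map ${\Omega}(G-D)\to \F_q^n$ has kernel ${\Omega}(G)$ (a differential in ${\Omega}(G-D)$ has vanishing residues at all $P_i$ exactly when it is regular there), so $\dim C_{\Omega}(D,G)=i(G-D)-i(G)$, where $i(A):=\dim {\Omega}(A)$ is the index of specialty. Invoking the Riemann--Roch theorem in the form $i(A)=\ell(A)-\deg A+g-1$, a short computation gives
$$\dim C_{\Omega}(D,G)=\ell(G-D)-\ell(G)+\deg D=n-\dim C_{\cal L}(D,G).$$
Since $\dim C_{\cal L}(D,G)^{\perp}=n-\dim C_{\cal L}(D,G)$, the inclusion from the first stage is an equality, which completes the argument.

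I expect the main obstacle to be the careful justification of the local residue identity $\text{Res}_{P_i}(f\omega)=f(P_i)\,\text{Res}_{P_i}(\omega)$ and the clean application of the residue theorem, together with the verification that the residue-map kernel is precisely ${\Omega}(G)$; both rest on the hypothesis that $\text{supp}(G)$ and $\text{supp}(D)$ are disjoint so that each $f\in{\cal L}(G)$ is regular at the $P_i$ and each $P_i$ is a degree-one place (so no local trace correction enters the residue). Once these local facts are in place, the dimension count is routine Riemann--Roch bookkeeping.
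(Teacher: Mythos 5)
Your proposal is correct and matches the paper's approach: the paper does not prove this lemma itself but cites it as \cite[Theorem 2.2.8]{Stich}, and the proof in that reference is exactly your argument --- orthogonality of $C_{\Omega}(D,G)$ and $C_{\cal L}(D,G)$ via the residue theorem applied to $f\omega$, followed by a Riemann--Roch dimension count showing $\dim C_{\Omega}(D,G)=n-\dim C_{\cal L}(D,G)$. The local points you flag (regularity of $f$ at the $P_i$ since $\text{supp}(G)\cap\text{supp}(D)=\emptyset$, and the fact that a differential in $\Omega(G-D)$ with vanishing residues at all $P_i$ must lie in $\Omega(G)$ because a simple pole forces a nonzero residue) are handled correctly, so there is no gap.
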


Moreover, the differential code $C_{\Omega}(D,G)$ is determined as follows.
\begin{lem}\cite[Proposition 2.2.10] {Stich}\label{lem:dual2} With the above notation, assume that there exists a differential form $\omega $ satisfying
\begin{enumerate}
\item  $v_{P_i}(\omega)=-1,1\le i \le n$ and 
\item $\text{Res}_{P_i}(\omega)=\text{Res}_{P_j}(\omega)$ for $1\le i \le n.$ 
\end{enumerate}
Then $C_{\Omega}(D,G)=a\cdot C_{\cal L}(D,D-G+(\omega))$ for some $a\in ({\F^*_q}){^n}.$ 
\end{lem}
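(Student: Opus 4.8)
The plan is to construct an explicit isomorphism between the two spaces of representatives, $\Omega(G-D)$ and ${\cal L}(D-G+(\omega))$, and then to track how this isomorphism converts residues into evaluations. The starting observation is that the module of differential forms is one-dimensional over $\F_q({\cal X})$, so every differential $\eta$ may be written uniquely as $\eta=f\omega$ for a rational function $f\in \F_q({\cal X})$, and then $(\eta)=(f)+(\omega)$. First I would check that $\eta\mapsto f=\eta/\omega$ restricts to a bijection $\Omega(G-D)\to {\cal L}(D-G+(\omega))$: the condition $\eta\in \Omega(G-D)$ reads $(\eta)\ge G-D$, which via $(\eta)=(f)+(\omega)$ is equivalent to $(f)\ge G-D-(\omega)=-(D-G+(\omega))$, that is, $f\in {\cal L}(D-G+(\omega))$. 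This step is purely divisor-theoretic and uses neither hypothesis on $\omega$.

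The heart of the argument is a local residue computation. Fix $i$ and let $t_i$ be a local uniformizer at $P_i$. Hypothesis (1) says $\omega$ has a simple pole at $P_i$, so writing $\omega=h\,dt_i$ locally gives $h=\text{Res}_{P_i}(\omega)\,t_i^{-1}+(\text{terms of order}\ge 0)$, where the leading coefficient $\text{Res}_{P_i}(\omega)$ is nonzero exactly because the pole is simple. I would then observe that the coefficient of $P_i$ in the divisor $D-G+(\omega)$ is $1-0+v_{P_i}(\omega)=1+(-1)=0$, using $P_i\notin supp(G)$; consequently every $f\in {\cal L}(D-G+(\omega))$ has $v_{P_i}(f)\ge 0$, so $f$ is regular at $P_i$ with local expansion $f=f(P_i)+(\text{terms of order}\ge 1)$. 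Multiplying the two expansions, the coefficient of $t_i^{-1}$ in $f\omega=fh\,dt_i$ is $f(P_i)\,\text{Res}_{P_i}(\omega)$, giving the key identity
\begin{equation}
\text{Res}_{P_i}(\eta)=\text{Res}_{P_i}(f\omega)=\text{Res}_{P_i}(\omega)\cdot f(P_i),\qquad 1\le i\le n.
\label{eq:res-eval}
\end{equation}

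Assembling \eqref{eq:res-eval} across the $n$ coordinates, the residue vector attached to $\eta=f\omega$ is precisely $a\cdot(f(P_1),\dots,f(P_n))$ with $a=(\text{Res}_{P_1}(\omega),\dots,\text{Res}_{P_n}(\omega))$. By hypothesis (1) each entry of $a$ is a nonzero residue at a simple pole, so $a\in(\F_q^*)^n$; letting $\eta$ range over $\Omega(G-D)$ (equivalently $f$ over ${\cal L}(D-G+(\omega))$) then yields $C_{\Omega}(D,G)=a\cdot C_{\cal L}(D,D-G+(\omega))$, as claimed. I would add that hypothesis (2) makes all entries of $a$ equal, so $a$ is a scalar multiple of the all-ones vector; since componentwise scaling by a common nonzero constant fixes any linear code, this in fact upgrades the statement to the sharper equality $C_{\Omega}(D,G)=C_{\cal L}(D,D-G+(\omega))$, which is the form needed for the self-duality applications later. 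The one delicate point I anticipate is identity \eqref{eq:res-eval}: it is essential that the pole is \emph{simple}, for this is what simultaneously singles out $f(P_i)$ as the relevant coefficient and guarantees $\text{Res}_{P_i}(\omega)\ne 0$; a higher-order pole would neither linearize the residue nor keep $a$ inside $(\F_q^*)^n$.
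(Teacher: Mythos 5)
Your proof is correct. The paper does not actually prove this lemma---it is quoted directly from Stichtenoth \cite[Proposition 2.2.10]{Stich}---and your argument (multiplication by $\omega$ as a bijection ${\cal L}(D-G+(\omega))\to\Omega(G-D)$ via $(f\omega)=(f)+(\omega)$, together with the local expansion giving $\text{Res}_{P_i}(f\omega)=f(P_i)\,\text{Res}_{P_i}(\omega)$ and the nonvanishing of each $\text{Res}_{P_i}(\omega)$ at a simple pole) is exactly the standard proof of that cited result, with your closing remark that hypothesis (2) forces $a$ to be a constant vector, hence $C_{\Omega}(D,G)=C_{\cal L}(D,D-G+(\omega))$, being a correct and useful addition for the self-duality applications.
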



\section{Self-dual algebraic geometry codes}\label{section:sd-AGcode}
In this section, we will construct self-dual codes from algebraic geometry codes. Self-dual codes can be constructed directly from Lemma \ref{lem:char1} or from their self-orthogonal subcodes by extending the basis of the existing codes.

\begin{lem}\cite[Corollary 3.4]{Stich88}\label{lem:char1} With the above notation, assume that there exists a differential form $\omega $ satisfying
\begin{enumerate}
\item  $v_{P_i}(\omega)=-1, 1\le i \le n $ and 
\item $\text{Res}_{P_i}(\omega)=\text{Res}_{P_j}(\omega)=a_i^2$, $1\le i \le n,$  for some $a_i\in {\F^*_q}.$
\end{enumerate} 
Then the following statements hold.
\begin{enumerate}
\item  If $2G\le D+(\omega),$ then there exists a divisor $G'$ such that $C_{\cal L}(D,G)\sim C_{\cal L}(D,G')$, and  $C_{\cal L}(D,G')$ is self-orthogonal.
\item  If $2G= D+(\omega),$ then there exists a divisor $G'$ such that $C_{\cal L}(D,G)\sim C_{\cal L}(D,G')$, and  $C_{\cal L}(D,G')$ is self-dual.
\end{enumerate}
\end{lem}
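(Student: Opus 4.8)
The plan is to deduce both parts from Lemma~\ref{lem:dual1} and Lemma~\ref{lem:dual2}, once the scalar vector furnished by Lemma~\ref{lem:dual2} is made explicit. Condition~(2) in particular gives $\text{Res}_{P_i}(\omega)=\text{Res}_{P_j}(\omega)$, which are exactly the hypotheses of Lemma~\ref{lem:dual2}. To pin down the constants, I would write an arbitrary $\eta\in\Omega(G-D)$ as $\eta=f\omega$; the condition $(\eta)\ge G-D$ then reads $f\in{\cal L}(D-G+(\omega))$, and because $v_{P_i}(\omega)=-1$ one gets $\text{Res}_{P_i}(f\omega)=f(P_i)\,\text{Res}_{P_i}(\omega)=a_i^2\,f(P_i)$. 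Writing $a=(a_1,\dots,a_n)$ and $a^2=(a_1^2,\dots,a_n^2)$, this shows $C_{\Omega}(D,G)=a^2\cdot C_{\cal L}(D,D-G+(\omega))$, and Lemma~\ref{lem:dual1} upgrades it to $C_{\cal L}(D,G)^\perp=a^2\cdot C_{\cal L}(D,D-G+(\omega))$. Note also that $v_{P_i}(D-G+(\omega))=1-0+(-1)=0$, so $D-G+(\omega)$ has support disjoint from $D$.

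Two elementary facts about diagonal scaling drive the rest. First, for $b\in(\F_q^*)^n$ and any linear code $C$ one has $(b\cdot C)^\perp=b^{-1}\cdot C^\perp$. Second, scaling an AG code by function values is a divisor shift: if $z\in\F_q({\cal X})$ has $v_{P_i}(z)=0$ for all $i$, then $(z(P_1),\dots,z(P_n))\cdot C_{\cal L}(D,G)=C_{\cal L}(D,G-(z))$, with $G-(z)\sim G$ and $supp(G-(z))\cap supp(D)=\emptyset$. Using the weak approximation theorem \cite{Stich} I would choose $z$ with $z(P_i)=a_i$ (hence $v_{P_i}(z)=0$, since $a_i\ne 0$) and set $G'=G-(z)$; then $C_{\cal L}(D,G')=a\cdot C_{\cal L}(D,G)\sim C_{\cal L}(D,G)$.

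For part~(2) the hypothesis $2G=D+(\omega)$ gives $D-G+(\omega)=G$, so the first paragraph yields $C_{\cal L}(D,G)^\perp=a^2\cdot C_{\cal L}(D,G)$. Combining the two scaling facts, $C_{\cal L}(D,G')^\perp=(a\cdot C_{\cal L}(D,G))^\perp=a^{-1}\cdot C_{\cal L}(D,G)^\perp=a^{-1}\cdot a^2\cdot C_{\cal L}(D,G)=a\cdot C_{\cal L}(D,G)=C_{\cal L}(D,G')$, so $C_{\cal L}(D,G')$ is self-dual. For part~(1) the weaker hypothesis $2G\le D+(\omega)$ gives $G\le D-G+(\omega)$, hence ${\cal L}(G)\subseteq{\cal L}(D-G+(\omega))$ and $C_{\cal L}(D,G)\subseteq C_{\cal L}(D,D-G+(\omega))$. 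Scaling this inclusion by $a$ and comparing it with $C_{\cal L}(D,G')^\perp=a^{-1}\cdot a^2\cdot C_{\cal L}(D,D-G+(\omega))=a\cdot C_{\cal L}(D,D-G+(\omega))$ gives $C_{\cal L}(D,G')\subseteq C_{\cal L}(D,G')^\perp$, i.e. self-orthogonality.

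The main obstacle, and the only genuine function-field input beyond the cited lemmas, is realizing the passage from a merely scalar-equivalent code to an \emph{honest} AG code $C_{\cal L}(D,G')$: the conclusion demands an actual divisor $G'$, so the scaling vector $a$ must be the evaluation vector of some $z\in\F_q({\cal X})$. This is where condition~(2) is used twice over — the residues must be squares so that $a_i\in\F_q$, and only then does the weak approximation theorem produce $z$ with $z(P_i)=a_i$. I expect verifying that $z$ can be taken with $v_{P_i}(z)=0$ at every $P_i$ — so that both the evaluation identity and the disjointness $supp(G')\cap supp(D)=\emptyset$ hold — to be the one step requiring care; the remainder is formal manipulation of dual codes under diagonal scaling.
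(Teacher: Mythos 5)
There is nothing in the paper to compare your argument against: the lemma is imported verbatim from \cite[Corollary 3.4]{Stich88} and never proved in the text, so your proposal can only be judged on its own merits and against Stichtenoth's original argument. On that score it is correct, and it is in substance that original argument. The two genuine inputs are exactly the ones you isolate: (i) making the scalar vector of Lemma \ref{lem:dual2} explicit --- writing any $\eta\in\Omega(G-D)$ as $f\omega$ with $f\in{\cal L}(D-G+(\omega))$ and using $v_{P_i}(\omega)=-1$ to get $\text{Res}_{P_i}(f\omega)=a_i^2\,f(P_i)$, whence $C_{\cal L}(D,G)^\perp=a^2\cdot C_{\cal L}(D,D-G+(\omega))$ by Lemma \ref{lem:dual1}; and (ii) weak approximation to realize the square roots $a_i$ as evaluations $z(P_i)$ of a single function with $v_{P_i}(z)=0$, so that $a\cdot C_{\cal L}(D,G)$ is the honest AG code $C_{\cal L}(D,G-(z))$ with $G-(z)\sim G$ and support still disjoint from $D$ --- this is precisely where the squareness of the residues is consumed. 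The remaining steps, $(b\cdot C)^\perp=b^{-1}\cdot C^\perp$, $a^{-1}\cdot a^2=a$, and ${\cal L}(G)\subseteq{\cal L}(D-G+(\omega))$ when $2G\le D+(\omega)$, are the formal bookkeeping you describe and they close both parts. One minor observation: your computation never uses the hypothesis that the residues at the various $P_i$ are equal to one another, only that each is a nonzero square; that equality is an artifact of how Lemma \ref{lem:dual2} is stated, and your direct derivation bypasses it. No gaps.
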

The existence of self-dual algebraic geometry codes can be given as follows.

\begin{prop} \cite[Corollary 3.1.49, p.292]{TV} With the above notation, assume that $N=|{\cal X}(\F_q)|>2g.$ Then there exists a self-dual code with parameters $[n,\frac{n}{2},\ge \frac{n}{2}-g+1]$ over $\F_q$ for some $n$ even such that $n\ge N-2g-1.$
\end{prop}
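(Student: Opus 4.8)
The plan is to invoke the self-duality criterion of Lemma \ref{lem:char1}(2): it suffices to produce an even integer $n\ge N-2g-1$ with $n>2g-2$, a rational divisor $D=P_1+\cdots+P_n$ of distinct rational points, and a differential $\omega$ with $v_{P_i}(\omega)=-1$, all residues $\text{Res}_{P_i}(\omega)$ a common nonzero square, and $D+(\omega)=2G$ for an integral divisor $G$ disjoint from $D$. Granting this, $C_{\cal L}(D,G)$ is equivalent to a self-dual code; since $\deg(\omega)=2g-2$ forces $\deg G=\tfrac n2+g-1$, the hypotheses $2g-2<\deg G<n$ of Theorem \ref{thm:distance} both reduce to $n>2g-2$, and the parameters $[n,\tfrac n2,\ge\tfrac n2-g+1]$ follow at once. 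I would first weaken the requirement ``common square residue'' to ``every residue a nonzero square'': if $\text{Res}_{P_i}(\omega)=b_i^2$, then replacing $\omega$ by $u^2\omega$, where $u$ interpolates $u(P_i)=b_i^{-1}$ (such $u$ exists by Riemann--Roch since $n$ is large), rescales each residue by a square to the common value $1$ and changes $G$ to the still-integral $G+(u)$, which remains disjoint from $D$ because $v_{P_i}(u)=0$.

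The residue side is then handled by a dimension count. The space $\Omega(-D)=\{\omega:(\omega)\ge -D\}$ of differentials with at most simple poles along $D$ has $\dim\Omega(-D)=\ell(K+D)=n+g-1$ whenever $n>2g-2$, and the kernel of the residue map $\omega\mapsto(\text{Res}_{P_1}\omega,\dots,\text{Res}_{P_n}\omega)$ is the $g$-dimensional space of holomorphic differentials. Hence the image has dimension $n-1$ and, by the residue theorem, equals the whole hyperplane $\{\sum_i r_i=0\}\subset\F_q^n$. Thus any residue vector summing to zero is realized by some $\omega\in\Omega(-D)$; in particular I may demand that all residues be nonzero squares, such square vectors with vanishing sum existing for every $n$ in the relevant range (immediate for $q$ even, and elementary for $q$ odd).

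The divisor side is the genuine difficulty. Integrality of $G=\tfrac12(D+(\omega))$ is equivalent to $(\omega)$, hence $2G-D$, being an even canonical divisor, i.e. to $[D]+[W]\in 2\,\mathrm{Pic}({\cal X})$, where $W$ is the canonical class; this is a condition on \emph{which} rational points enter $D$. Since $\mathrm{Pic}({\cal X})/2\,\mathrm{Pic}({\cal X})$ is an $\F_2$-space of dimension at most $2g+1$ (the $2$-rank of the Jacobian is $\le 2g$, plus one for the degree) while we have $N>2g$ classes $[P_i]$ at our disposal, a linear-algebra/pigeonhole argument lets me discard at most $2g+1$ of the points so that the residual class becomes $2$-divisible, leaving a set $D$ with $|D|\ge N-2g-1$ (adjusting by one point for parity of $n$ if needed).

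The step I expect to be the main obstacle is that these two constructions are \emph{coupled} and cannot be imposed separately: once the even divisor $2G-D$ is fixed, the differential with exactly that divisor is unique up to a scalar, so its residues are determined up to a common factor, and the freedom used in the second paragraph to prescribe square residues is no longer available. The crux is therefore a simultaneous count: one must choose $D$ (of size $\ge N-2g-1$) meeting the $2$-divisibility condition \emph{and} such that the pinned-down residues all lie in a single class of $\F_q^*/(\F_q^*)^2\cong\Z/2$, exploiting both the subset choices among the $N>2g$ points and the $|\mathrm{Pic}[2]|$ admissible choices of $G$ within the class $\tfrac12([D]+[W])$ (each giving a different divisor $2G-D$ and hence a different residue pattern). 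Carrying out this dual counting so that the loss is exactly $2g+1$ points, and only then applying Lemma \ref{lem:char1} and Theorem \ref{thm:distance}, is the delicate heart of the argument.
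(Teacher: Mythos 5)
First, a point of order: the paper contains no proof of this proposition. It is imported verbatim, with attribution, from \cite[Corollary 3.1.49, p.292]{TV}, and the text moves on immediately after stating it; there is no ``paper proof'' to measure your attempt against, so your reconstruction has to stand on its own. It does not, and the decisive failure is the one you yourself flag in your final paragraph.

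What you have written is a program with an acknowledged hole at its center, not a proof. Your first two paragraphs are fine: the reduction to Lemma~\ref{lem:char1} and Theorem~\ref{thm:distance} is correct, as is the dimension count ($\dim\Omega(-D)=n+g-1$, kernel of the residue map equal to the holomorphic differentials, image equal to the hyperplane $\sum_i r_i=0$). But the freedom created there is destroyed in your last step, exactly as you observe: once a subset $S$ of points and a class $[G]$ with $2[G]=[D_S]+[W]$ are fixed, a differential with $(\omega)=2G-D_S$ is unique up to a constant in $\F_q^*$, so its residue vector is pinned up to one global scalar. To finish, you would have to show that among the admissible choices --- subsets $S$ with $|S|\ge N-2g-1$ meeting the $2$-divisibility condition, and for each such $S$ the finitely many classes $[G]$ (a torsor under the $2$-torsion of the degree-zero class group, of size at most $2^{2g}$) --- at least one makes all $n$ pinned residues lie in a single square class: morally $2^{\,n-1}$ constraints against exponentially fewer choices, with no independence or equidistribution statement available to close the count. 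You call this ``the delicate heart of the argument'' and stop; that is precisely the missing proof. Moreover, the pigeonhole step feeding into it is itself unjustified: the set of subset sums $\sum_{i\in T}[P_i]$ in $\mathrm{Pic}(\F_q)/2\,\mathrm{Pic}(\F_q)$ is exactly the $\F_2$-span of the classes $[P_i]$, and nothing guarantees that the target class $[P_1+\cdots+P_N]+[W]$ lies in that span --- equivalently, that the canonical class does; rational points need not generate $\mathrm{Pic}/2\,\mathrm{Pic}$. Your parenthetical ``adjusting by one point for parity'' would also change the class and destroy the divisibility just arranged (parity is in fact already controlled by the degree component of $\mathrm{Pic}/2\,\mathrm{Pic}$, so the adjustment is both unjustified and unnecessary). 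The honest options are to cite the result, as the paper does, or to reproduce the argument of \cite{TV}; it is not recoverable from Lemma~\ref{lem:char1} plus the counting you sketch.
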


The following lemma will be applied many times for constructing  a $q$-ary self-dual code of length $n$ (if it exists for such a length).
\begin{lem}\label{lem:embedding2-odd} Let $n$ be an odd positive integer and $C$ a $q$-ary self-orthogonal code with parameters $[n,\frac{n-1}{2}]$. Then there exists a self-orthogonal code $C_0$ with parameters $[n+1,\frac{n-1}{2}]$ and a self-dual code $C_0'$ with parameters $[n+1,\frac{n+1}{2}]$ such that $C_0\subset C_0'\subset C_0^\perp.$
\end{lem}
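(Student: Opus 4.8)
The plan is to realise $C_0'$ as a one-dimensional self-orthogonal enlargement of the trivial length-$(n+1)$ extension of $C$, the extra coordinate being used to absorb a residual square. First I would set $C_0=\{(\mathbf{c},0):\mathbf{c}\in C\}\subset\F_q^{n+1}$, the code obtained from $C$ by appending a zero coordinate. Appending zeros does not change inner products, so $C_0$ is self-orthogonal with parameters $[n+1,\frac{n-1}{2}]$, and a direct computation gives $C_0^\perp=\{(\mathbf{x},y):\mathbf{x}\in C^\perp,\ y\in\F_q\}$, of dimension $\frac{n+3}{2}$. Since $\dim C^\perp-\dim C=1$, I would choose $\mathbf{w}\in C^\perp\setminus C$ so that $C^\perp=C\oplus\langle\mathbf{w}\rangle$, and put $a=\mathbf{w}\cdot\mathbf{w}$. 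A short argument shows $a\neq 0$: if $a=0$, then using $\mathbf{w}\perp C$ and $\mathbf{c}\cdot\mathbf{c}'=0$ on $C$, the whole space $C^\perp=C+\langle\mathbf{w}\rangle$ would be self-orthogonal of dimension $\frac{n+1}{2}>\frac{n}{2}$ inside $\F_q^n$, which is impossible.

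The heart of the construction is to adjoin to $C_0$ a single vector. I look for $\mathbf{v}=(\mathbf{w},y)\in C_0^\perp$ with $\mathbf{v}\cdot\mathbf{v}=0$; since $\mathbf{v}\cdot\mathbf{v}=a+y^2$, this amounts to solving $y^2=-a$. Whenever such a $y$ exists, $C_0':=C_0+\langle\mathbf{v}\rangle$ is self-orthogonal of dimension $\frac{n+1}{2}$, hence self-dual, and because $\mathbf{v}\in C_0^\perp$ while $C_0$ is self-orthogonal, one obtains the chain $C_0\subset C_0'\subset C_0^\perp$ with each inclusion of codimension one. So the entire statement reduces to showing that $-a$ is a square in $\F_q$.

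This last point is the step I expect to be the main obstacle, and I would settle it through the Witt/discriminant structure of the standard form on $\F_q^n$. Because $C$ is a maximal totally isotropic subspace of $(\F_q^n,\sum x_i^2)$ of dimension equal to the Witt index $\frac{n-1}{2}$, the space splits as $\mathbb{H}^{(n-1)/2}\perp\langle a\rangle$, where the anisotropic line $\langle a\rangle$ is isometric to the non-degenerate form induced on $C^\perp/C$ and thus records the square class of $a$. Comparing discriminants — the left-hand side has discriminant $1$, while each hyperbolic plane contributes $-1$ — forces $a\equiv(-1)^{(n-1)/2}\pmod{(\F_q^*)^2}$. Consequently $-a$ is a square exactly when $n\equiv 3\pmod 4$, or $n\equiv 1\pmod 4$ with $-1$ a square in $\F_q$.

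I would stress that this square condition is not an artefact of the construction but the genuine crux: it coincides precisely with the classical existence criterion for a self-dual code of length $n+1$ over $\F_q$ (equivalently, with $(-1)^{(n+1)/2}$ being a square). Hence in every case where a self-dual $C_0'$ can possibly exist, the required $y$ is automatically available and the construction closes; and when the condition fails no such $C_0'$ exists at all. I therefore anticipate that the proof is best read as carrying the (implicit) hypothesis that $q$ and $n$ lie in this admissible range, with the discriminant computation above supplying the one nontrivial ingredient.
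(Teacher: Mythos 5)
Your construction is identical to the paper's: append a zero coordinate to form $C_0$, then adjoin a vector of $C_0^\perp$ that is isotropic modulo $C_0$ to reach the self-dual code $C_0'$. The difference is in the crux. The paper's proof simply says ``let $\mathbf{x}$ be a nonzero element in the quotient space $C_0^\perp/C_0$ such that $\mathbf{x}\cdot\mathbf{x}=0$'' and never justifies that such an $\mathbf{x}$ exists, whereas you prove it: writing $C^\perp=C\oplus\langle\mathbf{w}\rangle$ with $a=\mathbf{w}\cdot\mathbf{w}\neq 0$, the induced form on the two-dimensional quotient $C_0^\perp/C_0$ is $\langle a\rangle\perp\langle 1\rangle$, which is isotropic exactly when $-a$ is a square, and your discriminant computation $a\equiv(-1)^{(n-1)/2}\pmod{(\F_q^*)^2}$ identifies this with the classical existence criterion $(-1)^{(n+1)/2}\in(\F_q^*)^2$ for self-dual codes of length $n+1$ (over odd $q$; for even $q$ every element is a square and the step is vacuous). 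This added ingredient is not pedantry: as literally stated the lemma is false --- over $\F_3$ with $n=5$ a self-orthogonal $[5,2]$ code exists, e.g. $\langle(1,1,1,0,0),(1,2,0,1,0)\rangle$, yet no self-dual $[6,3]$ code over $\F_3$ exists since $-1$ is a nonsquare --- and the paper itself attaches the caveat ``if it exists for such a length'' to the companion Lemma~\ref{lem:embedding2-even} and Lemma~\ref{lem:embedding2-odd-g} but omits it here. So your proof buys two things the paper's does not: it closes the existence gap, and it makes precise the implicit hypothesis under which the embedding goes through; the paper's version buys only brevity.
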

\begin{proof} Let $G$ be the generator matrix of $C$ and $C_0$ be a self-orthogonal code obtained from $C$ by lengthening one zero coordinate. Clearly, the code $C_0$ has parameters $[n+1,\frac{n-1}{2}]$, and $C_0^\perp$ has parameters $[n+1,\frac{n+1}{2}+1]$. Denote $G_0$ the generator matrix of $C_0$, that is, 
$$G_0=
\left(
\begin{array}{cc}
&0\\
G&\vdots\\
&0\\
\end{array}
\right).
$$
Let ${\bf x}$ be a nonzero element in the quotient space $C_0^\perp\slash C_0$ such that ${\bf x}\cdot {\bf x}=0.$ Then the code $C_0'$ with its following generator matrix $G_0'$ is self-dual with parameters $[n+1, \frac{n+1}{2}]:$

$$G'_0=
\left(
\begin{array}{ccc}
&&0\\
G&&\vdots\\
&&0\\
\hline
&{\bf x}&\\
\end{array}
\right).
$$

Moreover, we have the following inclusion
$$C_0\subset C'_0\subset {C_0}^\perp.$$
\end{proof}
Similarly, we have the following embedding.
\begin{lem}\label{lem:embedding2-even} Let $n$ be an even positive integer and $C$ a $q$-ary self-orthogonal code with parameters $[n,\frac{n}{2}-1]$. Then there exists  a self-dual code $C'$ (if it exists for such a length) with parameters $[n,\frac{n}{2}]$ such that $C\subset C'\subset C^\perp.$
\end{lem}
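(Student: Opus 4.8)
The plan is to follow the same extension-by-one-isotropic-vector strategy as in Lemma \ref{lem:embedding2-odd}, but without the initial lengthening step since $n$ is already even. First I would record the dimension count: because $C$ is self-orthogonal we have $C \subseteq C^\perp$, and since $\dim C = \frac{n}{2}-1$, the dual has $\dim C^\perp = n - (\frac{n}{2}-1) = \frac{n}{2}+1$. Thus the quotient space $C^\perp / C$ has dimension exactly $2$, and the Euclidean inner product descends to a well-defined symmetric bilinear form on this quotient, precisely because $C$ is orthogonal to all of $C^\perp$.

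The heart of the argument is to produce a self-dual code strictly between $C$ and $C^\perp$. Suppose there exists a nonzero vector ${\bf x} \in C^\perp$ whose image in $C^\perp / C$ is nonzero and satisfies ${\bf x}\cdot{\bf x}=0$. I would then set $C' = C + \langle {\bf x}\rangle$, so that $\dim C' = \frac{n}{2}$. A direct check shows $C'$ is self-orthogonal: for $c,c'\in C$ and scalars $\lambda,\mu$, expanding $(c+\lambda {\bf x})\cdot(c'+\mu {\bf x})$ gives $c\cdot c' + \mu(c\cdot {\bf x}) + \lambda({\bf x}\cdot c') + \lambda\mu({\bf x}\cdot{\bf x})$, and every term vanishes: the first because $C$ is self-orthogonal, the middle two because ${\bf x}\in C^\perp$, and the last by the isotropy assumption. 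Since $C'\subseteq (C')^\perp$ and $\dim C' = \frac{n}{2} = n - \dim C' = \dim (C')^\perp$, we conclude $C' = (C')^\perp$, i.e.\ $C'$ is self-dual. The chain $C \subset C' \subset C^\perp$ is immediate from the construction, which mirrors the generator-matrix description in Lemma \ref{lem:embedding2-odd}: one simply appends the row ${\bf x}$ to a generator matrix of $C$.

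The step I expect to be the genuine obstacle, and the reason for the parenthetical ``(if it exists for such a length)'' in the statement, is the \emph{existence} of the isotropic vector ${\bf x}$. The two-dimensional quotient $C^\perp/C$, equipped with the induced nondegenerate form, is either hyperbolic, in which case isotropic vectors exist and the self-dual extension is realized, or anisotropic, in which case no nonzero isotropic vector exists and no self-dual code lies between $C$ and $C^\perp$. This dichotomy reflects the classical fact that over $\F_q$ a self-dual code of length $n$ need not exist for every $q$ and $n$. Consequently the lemma is conditional: the construction succeeds precisely when such an ${\bf x}$ can be found, and the remainder of the argument is the short verification above.
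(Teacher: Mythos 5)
Your proposal is correct and follows essentially the same route as the paper: the paper gives no separate argument for this lemma (it is stated with only the remark ``Similarly, we have the following embedding''), the intended proof being the construction from Lemma \ref{lem:embedding2-odd} without the lengthening step --- exactly your adjunction of an isotropic vector ${\bf x}\in C^\perp\setminus C$ to a generator matrix of $C$. Your write-up is in fact more complete than the paper's, since you verify the dimension count $\dim(C^\perp/C)=2$, check self-orthogonality of $C+\langle{\bf x}\rangle$ explicitly, and correctly identify the hyperbolic/anisotropic dichotomy of the induced form on $C^\perp/C$ as the precise content of the parenthetical existence clause.
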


\begin{lem}\label{lem:embedding2-odd-g} Let $\cal X$ be a smooth projective curve having genus $g.$ Let $n$ be an odd positive integer and $D=P_1+\cdots+P_n$ be a divisor on $\cal X$. Assume that there exists a differential form $\omega $ satisfying

\begin{enumerate}
\item $v_{P_i} (\omega) = -1,$ for $i = 1,\hdots, n$ and
\item$ \text{Res}_{P_i} (\omega)  = \text{Res}_{P_j} (\omega) )=a_i^2$ with $a_i\in \F_q^*$ for $1 \le i, j \le n.$
\end{enumerate}
If $G=\frac{(2g-3+n)}{2}P_\infty$ with $\text{supp}(G)\cap \text{supp}(D)=\emptyset ,$ then there exists a self-orthogonal code $C_{\cal L}(D,G)$ with parameters $[n,\frac{n-1}{2},\frac{n+3}{2}-g]$.
Moreover, the code $C_{\cal L}(D,G)$ can be embedded into a self-dual $[n+1,\frac{n+1}{2},\ge \frac{n+1}{2}-g]$ code $C'$(if a self-dual code exists for such a length $n$).

\end{lem}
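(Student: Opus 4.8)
The plan is to combine three facts already in hand: Theorem \ref{thm:distance} for the parameters of $C_{\cal L}(D,G)$, Lemma \ref{lem:char1} for self-orthogonality, and Lemma \ref{lem:embedding2-odd} for the passage to a self-dual code of length $n+1$. First I would record that, since $P_\infty$ has degree one and $n$ is odd, the number $2g-3+n$ is even, so $\deg(G)=\tfrac{2g-3+n}{2}$ is an integer; moreover $2g-2<\deg(G)<n$ holds exactly when $n>2g-1$. Assuming this range, Theorem \ref{thm:distance} gives
$$k=\deg(G)-g+1=\frac{n-1}{2},\qquad d\ge n-\deg(G)=\frac{n+3}{2}-g,$$
the asserted parameters.

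Next I would invoke Lemma \ref{lem:char1}(1), noting that its hypotheses on $\omega$ are verbatim the conditions (1) and (2) assumed here; the only thing left to verify is the divisor inequality $2G\le D+(\omega)$. Here the bookkeeping is encouraging: since $v_{P_i}(\omega)=-1$ the coefficient of $D+(\omega)$ at each $P_i$ is $1+(-1)=0$, and because $\deg(\omega)=2g-2$ one has $\deg\big(D+(\omega)-2G\big)=(n+2g-2)-(2g-3+n)=1$. Thus $D+(\omega)-2G$ is a divisor of degree one supported off $D$, and the inequality reduces to the single local requirement $v_{P_\infty}(\omega)\ge 2g-3+n$, i.e.\ that $\omega$ vanish to order at least $\deg(2G)$ at $P_\infty$. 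This vanishing condition is the crux of the argument and the main obstacle: the degree count and the cancellation at the $P_i$ are automatic, but the order of vanishing of $\omega$ at $P_\infty$ is genuine geometric data that must be supplied by the explicit curve and differential (as in the genus zero and genus one applications) so that the chosen $G=\tfrac{2g-3+n}{2}P_\infty$ is compatible with $\omega$. Granting it, Lemma \ref{lem:char1}(1) yields $G'\sim G$ with $C_{\cal L}(D,G')$ self-orthogonal, and since the equivalence is monomial it preserves length, dimension and weights, so the self-orthogonal code has parameters $[n,\tfrac{n-1}{2},\tfrac{n+3}{2}-g]$.

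Finally I would apply Lemma \ref{lem:embedding2-odd} to this self-orthogonal $[n,\tfrac{n-1}{2}]$ code: lengthening by a zero coordinate gives a self-orthogonal $[n+1,\tfrac{n-1}{2}]$ code $C_0$, and adjoining a suitable isotropic representative of $C_0^\perp\slash C_0$ produces the self-dual code $C'$ with parameters $[n+1,\tfrac{n+1}{2}]$, which exists precisely when such an isotropic vector does (the stated caveat). For the distance of $C'$ I would bound the dual of $C_{\cal L}(D,G)$: by Lemmas \ref{lem:dual1} and \ref{lem:dual2} it is monomially equivalent to $C_{\cal L}(D,H)$ with $H=D-G+(\omega)$ of degree $\tfrac{n+2g-1}{2}$, whence $d(C^\perp)\ge n-\deg(H)=\tfrac{n+1}{2}-g$. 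Choosing the isotropic representative so that its first $n$ coordinates lie in $C^\perp\setminus C$ and its last coordinate is nonzero, every nonzero word of $C'$ is either a lengthened word of $C$ (weight $\ge\tfrac{n+3}{2}-g$) or a word whose first $n$ coordinates form a nonzero element of $C^\perp$ (weight $\ge\tfrac{n+1}{2}-g$); hence $d(C')\ge\tfrac{n+1}{2}-g$, as claimed.
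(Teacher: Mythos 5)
Your proposal is correct and takes essentially the same route as the paper: Theorem~\ref{thm:distance} for the parameters, Lemma~\ref{lem:char1}(1) for self-orthogonality once $2G\le D+(\omega)$ is secured, Lemma~\ref{lem:embedding2-odd} for the embedding, and Lemma~\ref{lem:dual2} to get $d\bigl(C_{\cal L}(D,G)^\perp\bigr)\ge \frac{n+1}{2}-g$, which bounds $d(C')$ since $C'$ sits inside the dual of the lengthened code. The geometric condition you single out as the crux --- that hypotheses (1) and (2) alone do not force $(\omega)+D-2G\ge 0$ --- is exactly what the paper supplies by fiat, opening its proof with the choice $\omega=\frac{dx}{h}$, $h=\prod_{\alpha\in U}(x-\alpha)$, so that $(\omega)=(2g-2+n)P_\infty-D$ and the inequality is immediate; your closing argument for $d(C')$ is just a more careful version of the paper's one-line appeal to $C'\subset C_{\cal L}(D,G)^\perp$.
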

\begin{proof}Choose $U$ as a subset of $\F_q$ with its size $|U|=n$ so that $\omega=\frac{dx}{h},$ where $h(x)=\prod\limits_{\alpha \in U}(x-\alpha),$ satisfying the above two conditions. Then the divisor $(\omega)=(2g-2+n)P_\infty-D$, and thus $2G\le (\omega)+D$. From Lemma \ref{lem:char1} and Theorem \ref{thm:distance}, there exists a self-orthogonal code $C_{\cal L}(D,G)$ with parameters $[n,\frac{n-1}{2},\ge \frac{n+3}{2}-g]$. The second assertion follows from Lemma \ref{lem:embedding2-odd}. 
First note that
\begin{equation}
\begin{array}{ll}
C_{\cal L}(D,G)^\perp&=a\cdot C_{\cal L}(D,D-G+(\omega))\text{ (from Lemma \ref{lem:dual2})}\\
&=a\cdot C_{\cal L}\left(D,\frac{(2g-1+n)}{2}P_\infty\right)\\
\end{array}
\label{eq:dual}
\end{equation}
We now calculate the lower bound on the minimum distance of the dual code.
$$
\begin{array}{ll}
d(C_{\cal L}(D,G)^\perp)&\ge n-\frac{(2g-1+n)}{2}~ (\text{due to  }(\ref{eq:dual})\text{ and Theorem}~\ref{thm:distance})\\
&=\frac{n+1}{2}-g.
\end{array}
$$
The minumum distance of $C'$ follows from the fact that $C'\subset C^\perp_{\cal L}(D,G)$, and this completes the proof.
\end{proof}

\begin{lem}\label{lem:embedding2-even-g} Let $\cal X$ be a smooth projective curve having genus $g.$ Let $n$ be an even positive integer and $D=P_1+\cdots+P_n$ be a divisor on $\cal X$. Assume that there exists a differential form $\omega $ satisfying

\begin{enumerate}
\item $v_{P_i} (\omega) = -1,$ for $i = 1,\hdots, n$ and
\item$ \text{Res}_{P_i} (\omega)  = \text{Res}_{P_j} (\omega) )=a_i^2$ with $a_i\in \F_q^*$ for $1 \le i, j \le n.$
\end{enumerate}
If $G=\frac{(2g-2+n)}{2}P_\infty$ with $\text{supp}(G)\cap \text{supp}(D)=\emptyset ,$ then there exists a self-orthogonal code $C_{\cal L}(D,G)$ with parameters $[n,\frac{n}{2},\frac{n}{2}+1-g]$.

\end{lem}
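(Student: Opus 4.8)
The plan is to follow the blueprint of the odd-length case (Lemma~\ref{lem:embedding2-odd-g}) almost verbatim, the only genuine change being an arithmetic one: the chosen divisor $G$ now turns the inequality $2G\le(\omega)+D$ into an \emph{equality}. First I would record the differential form $\omega$ supplied by the hypotheses in a normalized shape, exactly as in the rational case where one takes $\omega=\frac{dx}{h}$ with $h(x)=\prod_{\alpha\in U}(x-\alpha)$ for a suitable $U\subset\F_q$ of size $n$. The purpose of this step is only to pin down the divisor of $\omega$: condition~(1) forces a simple pole at each $P_i$, contributing $-D$, and since $(\omega)$ is a canonical divisor it has degree $2g-2$, while its remaining support is concentrated at a single rational place $P_\infty$. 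Hence
\begin{equation}
(\omega)=(2g-2+n)P_\infty-D.
\label{eq:omega-even}
\end{equation}

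Next I would compare $2G$ with $(\omega)+D$. Because $n$ is even, $G=\frac{2g-2+n}{2}P_\infty$ has integral coefficient, and \eqref{eq:omega-even} gives
$$2G=(2g-2+n)P_\infty=(\omega)+D.$$
This is the crucial difference from the odd case: there $2G$ fell one unit of $P_\infty$ short of $(\omega)+D$, whereas here the two divisors coincide. In particular $2G\le(\omega)+D$, so condition~(2)---that each residue $\text{Res}_{P_i}(\omega)=a_i^2$ is a square---lets me invoke Lemma~\ref{lem:char1} to conclude that $C_{\cal L}(D,G)$ is equivalent to a self-orthogonal code; in fact, since we have equality, part~2 of that lemma even yields self-duality.

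Finally I would read off the parameters from Theorem~\ref{thm:distance}. Here $\deg(G)=g-1+\frac{n}{2}$, so the hypothesis $2g-2<\deg(G)<n$ of that theorem amounts to $n>2g-2$, which I would carry along as a standing assumption. Granting it, the dimension is $k=\deg(G)-g+1=\frac{n}{2}$ and the distance satisfies $d\ge n-\deg(G)=\frac{n}{2}+1-g$, giving the stated parameters $[n,\frac n2,\frac n2+1-g]$. As a consistency check, a self-orthogonal code of length $n$ and dimension $\frac n2$ is forced to be self-dual, matching the equality observed above.

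The step I expect to be the real obstacle is establishing \eqref{eq:omega-even} in the stated form on an arbitrary genus-$g$ curve: one must know that a differential with simple poles and square residues at the $P_i$ exists whose remaining zeros pile up at a single rational point $P_\infty$ to the exact order $2g-2+n$. For $g=0$ this is the explicit $\frac{dx}{h}$ computation; in general it is precisely the content of the hypothesis that such an $\omega$ is given, so the real work---deferred to each application of the lemma---is verifying curve by curve that the normalization \eqref{eq:omega-even} is compatible with conditions~(1) and~(2).
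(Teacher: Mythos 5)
Your proposal is correct and follows essentially the same route as the paper: the paper's own proof is a one-line reference back to Lemma~\ref{lem:embedding2-odd-g}, whose argument you reproduce faithfully (normalize $\omega$ so that $(\omega)=(2g-2+n)P_\infty-D$, note $2G=(\omega)+D$, apply Lemma~\ref{lem:char1} and Theorem~\ref{thm:distance} to read off $[n,\frac{n}{2},\ge\frac{n}{2}+1-g]$). Your added observations --- that equality in $2G=(\omega)+D$ actually yields self-duality, and that the hypothesis implicitly requires the non-polar part of $(\omega)$ to concentrate at $P_\infty$ --- are both accurate and are glossed over in the paper as well.
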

\begin{proof}The result follows from the same reasoning as that in Lemma \ref{lem:embedding2-odd-g}.
\end{proof}

\subsection{Self-dual codes from projective lines}
In this subsection, we will discover new families of MDS self-dual codes based on the work from \cite{Sok}. In what follows, we let for $a\in \F_q$, $\eta(a):=1$ if $a$ is a square in $\F_q$, and $\eta(a):=-1$ if $a$ is not a square in $\F_q.$

The following two lemmas \cite{Sok} will be used to construct self-dual codes of genus zero.
\begin{lem}\cite[Lemma 6]{Sok} \label{lem:new-orthogonal} For $G=sP_\infty$ with $s\le \lfloor \frac{n-2}{2}\rfloor $, if $(h'(P_i))_{1\le i \le n}$ are squares in $\F_q^*$, then $C_{\cal L}(D,G-(1/\sqrt{h'}))$ is an MDS self-orthogonal code.
\end{lem}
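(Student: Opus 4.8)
The plan is to exhibit $C_{\cal L}(D,G-(1/\sqrt{h'}))$ as a diagonal twist of a Reed--Solomon code, extract self-orthogonality from Lemma~\ref{lem:char1}, and read off the MDS property from Theorem~\ref{thm:distance} together with the Singleton bound. I identify $\mathcal X=\mathbb P^1$, each $P_i$ with a point $\alpha_i\in\F_q$, and $P_\infty$ with the point at infinity, and I set $h(x)=\prod_{i=1}^n(x-\alpha_i)$, so that $h'(\alpha_i)=\prod_{j\ne i}(\alpha_i-\alpha_j)\ne0$. Then $C_{\cal L}(D,sP_\infty)$ is the evaluation of the polynomials of degree $\le s$, i.e.\ a Reed--Solomon code of dimension $s+1$, and the working differential is $\omega=dx/h$.

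First I would compute, on $\mathbb P^1$, that $(dx)=-2P_\infty$ and $(1/h)=nP_\infty-D$, whence $(\omega)=(n-2)P_\infty-D$ and $D+(\omega)=(n-2)P_\infty$. Since each $\alpha_i$ is a simple zero of $h$, the differential $\omega$ has $v_{P_i}(\omega)=-1$ and residue $\mathrm{Res}_{P_i}(\omega)=1/h'(\alpha_i)$. The hypothesis that every $h'(\alpha_i)$ is a square in $\F_q^*$ is exactly what makes each residue a nonzero square, say $\mathrm{Res}_{P_i}(\omega)=a_i^2$ with $a_i=1/\sqrt{h'(\alpha_i)}\in\F_q^*$; this supplies the residue condition of Lemma~\ref{lem:char1}.

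Next, $2G=2sP_\infty\le(n-2)P_\infty=D+(\omega)$ holds precisely when $s\le(n-2)/2$, i.e.\ $s\le\lfloor(n-2)/2\rfloor$, the standing hypothesis; part~(1) of Lemma~\ref{lem:char1} then yields a divisor $G'$ with $C_{\cal L}(D,G)\sim C_{\cal L}(D,G')$ and $C_{\cal L}(D,G')$ self-orthogonal. To pin down $G'$ I would argue directly: by Lemma~\ref{lem:dual1} together with the residue description of $C_\Omega(D,G)$, the dual is $C_{\cal L}(D,G)^\perp=\mathrm{diag}(a_i^2)\,C_{\cal L}(D,D-G+(\omega))$. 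Writing $\tilde C:=\mathrm{diag}(a_i)\,C_{\cal L}(D,G)$ for the twist by the square roots $a_i=1/\sqrt{h'(\alpha_i)}$, a short computation gives $\tilde C^\perp=\mathrm{diag}(a_i)\,C_{\cal L}(D,D-G+(\omega))$; since $2G\le D+(\omega)$ yields $\mathcal L(G)\subseteq\mathcal L(D-G+(\omega))$, we obtain $\tilde C\subseteq\tilde C^\perp$. As $\tilde C$ is exactly the evaluation code of the divisor $G-(1/\sqrt{h'})$, this identifies $G'=G-(1/\sqrt{h'})$ and shows $C_{\cal L}(D,G-(1/\sqrt{h'}))$ is self-orthogonal.

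For the MDS claim I would use that principal divisors have degree $0$, so $\deg G'=\deg(sP_\infty)=s$; with $g=0$ the range $2g-2=-2<s<n$ holds, and Theorem~\ref{thm:distance} gives $k=s+1$ and $d\ge n-s$, while the Singleton bound forces $d\le n-(s+1)+1=n-s$, hence $d=n-s=n-k+1$ and the code is MDS (equivalently, diagonal equivalence preserves the MDS Reed--Solomon weights). The hard part is twofold. First, I must give $1/\sqrt{h'}$ and the divisor $G-(1/\sqrt{h'})$ an honest meaning, since only the pointwise roots $\sqrt{h'(\alpha_i)}$ are available; I would take $u$ to be \emph{any} rational function with $u(\alpha_i)=\sqrt{h'(\alpha_i)}$ and $\mathrm{supp}((u))\cap\mathrm{supp}(D)=\emptyset$ and set $G':=G-(u)$, noting that the resulting code depends only on the values $u(\alpha_i)$, so the choice of $u$ is immaterial. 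Second, the differential $\omega=dx/h$ need not have equal residues, so Lemma~\ref{lem:dual2} does not apply verbatim; the essential point is that \emph{squareness} of the residues (not their equality) is what lets the multiplier be absorbed symmetrically into the $\mathrm{diag}(a_i)$ twist, which is precisely the mechanism behind Lemma~\ref{lem:char1}.
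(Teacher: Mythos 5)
This paper does not actually prove the statement---it is imported verbatim from \cite[Lemma 6]{Sok}---but your reconstruction is correct and follows exactly the mechanism of that source and of the machinery this paper uses elsewhere: twist by square roots of $\mathrm{Res}_{P_i}(\omega)$ for $\omega=dx/h$, the generalized residue description $C_{\Omega}(D,G)=\mathrm{diag}(\mathrm{Res}_{P_i}(\omega))\,C_{\cal L}(D,D-G+(\omega))$ (your observation that Lemma \ref{lem:dual2} does not apply verbatim because the residues are unequal is accurate and is precisely the point), and MDS from Theorem \ref{thm:distance} plus the Singleton bound. One slip to fix in your final paragraph: the interpolating function $u$ must take the values $1/\sqrt{h'(\alpha_i)}$, not $\sqrt{h'(\alpha_i)}$, since $C_{\cal L}(D,G-(u))=\mathrm{diag}(u(\alpha_i))\,C_{\cal L}(D,G)$; with your stated choice the code is twisted by $\sqrt{h'(\alpha_i)}$, and $\sum_i h'(\alpha_i)f(\alpha_i)g(\alpha_i)$ has no reason to vanish, so that code is generally \emph{not} self-orthogonal---the direction of the twist matters, as your own main computation (which consistently uses $a_i=1/\sqrt{h'(\alpha_i)}$) makes clear.
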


The following lemma is useful for constructing a self-dual code from its self-orthogonal subcode.
\begin{lem}\cite[Lemma 7]{Sok} \label{lem:embedding1} Let $q\equiv 1\pmod 4$. Assume that $G=(k-1)P_\infty,n=2k+1$, and $(h'(P_i))_{1\le i \le n}$ are squares in $\F_q^*.$  Then the $q$-ary self-orthogonal code $C_{\cal L}(D,G-({1}/\sqrt{h'}))$ with parameters $[n,k]$ can be embedded into a $q$-ary MDS self-dual $[n+1,k+1]$ code.
\end{lem}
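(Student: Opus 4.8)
The plan is to realize $C:=C_{\cal L}(D,G-(1/\sqrt{h'}))$ explicitly as a generalized Reed--Solomon (GRS) code, append one zero coordinate, and then exhibit an \emph{explicit} isotropic vector in the resulting two-dimensional quotient space; the existence of that vector is exactly what the hypothesis $q\equiv 1\pmod 4$ buys us. Writing $U=\{\alpha_1,\dots,\alpha_n\}$, $h(x)=\prod_{\alpha\in U}(x-\alpha)$, and $v_i=1/\sqrt{h'(\alpha_i)}$ (well defined, since each $h'(\alpha_i)$ is a nonzero square), subtracting the principal divisor $(1/\sqrt{h'})$ scales evaluation by $v_i$, so
$$C=\Big\{\big(v_1 f(\alpha_1),\dots,v_n f(\alpha_n)\big):\deg f\le k-1\Big\}.$$
Since $s=k-1=\lfloor (n-2)/2\rfloor$ for $n=2k+1$, Lemma \ref{lem:new-orthogonal} gives that $C$ is MDS self-orthogonal with parameters $[n,k]$. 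By GRS duality the dual multipliers are $v_i^\perp=(v_i h'(\alpha_i))^{-1}=v_i$, so $C^\perp=\{(v_i g(\alpha_i))_i:\deg g\le k\}$, and $C\subset C^\perp$ with codimension one.

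Next I would append a zero coordinate to obtain the self-orthogonal code $C_0=[n+1,k]$, whose dual is $C_0^\perp=C^\perp\times\F_q$ of dimension $k+2$; thus $C_0^\perp/C_0$ is two-dimensional and carries the nondegenerate form induced by the Euclidean product. As a basis of the quotient I take the classes of $(\w,0)$, where $\w=(v_i\alpha_i^k)_i$ represents the nonzero coset of $C$ in $C^\perp$, and of the last unit vector $e_{n+1}$. The only nonroutine computation is the self-pairing $\w\cdot\w=\sum_i \alpha_i^{2k}/h'(\alpha_i)$: interpreting $\sum_i \alpha_i^m/h'(\alpha_i)$ as the coefficient of $x^{n-1}$ in the degree-$\le n-1$ Lagrange interpolant of $x^m$, it equals $0$ for $0\le m\le n-2$ and $1$ for $m=n-1$, and here $2k=n-1$, so $\w\cdot\w=1$. (The same identity, applied with $m\le n-2$, reconfirms $\w\in C^\perp$ and the self-orthogonality of $C$.) Hence the Gram matrix of $C_0^\perp/C_0$ in this basis is $\mathrm{diag}(1,1)$, and the quotient contains an isotropic vector if and only if $-1$ is a square in $\F_q$, i.e.\ iff $q\equiv 1\pmod 4$. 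Choosing $i_0\in\F_q$ with $i_0^2=-1$, the vector $\x=i_0(\w,0)+e_{n+1}$ satisfies $\x\cdot\x=i_0^2+1=0$, so by Lemma \ref{lem:embedding2-odd} the code $C'=C_0+\langle\x\rangle$ is self-dual with parameters $[n+1,k+1]$ and $C_0\subset C'\subset C_0^\perp$, embedding $C$ as required.

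Finally I would verify that $C'$ is MDS. Unwinding the construction, a codeword of $C'$ has first $n$ coordinates $(v_i g(\alpha_i))_i$ with $\deg g\le k$ and last coordinate equal to $i_0^{-1}$ times the $x^k$-coefficient of $g$; that is, $C'$ is a (column-scaled) extended GRS code, hence MDS of minimum distance $(n+1)-(k+1)+1=k+2$. A direct degree count gives the same conclusion: if the last coordinate is nonzero then $\deg g=k$, so $g$ has at most $k$ roots in $U$ and the codeword has weight at least $(n-k)+1=k+2$; if the last coordinate is zero then $\deg g\le k-1$ and the weight is at least $n-(k-1)=k+2$. Thus $d(C')\ge k+2$, which together with the Singleton bound forces equality. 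I expect the main obstacle to be the self-pairing computation together with the recognition that it collapses the isotropy condition to ``$-1$ is a square''; once $\w\cdot\w=1$ is in hand, both the role of the hypothesis $q\equiv1\pmod4$ and the MDS property fall out cleanly.
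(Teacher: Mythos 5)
Your proof is correct, and it is worth noting that the paper itself never proves this statement: it is imported verbatim as \cite[Lemma 7]{Sok}, and the only in-paper machinery in its vicinity is Lemma~\ref{lem:embedding2-odd}, which performs the same zero-padding and adjoins a vector $\x$ with $\x\cdot\x=0$ but simply \emph{posits} the existence of such an isotropic vector and says nothing about the minimum distance of the resulting code. Your argument supplies exactly the two ingredients that the generic embedding lemma leaves open. First, by realizing $C$ as a GRS code with multipliers $v_i=1/\sqrt{h'(\alpha_i)}$ (which are self-dual multipliers, $v_i^\perp=(v_ih'(\alpha_i))^{-1}=v_i$) and invoking the Lagrange/Euler identity $\sum_i\alpha_i^m/h'(\alpha_i)=0$ for $m\le n-2$ and $=1$ for $m=n-1$, you get the Gram matrix $\mathrm{diag}(1,1)$ on the two-dimensional quotient $C_0^\perp/C_0$, which pins down precisely where the hypothesis $q\equiv 1\pmod 4$ enters: an isotropic vector exists iff $-1$ is a square, and then $\x=i_0(\w,0)+e_{n+1}$ is explicit. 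Second, you identify $C'$ as a scaled extended GRS code (last coordinate proportional to the $x^k$-coefficient of $g$) and verify $d=k+2$ by a degree count, so the MDS claim is proved rather than asserted. The computations all check out (including the codimension count $\dim C_0^\perp/C_0=2$ and the consistency check that the same identity with $m\le n-2$ re-proves self-orthogonality), so what you have is a complete, self-contained proof of a statement that this paper only cites; methodologically it follows the same skeleton as Lemma~\ref{lem:embedding2-odd} but fills in the isotropy and distance arguments that the paper outsources.
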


Now, we construct MDS self-dual codes from Lemma \ref{lem:new-orthogonal} and Lemma \ref{lem:embedding1}.

\begin{thm}$\label{thm:6}\text{ Let }q=p^m$ be an odd prime power.
If  $\eta(-1)=\eta(n)=1$, $n|(q-1)$ and $n$ even, then there exists an $[2n+2,n+1,n+2]\text{ self-dual code over $\F_{q}$}.$
\end{thm}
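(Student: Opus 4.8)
The plan is to realize the desired code as the self-dual embedding of a genus-zero MDS self-orthogonal code of length $2n+1$, exactly along the lines of Lemma \ref{lem:new-orthogonal} and Lemma \ref{lem:embedding1}. Concretely, I would look for a code of the form $C_{\cal L}(D,G-(1/\sqrt{h'}))$ with $G=(n-1)P_\infty$ on the projective line, whose evaluation set $U\subset\F_q$ has size $2n+1$ and satisfies that $h'(\alpha)$ is a square in $\F_q^*$ for every $\alpha\in U$, where $h(x)=\prod_{\alpha\in U}(x-\alpha)$. Since $n-1=\lfloor\frac{(2n+1)-2}{2}\rfloor$, Lemma \ref{lem:new-orthogonal} then yields an MDS self-orthogonal $[2n+1,n,n+2]$ code, and because $\eta(-1)=1$ gives $q\equiv 1\pmod 4$, Lemma \ref{lem:embedding1} (with its $k$ equal to $n$) embeds it into an MDS self-dual code of length $2n+2$, dimension $n+1$, and minimum distance $(2n+2)-(n+1)+1=n+2$, which is exactly the claimed code.

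The remaining task is to produce the evaluation set $U$. The natural candidate uses the group $\mu_n=\{\zeta\in\F_q^*:\zeta^n=1\}$, which exists because $n\mid(q-1)$: take $U=\{0\}\cup a\mu_n\cup b\mu_n$ for nonzero $a,b$ with $a^n\neq b^n$, so that $|U|=2n+1$ and $h(x)=x(x^n-a^n)(x^n-b^n)$. A short differentiation shows that $h'$ is constant on each coset, namely $h'(0)=a^nb^n$, $h'(a\zeta)=na^n(a^n-b^n)$ and $h'(b\zeta)=nb^n(b^n-a^n)$. Because $n$ is even, $a^n$ and $b^n$ are automatically squares, so $h'(0)$ is a square; and since $\eta(n)=\eta(-1)=1$, the two remaining families of values are squares precisely when $a^n-b^n$ is a nonzero square. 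Thus the entire hypothesis on $U$ collapses to the single arithmetic requirement that there exist two distinct $n$-th powers in $\F_q^*$ whose difference is a square.

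I expect this last point to be the main obstacle: for a fixed pair $(q,n)$ the differences of distinct $n$-th powers need not contain a square, and a single pair of cosets can genuinely fail, so one cannot simply pick $a,b$ by inspection. I would attack it by expressing the number of admissible $n$-th powers $c$ (those with $1-c$ a nonzero square) through the quadratic character $\eta$ together with the characters of order dividing $n$, reducing the count to Jacobi sums and bounding the error term by the Weil estimate; this guarantees existence as soon as $(q-1)/n$ dominates $\sqrt q$. To obtain the conclusion uniformly rather than only asymptotically, I would instead invoke the explicit coset construction (Lemma \ref{lem:alpha_ij}), which is designed precisely to exhibit the required $a,b$ — or, more flexibly, an evaluation set of size $2n+1$ with all $h'(\alpha)$ square — directly from the arithmetic of $\F_q$. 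Once such a $U$ is in hand, the two embedding lemmas finish the proof with no further work.
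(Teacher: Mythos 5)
Your overall architecture is exactly the paper's: the paper also takes $U=U_n\cup\beta_1U_n\cup\{0\}$ (i.e.\ your $a=1$, $b=\beta_1$), computes $h'$ coset by coset, uses $n$ even and $\eta(n)=\eta(-1)=1$ to reduce everything to the single requirement that $\beta_1^n-1$ be a nonzero square, and then passes from the $[2n+1,n,n+2]$ self-orthogonal code to the $[2n+2,n+1,n+2]$ self-dual code via Lemma~\ref{lem:new-orthogonal} and Lemma~\ref{lem:embedding1}. Up to that point your proposal and the paper's proof coincide, including the derivative computation (the paper's expanded formula for $h'$ is your coset formula with $a=1$).

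The gap is in how you discharge that last arithmetic requirement, and neither of your two suggestions closes it. The Jacobi-sum/Weil count gives existence only when $(q-1)/n$ dominates $\sqrt q$, as you note; but the theorem allows $n$ as large as $\frac{q-1}{2}$, in which case the group of $n$-th powers is just $\{\pm 1\}$ and a character-sum estimate says nothing. Lemma~\ref{lem:alpha_ij} is not a general-purpose substitute: it is stated, and only valid, for $n=\frac{q-1}{p^r+1}$ (and Lemma~\ref{lem:alpha_ij_2} for $n=\frac{q-1}{p^r-1}$), so it does not apply to an arbitrary even $n\mid(q-1)$ with $\eta(n)=1$. What the paper actually does at this point is an explicit trick your proposal is missing: it parametrizes by the conic $a^2+b^2=1$, which (since $\eta(-1)=1$) has $(q-1)-4$ solutions with $ab\neq 0$; picking such a solution $(a_i,b_i)$ and taking $\beta_1$ with $\beta_1^n=a_i^2$ forces $1-\beta_1^n=b_i^2$ to be a nonzero square with no counting at all, uniformly in $n$ and $q$. (Even the paper is somewhat cavalier here, since it implicitly assumes some $a_i^2$ is an $n$-th power so that $\beta_1=\sqrt[n]{a_i^2}$ exists in $\F_q$; still, this conic device is the idea the proof rests on.) So your reduction is correct and matches the paper, but the existence step --- which is precisely the new content of the theorem beyond the lemmas imported from \cite{Sok} --- is left unproven in your proposal.
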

\begin{proof} Let $U_{n}=\{\alpha\in \F_q^*| \alpha^{n}=1\}$. Let $\beta_1\in \F_q^*$ such that $\beta_1^{n}-1$ is a nonzero square in $\F_q$. Put $U=U_{n}\cup \beta_1U_{n}\cup\{0\}$, and write
$$h(x)=\prod\limits_{\beta\in U}(x-\beta).$$
Then we have that $$h'(x)=((n+1)x^{n}-1)(x^{n}-\beta_1^{n})+nx^{n}(x^{n}-1).$$

Consider the following quadratic equation 
\begin{equation}\label{eq:quadratic}
a^2+b^2=1.
\end{equation}
For any $q$,  (\ref{eq:quadratic}) has $T=(q-1)-4$ solutions, say $(a_1,\pm b_1),\hdots,(a_{\frac{T}{2}},\pm b_{\frac{T}{2}})$, with $(a_i,b_i)\not=(0,\pm 1),(\pm 1,0)$. Take $\beta_1=\sqrt[n]{a_i^2}$ for some $1\le i\le t,(t< {\frac{T}{2}}).$ Then we get $1-\beta_1^n=1-a_i^2=b_i^2$  which are squares in $\F_q^*.$

We have that $-1, n$ are squares in $\F_q$. Moreover, since $\beta_1^{n}$ and $(\beta_1^{n}-1)$ are squares in $\F_q^*$, it implies that $h'(\beta)$ is a square in $\F_q^*$ for any $\beta \in U.$ Now, the fact that all the roots of $h(x)$ are simple gives rise to a self-orthogonal code with parameters $[2n+1,n,n+2]$. 
 Thus, by Lemma \ref{lem:embedding1}, it can be embedded into a $q$-ary self-dual code with parameters $[2n+2,n+1,n+2].$
\end{proof}
\begin{exam} We construct MDS self-dual codes with new parameters as follows.
\begin{enumerate} 
\item Taking $q=37,n=12,$ we obtain a self-dual code  over $\F_{37}$ with parameters $[26,13,14].$
\item Taking $q=61,n=12,20$ we obtain self-dual code over $\F_{61}$ with parameters $[26,13,14],[42,21,22]$, respectively.
\item Taking $q=73,n=24,$ we obtain a self-dual code over $\F_{73}$ with parameters $[50,25,26].$
\end{enumerate}
\end{exam}

\begin{rem} In the proof of Theorem \ref{thm:6}, we have found many values of $\beta_i$ such that $1-\beta_i^n$ is a square. Furthermore, if there exist $\beta_1$ and $\beta_2$ such that $\beta_1^n-\beta_2^n$ is again a square, then we can construct an MDS self-dual code of length $3n+2$ over $\F_q.$ For example, taking $q=41,n=10$ and considering two non-zero multiplicative cosets of $U_n$ yields a self-dual code over $\F_{41}$ with parameters $[32,16,17].$ The generator matrix of the self-dual code over $\F_{41}$ is given as follows.

{\tiny
$$
\left(
\begin{array}{ccccccccccccccccc}
&20&15&20&11&6&11&22&10&39&15&5&9&20&31&33&40\\
&37&5&11&12&3&15&4&10&37&18&35&8&3&26&32&8\\
&12&29&9&14&13&11&8&30&16&20&17&22&3&9&13&10\\
&2&17&11&25&14&3&1&27&38&5&1&15&36&2&1&21\\
&27&21&21&13&20&7&36&15&29&30&25&20&1&11&2&32\\
&39&7&2&1&26&25&5&38&38&13&33&20&17&15&7&36\\
&40&39&34&15&18&12&6&28&25&10&21&23&8&35&26&26\\
{I_{16}}&30&36&28&2&1&11&12&28&2&27&34&35&4&4&20&2\\
&22&18&5&24&5&40&23&9&34&40&12&34&9&34&33&31\\
&20&13&9&12&31&35&37&33&26&37&23&39&29&18&25&19\\
&11&19&18&16&38&40&2&29&8&30&30&10&12&2&20&30\\
&34&13&10&13&18&28&19&14&28&31&4&34&24&9&31&35\\
&24&31&21&40&12&23&25&4&17&27&13&4&31&40&23&30\\
&40&31&36&35&28&38&21&31&14&20&16&36&20&37&34&21\\
&9&10&23&11&36&23&30&9&16&22&27&32&37&26&39&26\\
&36&4&32&32&4&4&10&14&12&14&20&30&29&34&8&21\\
\end{array}
\right).
$$
}
\end{rem}
The two following lemmas play the key role in determining whether the difference of two special elements in $\F_q$ is a square or not and also in determining the number of cosets of a multiplicate subgroup of $\F^*_q.$
\begin{lem} \label{lem:alpha_ij}Let $q=p^m$ with $p$ an odd prime, $n=\frac{q-1}{p^r+1}$ and for $\alpha_i,\alpha_j\in \F_q$ with $\alpha_i\not =\alpha_j$, denote $\alpha_{ij}=\alpha_i^n-\alpha_j^n.$ Then for $\omega$ a primitive element of $\F_q$, we have the following equality:
\begin{equation}\label{eq:alpha_ij}
\alpha_{ij}=\frac{\omega^{\frac{n(p^r+1)}{2(p^r-1)}}}{\alpha_i\alpha_j}.
\end{equation}
\end{lem}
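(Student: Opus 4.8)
The plan is to exploit the defining relation $n(p^r+1)=q-1$, which makes every $\alpha_i^n$ a $(p^r+1)$-th root of unity, and then to push the map $x\mapsto x^{p^r}$ through the difference $\alpha_{ij}=\alpha_i^n-\alpha_j^n$. First I would record that for nonzero $\alpha_i$ one has $(\alpha_i^n)^{p^r+1}=\alpha_i^{n(p^r+1)}=\alpha_i^{q-1}=1$, so that $\beta_i:=\alpha_i^n$ satisfies $\beta_i^{p^r}=\beta_i^{-1}$; that is, each $\beta_i$ is a norm-one element for the quadratic extension $\F_{p^{2r}}/\F_{p^r}$, and the $p^r$-power map acts on it simply as inversion. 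I would also note at the outset that the identity can only be intended for $\alpha_i,\alpha_j$ lying in distinct cosets of the subgroup $U_n=\{x\in\F_q^*:x^n=1\}$, since otherwise $\beta_i=\beta_j$ and the left-hand side vanishes while the right-hand side does not.

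Next I would use that $x\mapsto x^{p^r}$ is $\F_p$-linear (a power of the Frobenius) to compute
\[
\alpha_{ij}^{p^r}=(\beta_i-\beta_j)^{p^r}=\beta_i^{p^r}-\beta_j^{p^r}=\beta_i^{-1}-\beta_j^{-1}=\frac{\beta_j-\beta_i}{\beta_i\beta_j}=-\frac{\alpha_{ij}}{\beta_i\beta_j}.
\]
Dividing by the now-nonzero quantity $\alpha_{ij}$ and using $\beta_i\beta_j=(\alpha_i\alpha_j)^n$ collapses everything to the clean relation
\[
\alpha_{ij}^{\,p^r-1}=-\frac{1}{\beta_i\beta_j}=-(\alpha_i\alpha_j)^{-n}.
\]
This is the heart of the argument: all the information about the two elements has been reduced to a single power relation.

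Finally I would solve this for $\alpha_{ij}$. Writing $-1=\omega^{(q-1)/2}=\omega^{n(p^r+1)/2}$, the relation becomes $\alpha_{ij}^{\,p^r-1}=\omega^{n(p^r+1)/2}(\alpha_i\alpha_j)^{-n}$; extracting a $(p^r-1)$-th root and moving the factor $(\alpha_i\alpha_j)^{-1}$ to the denominator produces exactly
\[
\alpha_{ij}=\frac{\omega^{\frac{n(p^r+1)}{2(p^r-1)}}}{\alpha_i\alpha_j}.
\]
Two issues need attention here, and the second is what I expect to be the main obstacle. First, one must check that the exponent is an integer, i.e. that $2(p^r-1)\mid n(p^r+1)=q-1$; this is precisely where the standing hypotheses on $q$ and $r$ (with $q$ a square and $p^r+1\mid q-1$) are used. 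Second, and more delicately, a $(p^r-1)$-th root is a priori determined only up to a $(p^r-1)$-th root of unity, so the real work is to justify that the specific root written on the right-hand side is the correct one. I would expect to settle this by tracking the discrete logarithms of $\alpha_i,\alpha_j$ relative to $\omega$ (equivalently, their cosets modulo $U_n$) and checking that the congruence $(p^r-1)\log_\omega\alpha_{ij}\equiv \tfrac{q-1}{2}-n\log_\omega(\alpha_i\alpha_j)\pmod{q-1}$ pins down the root uniquely; carrying out this root-selection bookkeeping is the step most likely to demand care.
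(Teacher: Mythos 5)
Your proposal follows the paper's own proof essentially step for step: both arguments raise $\alpha_{ij}$ to the power $p^r-1$, use the Frobenius together with $np^r \equiv -n \pmod{q-1}$ (equivalently, your observation $\beta_i^{p^r}=\beta_i^{-1}$) to obtain $\alpha_{ij}^{p^r-1} = -(\alpha_i\alpha_j)^{-n}$, rewrite $-1=\omega^{\frac{n(p^r+1)}{2}}$, and finish by extracting a $(p^r-1)$-th root. The two caveats you flag at the end — integrality of the exponent and, more seriously, which $(p^r-1)$-th root is actually being taken — are precisely the points the paper itself glosses over (it simply asserts ``by taking the $(p^r-1)$-th root, the result follows''), so your write-up is no less complete than the published proof.
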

\begin{proof}Raising $\alpha_{ij}$ to the power $p^r-1,$ we get
$$
\begin{array}{ll}
\alpha_{ij}^{p^r-1}
&=\frac{\left(\alpha_i^n-\alpha_j^n\right)^{p^r}}{\alpha_i^n-\alpha_j^n}
=\frac{(\alpha_i^{n{p^r}}-\alpha_j^{np^r})}{\alpha_i^n-\alpha_j^n}\\
&=\frac{(\alpha_i^{q-1-n}-\alpha_j^{q-1-n})}{\alpha_i^n-\alpha_j^n}
=\frac{\frac{1}{\alpha_i^n}-\frac{1}{\alpha_j^n}}{\alpha_i^n-\alpha_j^n}\\
&=\frac{\omega^{\frac{n(p^r+1)}{2}}}{\alpha_i^n\alpha_j^n},
\end{array}
$$
where the last equality come from the fact that $\omega^{\frac{q-1}{2}}=-1.$

By taking the $(p^r-1)$-th root, the result follows.
\end{proof}

\begin{lem}\label{lem:alpha_ij_2} Let $q=p^m$ with $p$ an odd prime, $r|m$, $n=\frac{q-1}{p^r-1}$ and for $\alpha_i,\alpha_j\in \F_q$ with $\alpha_i\not =\alpha_j$, denote $\alpha_{ij}=\alpha_i^n-\alpha_j^n.$ Then for $\omega$ a primitive element of $\F_q$, we have the following:
\begin{equation}\label{eq:alpha_ij_2}
\alpha_{ij} \in \F_{p^r}.
\end{equation}
\end{lem}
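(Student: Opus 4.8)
The plan is to use the standard criterion that, since $r \mid m$, an element $\gamma \in \F_q = \F_{p^m}$ lies in the subfield $\F_{p^r}$ if and only if it is fixed by the $r$-th power of the Frobenius automorphism, that is, $\gamma^{p^r} = \gamma$. Here the hypothesis $r \mid m$ is exactly what guarantees that the fixed field of $x \mapsto x^{p^r}$ inside $\F_{p^m}$ is precisely $\F_{p^r}$, since $\gcd(r,m) = r$. Thus the entire statement reduces to verifying the single identity $\alpha_{ij}^{p^r} = \alpha_{ij}$.

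The key arithmetic observation is the congruence $n p^r \equiv n \pmod{q-1}$. Indeed, by definition $n = \frac{q-1}{p^r-1}$, so $n(p^r - 1) = q - 1$, which rearranges to $n p^r = n + (q-1)$. Consequently, for every $\alpha \in \F_q$ one has $\alpha^{n p^r} = \alpha^n$: this is immediate for $\alpha = 0$, and for $\alpha \in \F_q^*$ it follows from $\alpha^{q-1} = 1$ together with $\alpha^{n p^r} = \alpha^{n}\cdot \alpha^{q-1}$.

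Applying the Frobenius $x \mapsto x^{p^r}$, which is a ring homomorphism in characteristic $p$ and hence commutes with subtraction, I would compute
\[
\alpha_{ij}^{p^r} = \left(\alpha_i^n - \alpha_j^n\right)^{p^r} = \alpha_i^{n p^r} - \alpha_j^{n p^r} = \alpha_i^n - \alpha_j^n = \alpha_{ij},
\]
using the previous step with $\alpha = \alpha_i$ and $\alpha = \alpha_j$. By the criterion above, this shows $\alpha_{ij} \in \F_{p^r}$, as desired.

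There is no serious obstacle here; the argument is a direct Frobenius-fixed-point computation. The only points requiring care are (i) invoking the hypothesis $r \mid m$ at the right moment, so that the fixed field of $x \mapsto x^{p^r}$ is exactly $\F_{p^r}$ and not some larger intermediate field, and (ii) noting that the primitive element $\omega$, although mentioned in the statement, plays no role in this particular conclusion (in contrast to Lemma \ref{lem:alpha_ij}); I would simply leave it unused.
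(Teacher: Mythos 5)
Your proof is correct and follows essentially the same route as the paper's: raise $\alpha_{ij}$ to the power $p^r$, use the Frobenius to distribute over the difference, apply $np^r = (q-1)+n$ to see each term is unchanged, and conclude membership in $\F_{p^r}$ from being fixed by $x\mapsto x^{p^r}$. Your version is in fact slightly more careful than the paper's, since you make explicit where $r\mid m$ is used (to identify the fixed field) and you handle the case $\alpha=0$, both of which the paper leaves implicit.
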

\begin{proof}Raising $\alpha_{ij}$ to the power $p^r,$ we get
$$
\alpha_{ij}^{p^r}={(\alpha_i^{n{p^r}}-\alpha_j^{np^r})}
={(\alpha_i^{q-1+n}-\alpha_j^{q-1+n})}=\alpha_i^n-\alpha_j^n=\alpha_{ij}.
$$
Thus the result follows.
\end{proof}
\begin{thm}\label{thm:new-multicoset} Let $q=p^m$ be an odd square and $n$ even. Put $s=(t+1)n.$

\begin{enumerate}
\item If $\frac{n(p^r+1)}{2(p^r-1)}$ is even, then there exists a self-dual code over $\F_q$ with parameters $[s,\frac{s}{2},\frac{s}{2}+1]$, with $n=\frac{q-1}{p^r+1}$, for $1\le t\le p^r.$
\item If $\frac{n(p^r+1)}{2(p^r-1)}$ is odd, then there exists a self-dual code over $\F_q$ with parameters $[s,\frac{s}{2},\frac{s}{2}+1]$, with $n=\frac{q-1}{p^r+1}$, for $t$ odd and $1\le t\le p^r.$

\item There exists a self-dual code over $\F_q$ with parameters $[s,\frac{s}{2},\frac{s}{2}+1]$, with $n=\frac{q-1}{p^r-1},r|\frac{m}{2}$, for $1\le t\le p^r-2.$
\end{enumerate}

\end{thm}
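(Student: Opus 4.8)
The plan is to realize all three codes as genus-zero (projective line) algebraic geometry codes $C_{\cal L}(D,G)$ whose evaluation points form a union of multiplicative cosets, and to force self-duality through Lemma~\ref{lem:char1}(2) (equivalently Lemma~\ref{lem:new-orthogonal} together with Lemma~\ref{lem:embedding2-even-g}). Concretely, let $\omega$ be a primitive element of $\F_q$, let $U_n=\{\alpha\in\F_q^*:\alpha^n=1\}$, and choose $t+1$ cosets $\omega^{k}U_n$ indexed by a set $S$ of size $t+1$, whose union $U$ has size $(t+1)n=s$. Writing $c_k=(\omega^{k})^n$, the defining polynomial factors as $h(x)=\prod_{\alpha\in U}(x-\alpha)=\prod_{k\in S}(x^n-c_k)$; one takes $\omega_0=dx/h$ and $G=\tfrac{s-2}{2}P_\infty$, so that $(\omega_0)=(s-2)P_\infty-D$ and hence $2G=D+(\omega_0)$. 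By Lemma~\ref{lem:char1}(2) and Theorem~\ref{thm:distance} it then suffices to arrange that the residue $\mathrm{Res}_{P_\alpha}(\omega_0)=1/h'(\alpha)$ is a nonzero square for every $\alpha\in U$, and the resulting code is self-dual with $k=\deg G+1=\tfrac s2$ and $d\ge s-\deg G=\tfrac s2+1$, i.e.\ $[s,\tfrac s2,\tfrac s2+1]$.

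The computational heart is the identity $h'(\alpha)=n\,\alpha^{\,n-1}\prod_{j\in S,\,j\ne k}(c_k-c_j)$ for $\alpha$ in the coset $\omega^{k}U_n$. I would first dispose of the universal factors: since $q$ is an odd square, $q\equiv1\pmod4$, so $-1\in(\F_q^*)^2$; every element of $\F_p$ is a square in $\F_q$ and $n\equiv\pm1\pmod p$, so $n\in(\F_q^*)^2$; and because $(q-1)/n=p^r\pm1$ is even we get $U_n\subset(\F_q^*)^2$, whence $\eta(\alpha^{n-1})=\eta(\omega^{k})^{\,n-1}=(-1)^{k}$ is constant on each coset. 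The square condition therefore collapses to the single character equation $\prod_{j\ne k}\eta(c_k-c_j)=(-1)^{k}$, to be met for every $k\in S$ simultaneously. For cases (1)–(2) I would read off $\eta(c_k-c_j)$ from Lemma~\ref{lem:alpha_ij}, which writes $c_k-c_j=\omega^{E}/(\omega^{k}\omega^{j})$ with $E=\tfrac{n(p^r+1)}{2(p^r-1)}$, so that $\eta(c_k-c_j)=(-1)^{E+k+j}$ and the \emph{parity of $E$} is exactly what separates the two sub-cases. For case (3), Lemma~\ref{lem:alpha_ij_2} shows $c_k-c_j\in\F_{p^r}$, and since $r\mid m/2$ we have $\F_{p^r}\subseteq\F_{\sqrt q}$ with every element of $\F_{\sqrt q}^*$ a square in $\F_q$; thus each difference is automatically a square and the product condition trivializes.

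With these inputs the rest is a parity/counting problem. Substituting $\eta(c_k-c_j)=(-1)^{E+k+j}$ (resp.\ $\eta(c_k-c_j)=1$ in case (3)) reduces the requirement to $tE+tk+\sigma\equiv0\pmod2$ for all $k\in S$, where $\sigma=\sum_{i\in S}i$. When $t$ is even the $k$-dependent term $tk$ vanishes mod $2$ and one only needs to select the $t+1$ cosets so that $\sigma$ has the correct parity, flipping all residues at once by replacing $U$ with a scalar multiple $\delta U$ if necessary; when $t$ is odd one is forced to confine $S$ to a single parity class $k\equiv E\pmod2$, which is precisely why the parity of $E$ dictates whether odd $t$ is admissible. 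Tracking how many cosets such a parity-constrained $S$ can contain then pins down the admissible range of $t$ in each case (with the total number of available cosets being $(q-1)/n=p^r+1$ in (1)–(2) and $p^r-1$ in (3)). Self-duality then forces equality in the Singleton-type bound of Theorem~\ref{thm:distance}, giving the claimed $[s,\tfrac s2,\tfrac s2+1]$ codes.

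The step I expect to be the genuine obstacle is exactly this simultaneity: making $h'(\alpha)$ a square in every one of the $t+1$ cosets at once. Any single coset is trivial to satisfy, but the cosets are coupled through the shared product $\prod_{j\ne k}(c_k-c_j)$, so the admissible $t$ \emph{and} the admissible families $S$ are both governed by one parity invariant (the parity of $E$ in cases (1)–(2)) or by the subfield membership of the differences (case (3)). Organizing this bookkeeping—so that a legitimate choice of $S$, together with a suitable global scalar $\delta$, exists throughout the entire claimed range of $t$—is where the real work of the proof lies.
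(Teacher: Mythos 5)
Your construction is exactly the paper's: the same union of $t+1$ cosets of $U_n$, the same $h$, the same reduction via Lemma~\ref{lem:char1}/Lemma~\ref{lem:new-orthogonal} to making $h'$ a nonzero square at every point of $U$, and the same character input (Lemmas~\ref{lem:alpha_ij} and~\ref{lem:alpha_ij_2}); the global rescaling $U\mapsto\delta U$ is a nice refinement, but the route is identical. The genuine gap is the step you explicitly postpone, ``tracking how many cosets such a parity-constrained $S$ can contain'': when that count is carried out, it does \emph{not} reach the ranges of $t$ in the statement. Your own equation $tE+tk+\sigma\equiv 0\pmod 2$ (for all $k\in S$) shows that for odd $t$ all $t+1$ indices must lie in a single parity class; there are only $(p^r+1)/2$ indices of each parity among the $p^r+1$ cosets in cases (1)--(2), so odd $t$ is capped at $(p^r-1)/2$, far below the claimed $t\le p^r$ of case (2) and of the odd-$t$ part of case (1); in case (3), where the product trivializes, \emph{every} $t$ requires same-parity indices, capping $t+1$ at $(p^r-1)/2$ versus the claimed $t\le p^r-2$. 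This cap is not an artifact of bookkeeping: take $t=p^r$ (odd, claimed admissible in case (2)); then $U=\F_q^*$, $h(x)=x^{q-1}-1$, $h'(\alpha)=-\alpha^{-1}$, and $\eta(h'(\alpha))=\eta(\alpha)$ is non-constant on $\F_q^*$, so no global scalar flip can repair it. Conversely, your equation makes every \emph{even} $t$ admissible whatever the parity of $E$, which case (2) excludes. In short, what your argument proves (and what the paper's proof, which silently restricts to all-square or all-non-square representatives and hits the same $(p^r\pm1)/2$ count, actually proves) is a theorem with different parity conditions and strictly smaller $t$-ranges than the one stated; the mismatch is real -- the paper's own example ($q=81$, $n=8$, $E=5$ odd, $t=2$ even) contradicts its case (2) -- and cannot be closed by finishing your counting step.

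A second defect sits in the character identity itself. You read Lemma~\ref{lem:alpha_ij} as $\eta(c_k-c_j)=(-1)^{E+k+j}$, but a $(p^r-1)$-th root is defined only up to a factor in $\mu_{p^r-1}$, and redoing the computation exactly gives $\bigl[\alpha_{ij}\,(\alpha_i\alpha_j)^{E'}\,\omega^{-E}\bigr]^{p^r-1}=1$ with $E'=n/(p^r-1)=2E/(p^r+1)$; since $\mu_{p^r-1}\cap\F_q^*\subseteq(\F_q^*)^2$ precisely because $E$ is an integer, this yields $\eta(c_k-c_j)=(-1)^{E}(-1)^{(k+j)E'}$. The exponent on $\omega^{k+j}$ is $E'$, not $1$, so your formula (and the paper's) is valid only when $E'$ is odd. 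When $E'$ is even, every difference $c_k-c_j$ has character $(-1)^E$ independently of $k,j$ -- for instance, for $q=81$, $r=1$, $n=20$, all differences of the fourth roots of unity are squares in $\F_{81}$, contradicting $(-1)^{E+k+j}$ -- and the correct requirement degenerates to ``all $k\in S$ of equal parity,'' which again forces the smaller range. So both the parity dichotomy governed by $E$ and the claimed ranges of $t$ need reworking; as it stands, the proposal establishes a genuinely weaker and differently shaped statement, not the theorem as written.
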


\begin{proof} Let $U_n$ be a multiplicative subgroup of $\F_q^*$ of order $n$, say $U_n=\{u_1,\hdots,u_n\}.$ Let $\alpha_1U_n,\hdots,\alpha_tU_n$ be $t$ nonzero cosets of $U_n$, where $(\alpha_i)_{1\le i \le t}$ will be determined later. Put $U=U_n\cup \left(\bigcup\limits_{i=1}^t\alpha_iU_n\right),$ and write
$$h(x)=\prod\limits_{\alpha\in U}(x-\alpha).$$ 
Clearly,  all the roots of $h(x)$ are simple. The derivative of $h(x)$ is given by $$h'(x)=nx^{n-1}\prod\limits_{i=1}^t(x^n-\alpha_i^n)+nx^{n-1}(x^n-1)\left(\sum\limits_{i=1}^t\prod\limits_{j=1,j\not=i}^t(x^n-\alpha_j^n)\right).$$

For $1\le j\le t,1\le s\le n$, we have 
$$
\begin{array}{ll}
h'(u_s)&=nu_s^{n-1}(\alpha_j^n-1)\prod\limits_{i=1,i\not=j}^t(1-\alpha_i^n),\\
h'(\alpha_ju_s)&=n(\alpha_ju_s)^{n-1}(\alpha_j^n-1)\prod\limits_{i=1,i\not=j}^t(\alpha_j^n-\alpha_i^n).\\
\end{array}
$$
For $1\le i,j\le t$ and $n=\frac{q-1}{p^r+1},$ we have from  (\ref{eq:alpha_ij})
$$\alpha_{ij}=\alpha_i^n-\alpha_j^n=\frac{\omega^{\frac{n(p^r+1)}{2(p^r-1)}}}{\alpha_i\alpha_j}, $$
where $\omega$ is a primitive element of $\F_q.$

Fixing $j$ and taking all the product of $\alpha_{ij}$ for $1\le i \le t, i\not= j$, we get that

$$\prod\limits_{i=1,i\not=j}^t(\alpha_j^n-\alpha_i^n)=\prod\limits_{i=1,i\not= j}^n\frac{\omega^{\frac{n(p^r+1)}{2(p^r-1)}}}{\alpha_i\alpha_j}.$$

 Obviously, $n$ and $(u_s)_{1\le s\le n}$ are squares in $\F_q$ for $q$ a square. Now, the squareness of $h'(u_s)$ and $h'(\alpha_ju_s)$ depend on the parity of $T={\frac{n(p^r+1)}{2(p^r-1)}}.$  

If $T$ is even, then $\alpha_i$ is chosen to be a square element in $\F_q$, and thus $(1-\alpha_i^n)$ and $(\alpha_j^n-\alpha_i^n)$ are square elements in $\F_q$ due to  (\ref{eq:alpha_ij}) of Lemma \ref{lem:alpha_ij}.

 If $T$ is odd, then $\alpha_i$ is chosen to be a non-square element in $\F_q$, and thus $(1-\alpha_i^n)$ and $(\alpha_j^n-\alpha_i^n)$ are again square elements in $\F_q$ due to  (\ref{eq:alpha_ij}).

In conclusion, we have

\begin{enumerate}
\item If $\frac{n(p^r+1)}{2(p^r-1)}$ is even, then $h'(u_s)$ and $h'(\alpha_ju_s)$ are squares in $\F_q^*$ for  $1\le s\le n$ and $1\le j\le t$ with $t\in \{1,\hdots,p^r\}$.
\item If $\frac{n(p^r+1)}{2(p^r-1)}$ is odd, then $h'(u_s)$ and $h'(\alpha_ju_s)$ are squares in $\F_q^*$ for  $1\le s\le n$ and $1\le j\le t$
with $t$ odd and $t\in \{1,\hdots,p^r\}$.
\end{enumerate}
For $1\le i,j\le t$ and $n=\frac{q-1}{p^r-1},$ from  (\ref{eq:alpha_ij_2}) in Lemma \ref{lem:alpha_ij_2}, we get that $\alpha_{ij}=\alpha_i^n-\alpha_j^n\in\F_{p^r}$, and hence it is a square if $r|\frac{m}{2}.$
We have shown that $h'(\alpha)$ is a nonzero square in $\F_q$ for any $\alpha\in U,$ and thus the constructed code is  self-dual by Lemma \ref{lem:new-orthogonal}.
\end{proof}
\begin{exam}Taking $q=9^2,n=\frac{9^2-1}{9+1}=8,t=2,$ we get an MDS self-dual with parameters $[24,12,13].$ These parameters are new. The generator matrix of the code is given as follows.

{\tiny
$$
\left(
\begin{array}{lllllllllllll} 
&w^{48 }&w^{20 }&w^{72 }&w^{77 }&w^{19 }&w^{16 }&w^{44 }&w^{10 }&w^{37 }&w^{76 }&w^{47 }&w^{56}\\
&w^{20 }&w^{26 }&w^{73 }&w^{56}&{ 2 }&w^{26 }&w^{60 }&w^{55 }&w^{65 }&w^{47 }&w^{41}&w^{47}\\
&w^{32 }&w^{33 }&w^{34 }&w^{14 }&w^{54 }&w^{2 }&w^{7 }&w^{26 }&w^{65 }&w^{75 }&w^{47 }&w^{76}\\
&w^{37 }&w^{16 }&w^{14 }&w^{16 }&w^{53 }&w^{59 }&w^{53 }&w^{43 }&w^{8 }&w^{25 }&w^{25}&w^{77}\\
&w^{19}&{ 2 }&w^{14 }&w^{13 }&w^{4 }&w^{5 }&w^{57 }&w^{9 }&w^{43 }&w^{66 }&w^{15 }&w^{50}\\
I_{12}&w^{16 }&w^{26 }&w^{42 }&w^{19 }&w^{5}&{ 2 }&w^{9 }&w^{17 }&w^{13 }&w^{7 }&w^{60 }&w^{44}\\
&w^{4 }&w^{20 }&w^{7 }&w^{53 }&w^{17 }&w^{49}&{ 2 }&w^{45 }&w^{19 }&w^{2 }&w^{26 }&w^{16}\\
&w^{10 }&w^{55 }&w^{66 }&w^{3 }&w^{9 }&w^{17 }&w^{5 }&w^{44 }&w^{13 }&w^{54}&{ 2 }&w^{19}\\
&w^{37 }&w^{65 }&w^{25 }&w^{48 }&w^{43 }&w^{13 }&w^{59 }&w^{13 }&w^{16 }&w^{54 }&w^{16 }&w^{37}\\
 &w^{36 }&w^{7 }&w^{75 }&w^{25 }&w^{26 }&w^{47 }&w^{2 }&w^{14 }&w^{14 }&w^{74 }&w^{33 }&w^{32}\\
&w^{47 }&w^{41 }&w^{7 }&w^{65 }&w^{15 }&w^{60 }&w^{66}&{ 2 }&w^{16 }&w^{73 }&w^{66}&w^{60}\\
&w^{56 }&w^{47 }&w^{36 }&w^{37 }&w^{50 }&w^{44 }&w^{56 }&w^{19 }&w^{37 }&w^{72 }&w^{60 }&w^{8}\\
\end{array}
\right).
$$
}
\end{exam}

\begin{thm} Let $q$ be an odd prime power with $q\equiv 1\pmod 4, 1\le r<m,r|\frac{m}{2}$ and $n$ even with $\eta(n)=1.$ Put $s=(t+1)n.$

\begin{enumerate}
\item If $\frac{n(p^r+1)}{2(p^r-1)}$ is odd, then there exists a self-dual code over $\F_q$ with parameters $[s,\frac{s}{2},\frac{s}{2}+1]$, with $n=\frac{q-1}{p^r+1}$, for $t$ odd and $1\le t\le p^r.$
\item If $\frac{n(p^r+1)}{2(p^r-1)}$ is even, then there exists a self-dual code over $\F_q$ with parameters $[s,\frac{s}{2},\frac{s}{2}+1]$, with $n=\frac{q-1}{p^r+1}$, for $1\le t\le p^r.$
\item There exists a self-dual code over $\F_q$ with parameters $[s,\frac{s}{2},\frac{s}{2}+1]$, with $n=\frac{q-1}{p^r-1},r|\frac{m}{2}$, for $1\le t\le p^r-2.$
\end{enumerate}

\end{thm}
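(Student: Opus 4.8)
The plan is to run the construction of Theorem~\ref{thm:new-multicoset} essentially verbatim, the only change being the source of two facts that were previously free because $q$ was a square. Let $U_n$ be the cyclic subgroup of $\F_q^*$ of order $n$, pick $t$ nonzero cosets $\alpha_1 U_n,\dots,\alpha_t U_n$, set $U=U_n\cup\bigcup_{i=1}^t\alpha_i U_n$ and $h(x)=\prod_{\alpha\in U}(x-\alpha)$, so that $D=\sum_{\alpha\in U}P_\alpha$ has $\deg D=(t+1)n=s$. Since the underlying curve is the projective line (genus zero), the whole statement reduces, through Lemma~\ref{lem:new-orthogonal}, to verifying that $h'(\alpha)$ is a nonzero square in $\F_q$ for every root $\alpha\in U$; once this holds the code $C_{\cal L}(D,\frac{s-2}{2}P_\infty-(1/\sqrt{h'}))$ is MDS self-dual with the stated parameters $[s,\frac s2,\frac s2+1]$.

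Because all roots of $h$ are simple, $h'(\beta)=\prod_{\alpha\in U,\alpha\neq\beta}(\beta-\alpha)$, and evaluating at $\beta=u_s$ and $\beta=\alpha_j u_s$ collapses to the two product formulas used in Theorem~\ref{thm:new-multicoset}, built from $n$, from powers of the roots, and from the differences $1-\alpha_i^n$ and $\alpha_j^n-\alpha_i^n$. I would dispose of the easy factors first: $n$ is a square by the hypothesis $\eta(n)=1$ (this is exactly what ``$q$ a square'' supplied in Theorem~\ref{thm:new-multicoset}); every $u_s$ is a square because $(q-1)/n=p^r\pm1$ is even, so $U_n\subseteq(\F_q^*)^2$; and $q\equiv1\pmod4$ gives $\eta(-1)=1$, which lets me ignore all signs (e.g. the difference between $1-\alpha_i^n$ and $\alpha_i^n-1$) and also supplies the isotropic vector needed to pass from self-orthogonal to self-dual.

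The differences are then controlled exactly as before. For parts~(1) and~(2), where $n=\frac{q-1}{p^r+1}$ and $r\mid\frac m2$ guarantees $p^r+1\mid q-1$ and makes $T=\frac{n(p^r+1)}{2(p^r-1)}$ an integer, Lemma~\ref{lem:alpha_ij} gives $\alpha_i^n-\alpha_j^n=\omega^{T}/(\alpha_i\alpha_j)$ up to a $(p^r-1)$-th root of unity; since $r\mid\frac m2$ forces $\F_{p^r}^*\subseteq(\F_q^*)^2$, that ambiguous root is itself a square and hence $\eta(\alpha_i^n-\alpha_j^n)=(-1)^{T}\eta(\alpha_i)\eta(\alpha_j)$. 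Choosing the coset representatives $\alpha_i$ all square when $T$ is even and all nonsquare when $T$ is odd makes every $1-\alpha_i^n$ quadratically neutral, and the parity condition on $t$ in the statement is precisely what is needed so that the accumulated sign $(-1)^{Tt}$, arising from $\alpha_j^n-1$ together with the $t-1$ differences $\alpha_j^n-\alpha_i^n$, equals $+1$. For part~(3), with $n=\frac{q-1}{p^r-1}$ and $r\mid\frac m2$, Lemma~\ref{lem:alpha_ij_2} places $\alpha_i^n-\alpha_j^n$ in $\F_{p^r}\subseteq(\F_q^*)^2$, so every difference is automatically a square and no restriction beyond $1\le t\le p^r-2$ (the number of admissible cosets) is needed.

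I expect the main obstacle to be the squareness bookkeeping at the points $\alpha_j u_s$: here the factor $(\alpha_j u_s)^{n-1}$ contributes $\eta(\alpha_j)$ (because $n-1$ is odd), while each of the $t-1$ differences $\alpha_j^n-\alpha_i^n$ contributes a further $(-1)^{T}\eta(\alpha_i)\eta(\alpha_j)$, so keeping track of how the characters of the chosen $\alpha_i$ and the accumulated powers of $(-1)^T$ interact — and checking that they cancel exactly under the stated parity of $t$ — is the only genuinely delicate step. The remaining ingredients (integrality of $T$, the divisibilities $p^r\pm1\mid q-1$, and $\F_{p^r}^*\subseteq(\F_q^*)^2$) are direct consequences of $r\mid\frac m2$ and $q\equiv1\pmod4$, and the final self-duality is immediate from Lemma~\ref{lem:new-orthogonal} once the characters are settled.
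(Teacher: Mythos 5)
Your proposal follows the paper's route exactly: the paper's entire proof of this theorem is a one-line deferral to Theorem~\ref{thm:new-multicoset}, and your substitutions of hypotheses ($\eta(n)=1$ in place of ``$n$ is a square because $q$ is'', $U_n\subseteq(\F_q^*)^2$ because $(q-1)/n=p^r\pm 1$ is even, and $\eta(-1)=1$ from $q\equiv 1\pmod 4$) are the right ones. However, the single step you flag as delicate and then assert rather than carry out is exactly where the argument collapses, so there is a genuine gap. Do your own bookkeeping at $\beta=\alpha_j u_s$: with all coset representatives of the same quadratic character, the characters of the $\alpha_i$ cancel in pairs and one is left with $\eta(h'(\alpha_j u_s))=\eta(n)(-1)^{Tt}$, the unpaired factor $\eta(\alpha_j)$ contributed by $(\alpha_j u_s)^{n-1}$ (note $n-1$ is odd) being what forces the exponent $Tt$. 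Hence in part~(1), where $T$ and $t$ are \emph{both odd}, the accumulated sign is $(-1)^{Tt}=-1$, not $+1$: the stated parity of $t$ is the opposite of what the construction needs ($T$ odd forces $t$ even). This cannot be repaired by a cleverer choice of the $\alpha_i$: for $q=25$, $r=1$, $n=4$, $T=3$, $t=1$, the product $h'(u)h'(\alpha_1 u)$ has character $\eta(-1)\eta(\alpha_1)=\eta(\alpha_1)$, so $\alpha_1$ must be a square, and then $\alpha_1^4\in\{\omega^8,\omega^{16}\}$ and one checks directly that both $1-\omega^8$ and $1-\omega^{16}$ are nonsquares in $\F_{25}$; no admissible $\alpha_1$ exists, so the construction cannot produce the claimed $[8,4,5]_{25}$ code.

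In fairness, this defect is inherited from the paper itself: the proof of Theorem~\ref{thm:new-multicoset}(2) claims that $\alpha_j^n-\alpha_i^n$ is a square when $T$ is odd and the $\alpha_i$ are nonsquares, which contradicts its own Lemma~\ref{lem:alpha_ij} (the character is $(-1)^T\eta(\alpha_i)\eta(\alpha_j)=-1$), and it ignores the factor $\eta(\alpha_j)$ from $(\alpha_j u_s)^{n-1}$ altogether; so your blind reconstruction faithfully reproduces the paper's argument, error included. The parity conditions were evidently carried over from \cite{Sok}, where the evaluation set also contains $0$ (the lengths there are $(t+1)n_0+2$, see Table~\ref{table:1}); the extra root at $x=0$ replaces $(\alpha_j u_s)^{n-1}$ by $(\alpha_j u_s)^{n}$, whose character is trivial since $n$ is even, and only then does the $T$ odd, $t$ odd case go through. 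A correct proof of the present statement must either adjoin the point $0$ (adjusting the length) or flip the condition in part~(1) to $t$ even. Finally, your appeal to Lemma~\ref{lem:alpha_ij} ``up to a $(p^r-1)$-th root of unity'' also needs care: the exact character is $(-1)^T\left[\eta(\alpha_i)\eta(\alpha_j)\right]^{K}$ with $K=\frac{q-1}{p^{2r}-1}$, which agrees with the naive reading of the lemma only when $K$ is odd.
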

\begin{proof} The proof follows from that of Theorem \ref{thm:new-multicoset}.
\end{proof}

\subsection{Self-dual codes from elliptic curves and hyper-elliptic curves}

In this subsection, we will consider elliptic curves and hyper-elliptic curves over $\F_q, q$ even.

First, we will consider elliptic curves in Weierstrass form to construct self-dual codes. Let $q = p^m$ and an elliptic curve defined by the equation

\begin{equation}
{\cal E}_{a,b,c}:~y^2 + ay = x^3 + bx + c,
\label{eq:elliptic-curve}
\end{equation}
where $a, b, c \in\F_q .$  Let $S$ be the set of $x$-components of the affine points of ${\cal E}_{a,b,c}$ over $\F_q $, that is, 
\begin{equation}
S_{a,b,c} := \{\alpha\in \F_q| \exists \beta \in \F_q\text{ such that } \beta^2+a\beta=\alpha^2+b\alpha+c\}.
\label{eq:set-elliptic}
\end{equation}

For $q=2^m,$ any $\alpha\in S_{1,b,c}$ gives exactly two points with $x$-component $\alpha$, and we denote these two points corresponding to $\alpha$ by $P^{(1)}_\alpha$ and $P^{(2)}_\alpha.$ Then the set of all rational points of ${\cal E}_{1,b,c}$ over $\F_q$ is $\{P^{(1)}_\alpha|\alpha\in S_{1,b,c}\}\cup \{P^{(2)}_\alpha|\alpha\in S_{1,b,c}\}\cup \{P_\infty \}.$ The numbers of rational points of elliptic curves $\cal E$ over $\F_q$ are given in Table \ref{table:size-elliptic}.
\begin{table}
\caption{Numbers of rational points of elliptic curves}\label{table:size-elliptic}

\begin{center}
\begin{tabular}{c|c|c}
\text{Elliptic curve }${\cal E}_{1,b,c}$&$m$&$\# {\cal E}_{1,b,c}(\F_{2^m})$\\
\hline
&$m\text{ odd }$&$q+1-2\sqrt{q}$\\
$y^2 + y = x^3$&$m\equiv 0 \pmod 4$&$q+1-2\sqrt{q}$\\
&$m\equiv 2 \pmod 4$&$q+1+2\sqrt{q}$\\
\hline
\multirow{2}{8em} 
{$y^2+y=x^3+x$}& $m\equiv 1,7 \pmod 8$ & $q+1+2\sqrt{q}$\\
&$m\equiv 3,5 \pmod 8$ & $q+1-2\sqrt{q}$\\
\hline
\multirow{2}{8em}{$y^2 + y = x^3 + x+1$}&$m\equiv 1,7 \pmod 8$&$q+1+2\sqrt{q}$\\
&$m\equiv 3,5 \pmod 8$&$q+1-2\sqrt{q}$\\
\hline
$y ^2 + y = x ^3 + bx ( T r_1^m ( b) = 1 )$&$m\text{ even }$&$q+1$\\
\hline
\multirow{2}{13em}{$y ^2 + y = x ^3 + c ~( T r_1^ m ( c ) = 1 )$}&$m\equiv 0 \pmod 4$&$q+1+2\sqrt{q}$\\
&$m\equiv 2 \pmod 4$&$q+1-2\sqrt{q}$\\
\end{tabular}
\end{center}
\end{table}

\begin{lem}[Hilbert's Theorem 90]\label{lem:Hilbert90}Let $q=p^m.$ The equation $y^p-y=k$ has solutions over $\F_q$ if and only if $\text{Tr}_{\F_q/\F_p}(k)=0.$
\end{lem}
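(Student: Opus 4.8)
The plan is to regard the Artin--Schreier operator $\phi\colon\F_q\to\F_q$, $\phi(y)=y^p-y$, as an $\F_p$-linear endomorphism of $\F_q$ viewed as an $m$-dimensional vector space over $\F_p$, and to identify its image with the kernel of the trace map $\mathrm{Tr}=\mathrm{Tr}_{\F_q/\F_p}$. Once the equality $\operatorname{Im}(\phi)=\ker(\mathrm{Tr})$ is established, the lemma is immediate: the equation $y^p-y=k$ is solvable in $\F_q$ precisely when $k\in\operatorname{Im}(\phi)$, i.e. precisely when $\mathrm{Tr}(k)=0$. That $\phi$ is $\F_p$-linear is routine, since $(y+z)^p=y^p+z^p$ in characteristic $p$ and $c^p=c$ for every $c\in\F_p$.

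First I would dispose of the easy inclusion $\operatorname{Im}(\phi)\subseteq\ker(\mathrm{Tr})$, which already yields the ``only if'' direction. Writing $\mathrm{Tr}(y)=\sum_{i=0}^{m-1}y^{p^i}$ and using $y^{p^m}=y$, one checks that the trace is Frobenius-invariant, $\mathrm{Tr}(y^p)=\mathrm{Tr}(y)$, whence
$$\mathrm{Tr}(\phi(y))=\mathrm{Tr}(y^p)-\mathrm{Tr}(y)=0.$$
Thus every value $k=y^p-y$ has zero trace.

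For the converse I would upgrade this inclusion to an equality by comparing cardinalities. The kernel of $\phi$ is the solution set of $y^p=y$, which is exactly $\F_p$, so $\phi$ is $p$-to-one onto its image and $|\operatorname{Im}(\phi)|=q/p=p^{m-1}$. On the other hand $\mathrm{Tr}$ is a nonzero $\F_p$-linear functional, hence surjective onto $\F_p$, so $|\ker(\mathrm{Tr})|=q/p=p^{m-1}$ as well. Since $\operatorname{Im}(\phi)\subseteq\ker(\mathrm{Tr})$ and both sets have the same cardinality $p^{m-1}$, they coincide, which gives the ``if'' direction and completes the argument.

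The main obstacle is the single external input needed for the counting step, namely that the trace is not identically zero (equivalently, is surjective). This can be secured either by invoking linear independence of the distinct field automorphisms $y\mapsto y^{p^i}$ for $0\le i\le m-1$ (Dedekind--Artin), or, more elementarily, by observing that $\mathrm{Tr}$ is given by a polynomial of degree $p^{m-1}<q$ and therefore cannot vanish at all $q$ points of $\F_q$. Everything else---the $\F_p$-linearity of $\phi$, the identification $\ker(\phi)=\F_p$, and the Frobenius-invariance of the trace---is elementary, so the proof is short once the image/kernel dictionary is set up.
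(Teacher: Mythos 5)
Your proof is correct and complete. Note that the paper itself gives no proof of this lemma at all: it is quoted as the classical additive form of Hilbert's Theorem 90 (Artin--Schreier), a standard fact, and is then used as a black box in Theorem \ref{thm:elliptic2} and Theorem \ref{thm:hyper-elliptic}. So there is no argument in the paper to compare against; what you have supplied is the standard textbook proof, and it is sound. The two pillars are exactly right: the inclusion $\operatorname{Im}(\phi)\subseteq\ker(\mathrm{Tr})$ via Frobenius-invariance of the trace, and the cardinality count $|\operatorname{Im}(\phi)|=q/p=|\ker(\mathrm{Tr})|$, where the left equality uses $\ker(\phi)=\F_p$ and the right one uses surjectivity of the trace. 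You also correctly isolated the only nontrivial external input---that $\mathrm{Tr}$ is not identically zero---and both justifications you offer (Dedekind--Artin independence of characters, or the degree bound $p^{m-1}<q$ on the polynomial representing the trace) are valid; the second keeps the argument entirely elementary.
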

\begin{lem}\label{lem:trace0} Let $q_0=2^m, m\ge 2$ and $q=q_0^2.$ If $\alpha$ is an element in $\F_{q_0}$, then  $\text{Tr}_{\F_q/\F_2}(\alpha)=0,$ 
and $\text{Tr}_{\F_q/\F_2}(\alpha+\alpha^3)=0.$
\end{lem}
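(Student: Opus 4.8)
The plan is to exploit the transitivity of the trace map along the tower $\F_2 \subseteq \F_{q_0} \subseteq \F_q$, where $[\F_q:\F_{q_0}]=2$ since $q=q_0^2$. Recall that $\text{Tr}_{\F_q/\F_2} = \text{Tr}_{\F_{q_0}/\F_2}\circ \text{Tr}_{\F_q/\F_{q_0}}$, so it suffices to control the inner relative trace $\text{Tr}_{\F_q/\F_{q_0}}$ on elements of the subfield $\F_{q_0}$.

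The central observation is that for the quadratic extension $\F_q/\F_{q_0}$ the relative trace of any $\beta\in \F_q$ is given explicitly by $\text{Tr}_{\F_q/\F_{q_0}}(\beta)=\beta+\beta^{q_0}$. If $\alpha \in \F_{q_0}$, then $\alpha$ is fixed by the $q_0$-power Frobenius, i.e. $\alpha^{q_0}=\alpha$, whence $\text{Tr}_{\F_q/\F_{q_0}}(\alpha)=\alpha+\alpha=2\alpha=0$ because the characteristic is $2$. Feeding this into the transitivity formula gives $\text{Tr}_{\F_q/\F_2}(\alpha)=\text{Tr}_{\F_{q_0}/\F_2}(0)=0$, which is the first assertion.

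For the second assertion I would note that $\F_{q_0}$ is closed under addition and multiplication, so $\alpha\in \F_{q_0}$ forces $\alpha^3\in \F_{q_0}$ and hence $\alpha+\alpha^3\in \F_{q_0}$. Applying the first assertion to the element $\alpha+\alpha^3$ of $\F_{q_0}$ immediately yields $\text{Tr}_{\F_q/\F_2}(\alpha+\alpha^3)=0$. In other words, the second statement is not logically independent: it is simply the first statement invoked for a particular element of the subfield.

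There is no genuine obstacle here; the only point that requires care is correctly invoking transitivity of the trace and the explicit degree-$2$ formula $\text{Tr}_{\F_q/\F_{q_0}}(\beta)=\beta+\beta^{q_0}$, after which the vanishing is forced by characteristic $2$. The hypothesis $m\ge 2$ plays no role in this particular argument and is presumably required only for the subsequent elliptic-curve construction in which the lemma is applied.
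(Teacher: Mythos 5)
Your proof is correct and follows essentially the same route as the paper: transitivity of the trace through $\F_2 \subseteq \F_{q_0} \subseteq \F_q$, the relative trace formula $\beta+\beta^{q_0}$, and vanishing forced by $\alpha^{q_0}=\alpha$ in characteristic $2$, with the second claim reduced to the first via closure of $\F_{q_0}$ under multiplication and addition. The only cosmetic difference is that you collapse the inner trace to $0$ before applying the outer trace, while the paper pushes $\alpha+\alpha^{q_0}$ through the outer trace by linearity and cancels there.
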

\begin{proof}
For any $\alpha \in \F_{q_0},$ we have 
$$
\begin{array}{ll}
\text{Tr}_{\F_q/\F_2}(\alpha)
&=\text{Tr}_{\F_{q_0}/\F_2}\left(\text{Tr}_{\F_q/\F_{q_0}}(\alpha)\right)\\
&=\text{Tr}_{\F_{q_0}/\F_2}\left(\alpha+\alpha^{q_0}\right)\\
&=\text{Tr}_{\F_{q_0}/\F_2}\left(\alpha \right)+\text{Tr}_{\F_{q_0}/\F_2}\left(\alpha^{q_0}\right)\\
&=0,
\end{array}
$$
where the first and second equality come from the properties of the trace function and the last one from the fact that $\alpha\in \F_{q_0}.$ Since $\F_{q_0}^*$ is a multiplicative group, the second part follows.

\end{proof}

\begin{prop} Let $q_0=2^m$ and $q=q_0^2.$ Then there exists a $[2q_0,q_0,d\ge q_0]$ self-dual code over $\F_q.$
\end{prop}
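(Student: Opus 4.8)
The plan is to realize the parameters $[2q_0,q_0,\ge q_0]$ as those of a self-dual $C_{\cal L}(D,G)$ built on the specific genus-one curve ${\cal E}_{1,0,0}:\,y^2+y=x^3$ over $\F_q$, and then to invoke Lemma \ref{lem:embedding2-even-g} (equivalently Lemma \ref{lem:char1}) with $g=1$ and $n=2q_0$. The first task is to produce $n=2q_0$ rational points. For each $\alpha\in\F_{q_0}$ the fibre equation $\beta^2+\beta=\alpha^3$ is solvable in $\F_q$ precisely when $\mathrm{Tr}_{\F_q/\F_2}(\alpha^3)=0$ by Lemma \ref{lem:Hilbert90}; and Lemma \ref{lem:trace0}, together with $\mathrm{Tr}_{\F_q/\F_2}(\alpha)=0$, gives exactly $\mathrm{Tr}_{\F_q/\F_2}(\alpha^3)=0$ for all $\alpha\in\F_{q_0}$. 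Hence $\F_{q_0}\subseteq S_{1,0,0}$, each such $\alpha$ contributes two distinct points $P_\alpha^{(1)},P_\alpha^{(2)}$, and I set $D=\sum_{\alpha\in\F_{q_0}}(P_\alpha^{(1)}+P_\alpha^{(2)})$, a degree-$2q_0$ rational divisor disjoint from $P_\infty$.

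Second, I would exhibit the differential required by the hypotheses of Lemma \ref{lem:char1}. Take $\omega=dx/h(x)$ with $h(x)=\prod_{\alpha\in\F_{q_0}}(x-\alpha)=x^{q_0}+x$. Since $\tfrac{\partial}{\partial y}(y^2+y+x^3)=1$ is a unit at every affine point, $dx$ has no affine zeros or poles, and as $\deg(dx)=2g-2=0$ this forces $(dx)=0$. Each factor $x-\alpha$ has pole divisor $2P_\infty$ and zero divisor $P_\alpha^{(1)}+P_\alpha^{(2)}$, so $(h)=D-2q_0P_\infty$ and therefore $(\omega)=-(h)=2q_0P_\infty-D=nP_\infty-D$. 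In particular $v_{P_i}(\omega)=-1$ for every point of $D$. For the residues, $x-\alpha$ is a local uniformizer at $P_\alpha^{(j)}$, so $\mathrm{Res}_{P_\alpha^{(j)}}(\omega)=1/h'(\alpha)$; because $h'(x)=1$ in characteristic $2$, all residues equal $1=1^2$. Thus both conditions of Lemma \ref{lem:char1} hold, the squareness of the residues being automatic in characteristic $2$.

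Finally, I put $G=q_0P_\infty=\tfrac{(2g-2+n)}{2}P_\infty$, so that $2G=2q_0P_\infty=(\omega)+D$; Lemma \ref{lem:char1}(2) then yields a divisor $G'$ with $C_{\cal L}(D,G)\sim C_{\cal L}(D,G')$ self-dual. Since $2g-2=0<\deg G=q_0<n$, Theorem \ref{thm:distance} gives $k=\deg G-g+1=q_0=n/2$ and $d\ge n-\deg G=q_0$, i.e. a self-dual $[2q_0,q_0,\ge q_0]$ code over $\F_q$, as claimed.

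The part demanding the most care is the characteristic-two bookkeeping on the elliptic curve: verifying $\F_{q_0}\subseteq S_{1,0,0}$ through the trace identities of Lemma \ref{lem:trace0}, confirming that $x$ is unramified over each $\alpha\in\F_{q_0}$ so that the fibre genuinely splits into two rational points and $(x-\alpha)_0=P_\alpha^{(1)}+P_\alpha^{(2)}$, and computing $(dx)$ and the residues correctly. Everything else is a direct application of the genus-one specialization already recorded in Lemma \ref{lem:embedding2-even-g}.
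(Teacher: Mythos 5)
Your proof is correct and takes essentially the same route as the paper: choose $U=\F_{q_0}$, set $h(x)=\prod_{\alpha\in U}(x-\alpha)=x^{q_0}+x$ and $\omega=dx/h$, split the fibres via Lemma~\ref{lem:Hilbert90} and Lemma~\ref{lem:trace0}, observe that the residues $1/h'(\alpha)=1$ are automatically squares in characteristic two, and conclude with Lemma~\ref{lem:char1} and Theorem~\ref{thm:distance}. The only difference is the (immaterial) choice of curve — you use ${\cal E}_{1,0,0}:y^2+y=x^3$, deducing $\mathrm{Tr}_{\F_q/\F_2}(\alpha^3)=0$ from Lemma~\ref{lem:trace0} by linearity, while the paper uses ${\cal E}_{1,1,0}:y^2+y=x^3+x$, which matches the lemma's statement $\mathrm{Tr}_{\F_q/\F_2}(\alpha+\alpha^3)=0$ verbatim — and your write-up makes explicit the divisor computation $(\omega)=nP_\infty-D$, the choice $G=q_0P_\infty$, and the parameter count that the paper's terse proof leaves implicit.
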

\begin{proof}Consider the elliptic curve defined by 
$${\cal E}_{1,1,0}:~y^2 + y = x^3+x.$$  From Lemma \ref{lem:trace0}, we get that $\F_{q_0}$ is a subset of $S_{1,1,0}$. Put $U=\F_{q_0}$ and $h(x)=\prod\limits_{\alpha\in U}(x-\alpha)$. Then the residue $\text{Res}_{P_\alpha}(\omega)=\frac{1}{h'(P_\alpha)}$ is a square for any $\alpha\in U$, and by Lemma \ref{lem:char1}, the constructed code $a\cdot C_{\cal L}(D,G)$ is self-dual, where $a_i^2=\text{Res}_{P_i}(\omega).$
\end{proof}

\begin{thm} \label{thm:elliptic2}
Let $q=2^m$ and $U=\{\alpha\in \F_q| \text{Tr}(\alpha^3+\alpha)=0\}.$ Then there exists a self-dual code over $\F_q$ with parameters $[2n,n,d\ge n]$  for $1\le n\le |U|.$
\end{thm}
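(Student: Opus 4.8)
The plan is to realize the desired codes as algebraic geometry codes on the elliptic curve $\mathcal{E}:={\cal E}_{1,1,0}:\,y^2+y=x^3+x$, which has genus $g=1$, and then to apply Lemma~\ref{lem:char1} exactly in the manner of the preceding Proposition. First I would identify $U$ with the set $S_{1,1,0}$ of $x$-coordinates of affine points. By Hilbert's Theorem~90 (Lemma~\ref{lem:Hilbert90}) the equation $\beta^2+\beta=\alpha^3+\alpha$ is solvable for $\beta\in\F_q$ precisely when $\text{Tr}(\alpha^3+\alpha)=0$, so $U=S_{1,1,0}$; moreover $\beta^2+\beta=c$ has no repeated root in characteristic $2$, so each $\alpha\in U$ lifts to two distinct rational points $P^{(1)}_\alpha,P^{(2)}_\alpha$. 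I would then fix an arbitrary subset $W\subseteq U$ with $|W|=n$ (possible exactly when $1\le n\le|U|$) and set $D=\sum_{\alpha\in W}(P^{(1)}_\alpha+P^{(2)}_\alpha)$, a rational divisor of degree $2n$ whose support avoids $P_\infty$.

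Next I would produce the differential form demanded by Lemma~\ref{lem:char1}. Put $h(x)=\prod_{\alpha\in W}(x-\alpha)$ and $\omega=\frac{dx}{h(x)}$. The two divisor identities I need on $\mathcal{E}$ are that $dx$ equals the invariant differential $\frac{dx}{2y+1}$ (since the coefficient reduces to $1$ in characteristic $2$), whence $(dx)=0$, and that $(x-\alpha)=P^{(1)}_\alpha+P^{(2)}_\alpha-2P_\infty$, because $x$ has a double pole at $P_\infty$ and $\alpha$ is unramified. Combining them gives $(h)=D-2nP_\infty$ and hence
$$(\omega)=(dx)-(h)=2nP_\infty-D.$$
Thus $\omega$ has a simple pole at each $P^{(i)}_\alpha$, so $v_{P^{(i)}_\alpha}(\omega)=-1$, and is regular elsewhere; a local computation with the uniformizer $x-\alpha$ yields $\text{Res}_{P^{(i)}_\alpha}(\omega)=1/h'(\alpha)$. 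This residue is nonzero because the roots of $h$ are simple, and — the one place where characteristic $2$ is essential — it is automatically a square, since the Frobenius $x\mapsto x^2$ is an automorphism of $\F_q$ and every element is a square. Hence both hypotheses of Lemma~\ref{lem:char1} hold with $a_i^2=\text{Res}_{P^{(i)}_\alpha}(\omega)$, exactly as used in the preceding Proposition.

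Finally I would take $G=nP_\infty$, for which $2G=2nP_\infty=D+(\omega)$, so case~(2) of Lemma~\ref{lem:char1} furnishes a divisor $G'$ with $C_{\cal L}(D,G)\sim C_{\cal L}(D,G')$ and $C_{\cal L}(D,G')$ self-dual. The parameters then follow from Theorem~\ref{thm:distance}: since $2g-2=0<\deg G=n<2n$, the code $C_{\cal L}(D,G)$ has length $2n$, dimension $k=\deg G-g+1=n$, and minimum distance $d\ge 2n-n=n$, and equivalence preserves all three, giving a self-dual $[2n,n,\ge n]$ code. I do not expect a serious obstacle, as this is the genus-one analogue of the preceding Proposition; the only steps requiring real care are the two divisor identities (in particular $(dx)=0$, which must be deduced from the invariant differential rather than from a naive Laurent expansion at $P_\infty$, where $d(t^{-2})=-2t^{-3}\,dt$ misleadingly vanishes) that pin down the clean choice $G=nP_\infty$, together with the observation that in characteristic $2$ the residues are squares with no extra hypothesis — so that, unlike the odd-characteristic constructions earlier in the paper, no squareness condition on $\F_q$ is needed and $n$ may be any integer up to $|U|$.
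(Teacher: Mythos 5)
Your proposal is correct and follows essentially the same route as the paper's proof: the curve ${\cal E}_{1,1,0}:y^2+y=x^3+x$, Hilbert's Theorem 90 (Lemma \ref{lem:Hilbert90}) to identify $U$ with $S_{1,1,0}$, the differential $\omega=dx/h$, the divisor $G=nP_\infty$, the observation that every element of $\F_{2^m}$ is a square, and Lemma \ref{lem:char1}. If anything, your write-up is more careful than the paper's: you take $h$ over the chosen $n$-element subset $W$ (the paper forms $h$ over all of $U$ while supporting $D$ only on $U_0\subseteq U$, which strictly speaking spoils the exact equality $2G=D+(\omega)$), and you verify explicitly the divisor identities $(dx)=0$ and $(\omega)=2nP_\infty-D$ that the paper leaves implicit.
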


\begin{proof} Let $U$ be defined as in the theorem.
Put 
$$h(x)=\prod\limits_{\alpha\in U}(x-\alpha).$$ 
Since any element in $\F_q$ ($q$ even) is a square in $\F_q$, we conclude that $h'(\alpha)$ is a nonzero square in $\F_q$ for any $\alpha\in U.$ 

Consider the elliptic curve defined by 
$${\cal E}_{1,1,0}:~y^2 + y = x^3+x.$$

From Lemma \ref{lem:Hilbert90}, we get that $U$ is a subset of $S_{1,1,0}$. Put $$D=\sum\limits_{\alpha \in U_0\subset U}\left(P_\alpha^{(1)}+P_\alpha^{(2)}\right)=P_1+\cdots+P_s,s=2|U_0|,G=\frac{s}{2}P_{\infty},\omega=\frac{dx}{h}.$$ Then the residue $\text{Res}_{P_\alpha}(\omega)=\frac{1}{h'(P_\alpha)}$ is a square for any $\alpha\in U_0$, and by Lemma \ref{lem:char1}, the constructed code $a\cdot C_{\cal L}(D,G)$ is self-dual, where $a_i^2=\text{Res}_{P_i}(\omega).$
\end{proof}
\begin{exam} The elliptic curve $${\cal E}_{1,1,0}:~y^2+y=x^3+x,$$
has rational points in the set
$
\{
P_\infty=(1:0:0),
    ( 1: 0 :1),
    ( 1: 1 :1),
    ( w^3: w^7 :1),
    ( w^3: w^9 :1),
    ( w^6: w^3 :1),
    ( w^6: w^{14} :1),
    ( w^{12}: w^6 :1),
    ( w^{12}: w^{13} :1),
    ( w^{10}: w :1),
    ( w^{10}: w^4 :1),
    ( w^{11}: w :1),
    ( w^{11}: w^4 :1),
    ( w^5: w^2 :1),
    ( w^5: w^8 :1),
    ( w^7: w^2 :1),
    ( w^7: w^8 :1),
    ( w^{13}: w^2 :1),
    ( w^{13}: w^8 :1)
\}
$. Put $D=P_1+\cdots+P_{18},G=9P_\infty$. The code $C_{\cal L}(D,G)$  is self-dual. The set $\{\frac{x^iy^j}{z^{i+j}}| (i,j)\in \{    
    ( 0, 0 ),
    ( 0, 1 ),
    ( 0, 2 ),
    ( 0, 3 ),
    ( 1, 0 ),
    ( 1, 1 ),
    ( 1, 2 ),
    ( 2, 0 ),
    ( 2, 1 )
\}
\}$ is a basis for $C_{\cal L}(D,G)$, and thus its generator matrix is given by

{\scriptsize
$$
{\cal G}=\left(
\begin{array}{cccccccccccccccccc}
1 1 1 1 1 1 1 1 1 1 1 1 1 1 1 1 1 1\\
0 1 w^7 w^9 w^3 w^{14} w^6 w^{13} w w^4 w w^4 w^2 w^8 w^2 w^8 w^2 w^8\\
0 1 w^{14} w^3 w^6 w^{13} w^{12} w^{11} w^2 w^8 w^2 w^8 w^4 w w^4 w w^4 w\\
0 1 w^6 w^{12} w^9 w^{12} w^3 w^9 w^3 w^{12} w^3 w^{12} w^6 w^9 w^6 w^9 w^6 w^9\\
1 1 w^3 w^3 w^6 w^6 w^{12} w^{12} w^{10} w^{10} w^{11} w^{11} w^5 w^5 w^7 w^7 w^{13} w^{13}\\
0 1 w^{10} w^{12} w^9 w^5 w^3 w^{10} w^{11} w^{14} w^{12} 1 w^7 w^{13} w^9 1 1 w^6\\
0 1 w^2 w^6 w^{12} w^4 w^9 w^8 w^{12} w^3 w^{13} w^4 w^9 w^6 w^{11} w^8 w^2 w^{14}\\
1 1 w^6 w^6 w^{12} w^{12} w^9 w^9 w^5 w^5 w^7 w^7 w^{10} w^{10} w^{14} w^{14} w^{11} w^{11}\\
0 1 w^{13} 1 1 w^{11} 1 w^7 w^6 w^9 w^8 w^{11} w^{12} w^3 w w^7 w^{13} w^4\\
\end{array}
\right).
$$
}

By Magma \cite{Mag}, the code with generator matrix $a\cdot {\cal G}$ is self-dual, and it has parameters $[18,9,9]$ over $\F_{16},$ where 
 $a=(    w^5,
    w^5,
    w^{12},
    w^{12},
    w^7,
    w^7,
    1,
    1,
    w^2,
    w^2,$ $
    w^{14},
    w^{14},
    w^4,
    w^4,
    w^9,
    w^9,
    w^{10},
    w^{10})
.$ 

This code is an almost MDS code. We also find almost MDS self-dual codes over $\F_{16}$ with parameters $[20,10,10]$, $[22,11,11]$, $[24,12,12].$
\end{exam}

\begin{cor} \label{cor:elliptic2}
Let $q=2^m$ and $U=\{\alpha\in \F_q| \text{Tr}(\alpha^3)=0\}.$ Then there exists a self-dual code over $\F_q$ with parameters $[2n,n,d \ge n]$  for $1\le n\le |U|.$
\end{cor}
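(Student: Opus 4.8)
The plan is to mirror the proof of Theorem \ref{thm:elliptic2} almost verbatim, substituting the curve ${\cal E}_{1,0,0}:~y^2+y=x^3$ for ${\cal E}_{1,1,0}$. First I would set $U=\{\alpha\in\F_q\mid \text{Tr}(\alpha^3)=0\}$ as in the statement and form $h(x)=\prod_{\alpha\in U}(x-\alpha)$. Exactly as before, since $q=2^m$ is even every element of $\F_q$ is a square, so $h'(\alpha)$ is automatically a nonzero square in $\F_q^*$ for each simple root $\alpha\in U$ (the roots are simple because $U\subset\F_q$ consists of distinct field elements). This disposes of the squareness condition on the residues with no extra work.

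Next I would verify that $U$ is precisely the set of $x$-coordinates of affine points of ${\cal E}_{1,0,0}$. A point $(\alpha,\beta)$ lies on the curve iff $\beta^2+\beta=\alpha^3$, and by Hilbert's Theorem 90 (Lemma \ref{lem:Hilbert90} with $p=2$) this equation is solvable for $\beta\in\F_q$ iff $\text{Tr}_{\F_q/\F_2}(\alpha^3)=0$. Hence $U=S_{1,0,0}$, and each $\alpha\in U$ yields exactly two rational points $P_\alpha^{(1)},P_\alpha^{(2)}$. I would then set, for any subset $U_0\subseteq U$,
$$D=\sum_{\alpha\in U_0}\left(P_\alpha^{(1)}+P_\alpha^{(2)}\right)=P_1+\cdots+P_s,\quad s=2|U_0|,\quad G=\tfrac{s}{2}P_\infty,\quad \omega=\frac{dx}{h}.$$

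The differential $\omega=dx/h$ has a simple pole at each $P_i$ with residue $1/h'(P_\alpha)$, a square, so the two hypotheses of Lemma \ref{lem:char1} are met; together with $2G=D+(\omega)$ (the genus-one analogue of the degree bookkeeping carried out in Lemma \ref{lem:embedding2-even-g}) this gives that $a\cdot C_{\cal L}(D,G)$ is self-dual with parameters $[2n,n,d\ge n]$ for $1\le n=|U_0|\le |U|$. Since this is stated as a corollary, I would simply write that the proof is identical to that of Theorem \ref{thm:elliptic2} with ${\cal E}_{1,1,0}$ replaced by ${\cal E}_{1,0,0}$, the only changed ingredient being the trace computation, which now reads $\text{Tr}(\alpha^3)=0$ directly from Hilbert 90. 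The only point requiring a moment's care—the main (very mild) obstacle—is confirming that the residue/differential conditions of Lemma \ref{lem:char1} still hold verbatim for this curve; but since $q$ is even forces every residue to be a square and the pole structure of $dx/h$ depends only on $h$ and not on $b,c$, no genuine difficulty arises.
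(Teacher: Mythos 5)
Your proposal is correct and follows essentially the same route the paper intends: the corollary is stated without separate proof precisely because the argument of Theorem \ref{thm:elliptic2} carries over verbatim once ${\cal E}_{1,1,0}$ is replaced by ${\cal E}_{1,0,0}:\ y^2+y=x^3$, with Hilbert's Theorem 90 (Lemma \ref{lem:Hilbert90}) now giving $U=S_{1,0,0}$ from $\text{Tr}(\alpha^3)=0$, and Lemma \ref{lem:char1} applied to $\omega=dx/h$ exactly as before. Your degree bookkeeping $2G=D+(\omega)$ is the right justification (on this curve $dx$ is the invariant differential, so $(dx)=0$), noting only that, as in the paper's own proof, $h$ should be taken as the product over the chosen subset $U_0$ rather than all of $U$ for that equality to hold literally.
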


\begin{thm} \label{thm:hyper-elliptic}
Let $q=2^m,m\ge 3$ and $U=\{\alpha\in \F_q| \text{Tr}(\alpha^5)=0\}.$ Then there exists a self-dual code over $\F_q$ with parameters $[2n,n,d\ge n-1]$  for $1\le n\le |U|.$
\end{thm}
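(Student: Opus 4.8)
The plan is to realize this code on the hyper-elliptic (Artin--Schreier) curve
$${\cal H}:~y^2+y=x^5$$
over $\F_q$ with $q=2^m$, paralleling the elliptic construction of Theorem~\ref{thm:elliptic2} but now in genus $g=2$. First I would record two structural facts about ${\cal H}$. Since $\deg(x^5)=5$ is odd and coprime to the characteristic, the Artin--Schreier genus formula gives $g=(2-1)(5-1)/2=2$, and there is a single place $P_\infty$ lying over $x=\infty$, which is totally and wildly ramified. By Hilbert's Theorem~90 (Lemma~\ref{lem:Hilbert90}) the fibre over an affine value $\alpha$ splits into two rational points $P_\alpha^{(1)},P_\alpha^{(2)}$ precisely when $\text{Tr}(\alpha^5)=0$, i.e.\ exactly for $\alpha\in U$. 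Thus for any subset $U_0\subseteq U$ with $|U_0|=n$ I may form the rational divisor $D=\sum_{\alpha\in U_0}\bigl(P_\alpha^{(1)}+P_\alpha^{(2)}\bigr)$ of degree $2n$.

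Next I would pin down the relevant differential. Set $h(x)=\prod_{\alpha\in U_0}(x-\alpha)$ and $\omega=dx/h$. The key computation is the divisor $(\omega)$. Because $P_\infty$ is totally ramified one has $v_{P_\infty}(x)=-2$, so $h$ has a pole of order $2n$ there and simple zeros exactly at the $2n$ points of $D$; hence $(h)=D-2n\,P_\infty$. For the canonical part I would show $(dx)=2\,P_\infty$: by Riemann--Hurwitz $(dx)_{\cal H}=\pi^{*}\bigl((dx)_{\mathbb{P}^1}\bigr)+\mathfrak d$, where $\pi:{\cal H}\to\mathbb{P}^1$ is the $x$-map, $(dx)_{\mathbb{P}^1}=-2[\infty]$, and the different at the wildly ramified place has exponent $(p-1)(5+1)=6$; thus $(dx)=-4\,P_\infty+6\,P_\infty=2\,P_\infty$, of the expected degree $2g-2=2$. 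Combining, $(\omega)=(dx)-(h)=(2n+2)P_\infty-D$.

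With $G=(n+1)P_\infty$ the two hypotheses of Lemma~\ref{lem:char1} fall out. From $(\omega)=(2n+2)P_\infty-D$ each $P_i\in\text{supp}(D)$ satisfies $v_{P_i}(\omega)=-1$, and $\text{Res}_{P_i}(\omega)=1/h'(\alpha_i)\in\F_q^{*}$ is automatically a square because every element of the even field $\F_q$ is a square; moreover $D+(\omega)=(2n+2)P_\infty=2G$. Hence Lemma~\ref{lem:char1}(2) produces, after the diagonal scaling $a=(a_i)$ with $a_i^2=\text{Res}_{P_i}(\omega)$, a self-dual code $a\cdot C_{\cal L}(D,G)$. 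Since $2g-2=2<\deg(G)=n+1<2n$ for $n\ge 2$, Theorem~\ref{thm:distance} gives $k=\deg(G)-g+1=n$ and $d\ge 2n-(n+1)=n-1$, and both are preserved by the diagonal automorphism; the remaining case $n=1$ is the trivial $[2,1]$ repetition code.

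The main obstacle is the differential computation $(dx)=2\,P_\infty$. Unlike the elliptic case, the wild ramification in characteristic~$2$ makes the naive local expansion of $dx$ subtle (the would-be leading term $d(t^{-2})$ vanishes), so the clean route is the Riemann--Hurwitz/different formula together with the Artin--Schreier different exponent $(p-1)(m+1)=6$. Getting this exponent and the ramification index $e=2$ right is exactly what fixes $G=(n+1)P_\infty$, and hence the genus-$2$ defect $d\ge n-1$; everything else follows the template already established for the elliptic curves.
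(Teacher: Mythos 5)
Your proposal is correct and follows essentially the same route as the paper's proof: the same curve $y^2+y=x^5$, the same differential $\omega=dx/h$, the same divisor $G=(n+1)P_\infty=\bigl(\tfrac{s}{2}+1\bigr)P_\infty$, and the same appeal to Hilbert's Theorem 90, the squareness of residues in characteristic $2$, Lemma~\ref{lem:char1}, and the bound from Theorem~\ref{thm:distance}. The only difference is that you make explicit what the paper leaves implicit --- the Artin--Schreier genus $g=2$, the Riemann--Hurwitz computation $(dx)=2P_\infty$ verifying $2G=D+(\omega)$, and the trivial $n=1$ case --- which fills in rather than changes the argument.
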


\begin{proof} Let $U$ be defined as in the theorem.
Put 
$$h(x)=\prod\limits_{\alpha\in U}(x-\alpha).$$ 
Since any element in $\F_q$ ($q$ even) is a square in $\F_q$, we conclude that $h'(\alpha)$ is a nonzero square in $\F_q$ for any $\alpha\in U.$ 

Consider the hyper-elliptic curve defined by 
\begin{equation}
{\cal X}:~y^2 + y = x^5.
\label{eq:hyper-elliptic}
\end{equation}

From Lemma \ref{lem:Hilbert90}, we get that $U$ is a subset of the solution to (\ref{eq:hyper-elliptic}). Put $D=\sum\limits_{\alpha \in U_0\subset U}\left(P_\alpha^{(1)}+P_\alpha^{(2)}\right)=P_1+\cdots+P_s,s=2|U_0|,G=(\frac{s}{2}+1)P_{\infty}$ and $\omega=\frac{dx}{h}.$ Then the residue $\text{Res}_{P_\alpha}(\omega)=\frac{1}{h'(P_\alpha)}$ is a square for any $\alpha\in U_0$, and by Lemma \ref{lem:char1}, the constructed code $a\cdot C_{\cal L}(D,G)$ is a $[s,\frac{s}{2},d \ge \frac{s}{2}-1]$ self-dual code, where $a_i^2=\text{Res}_{P_i}(\omega).$ 

\end{proof}

\begin{exam}  The hyper-elliptic curve defined by
$$y^2+y=x^5,$$
has rational points in the set
$
\{
P_\infty=(1:0:0),
    ( w^{12}: 0 :1),
    ( w^{12}: 1 :1),
    ( w^3: 0 :1),
    ( w^3: 1 :1),
    ( w^6: 0 :1),
    ( w^6: 1 :1),
    ( w^9: 0 :1),
    ( w^9: 1 :1),
    ( w^{11}: w :1),
    ( w^{11}: w^4 :1),
    ( w^2: w :1),
    ( w^2: w^4 :1),
    ( w^5: w :1),
    ( w^5: w^4 :1),
    ( w^8: w :1),
    ( w^8: w^4 :1),
    ( w: w^2 :1),
    ( w: w^8 :1),
    ( w^{13}: w^2 :1),
    ( w^{13}: w^8 :1),
    ( w^4: w^2 :1),
    ( w^4: w^8 :1),
    ( w^7: w^2 :1),
    ( w^7: w^8 :1),
    ( w^{10}: w^2 :1),
    ( w^{10}: w^8 :1)
\}
$. Put $D=P_1+\cdots+P_{26},G=14P_\infty$. The set $\{\frac{x^iy^j}{z^{i+j}}| (i,j)\in \{    
    ( 0, 0 ),
    ( 0, 1 ),
    ( 0, 2 ),
    ( 1, 0 ),
    ( 1, 1 ),
    ( 1, 2 ),
    ( 2, 0 ),
    ( 2, 1 ),
    ( 2, 2 ),
    ( 3, 0 ),
    ( 3, 1 ),
    ( 4, 0 ),
    ( 4, 1 )
\}
\}$ is a basis for $C_{\cal L}(D,G)$, and thus its generator matrix $\cal G$ is given by

{\footnotesize

$$
{\cal G}=\left(
\begin{array}{cccccccccccccccccccccccccc}
1  1  1  1  1  1  1  1  1  1  1  1  1  1  1  1  1  1  1  1  1  1  1  1  1  1\\
0  1  0  1  0  1  0  1  w  w^4  w  w^4  w  w^4  w  w^4  w^2  w^8  w^2  w^8  w^2  w^8  w^2  w^8  w^2  w^8\\
0  1  0  1  0  1  0  1  w^2  w^8  w^2  w^8  w^2  w^8  w^2  w^8  w^4  w  w^4  w  w^4  w  w^4  w  w^4  w\\
w^{12}  w^{12}  w^3  w^3  w^6  w^6  w^9  w^9  w^{11}  w^{11}  w^2  w^2  w^5  w^5  w^8  w^8  w  w  w^{13}  w^{13}  w^4  w^4  w^7  w^7  w^{10}  w^{10}\\
0  w^{12}  0  w^3  0  w^6  0  w^9  w^{12}  1  w^3  w^6  w^6  w^9  w^9  w^{12}  w^3  w^9  1  w^6  w^6  w^{12}  w^9  1  w^{12}  w^3\\
0  w^{12}  0  w^3  0  w^6  0  w^9  w^{13}  w^4  w^4  w^{10}  w^7  w^{13}  w^{10}  w  w^5  w^2  w^2  w^{14}  w^8  w^5  w^{11}  w^8  w^{14}  w^{11}\\
w^9  w^9  w^6  w^6  w^{12}  w^{12}  w^3  w^3  w^7  w^7  w^4  w^4  w^{10}  w^{10}  w  w  w^2  w^2  w^{11}  w^{11}  w^8  w^8  w^{14}  w^{14}  w^5  w^5\\
0  w^9  0  w^6  0  w^{12}  0  w^3  w^8  w^{11}  w^5  w^8  w^{11}  w^{14}  w^2  w^5  w^4  w^{10}  w^{13}  w^4  w^{10}  w  w  w^7  w^7  w^{13}\\
0  w^9  0  w^6  0  w^{12}  0  w^3  w^9  1  w^6  w^{12}  w^{12}  w^3  w^3  w^9  w^6  w^3  1  w^{12}  w^{12}  w^9  w^3  1  w^9  w^6\\
w^6  w^6  w^9  w^9  w^3  w^3  w^{12}  w^{12}  w^3  w^3  w^6  w^6  1  1  w^9  w^9  w^3  w^3  w^9  w^9  w^{12}  w^{12}  w^6  w^6  1  1\\
0  w^6  0  w^9  0  w^3  0  w^{12}  w^4  w^7  w^7  w^{10}  w  w^4  w^{10}  w^{13}  w^5  w^{11}  w^{11}  w^2  w^{14}  w^5  w^8  w^{14}  w^2  w^8\\
w^3  w^3  w^{12}  w^{12}  w^9  w^9  w^6  w^6  w^{14}  w^{14}  w^8  w^8  w^5  w^5  w^2  w^2  w^4  w^4  w^7  w^7  w  w  w^{13}  w^{13}  w^{10}  w^{10}\\
0  w^3  0  w^{12}  0  w^9  0  w^6  1  w^3  w^9  w^{12}  w^6  w^9  w^3  w^6  w^6  w^{12}  w^9  1  w^3  w^9  1  w^6  w^{12}  w^3\\
1  1  1  1  1  1  1  1  w^{10}  w^{10}  w^{10}  w^{10}  w^{10}  w^{10}  w^{10}  w^{10}  w^5  w^5  w^5  w^5  w^5  w^5  w^5  w^5  w^5  w^5\\
w^{12}  w^{12}  w^3  w^3  w^6  w^6  w^9  w^9  w^6  w^6  w^{12}  w^{12}  1  1  w^3  w^3  w^6  w^6  w^3  w^3  w^9  w^9  w^{12}  w^{12}  1  1\\
w^9  w^9  w^6  w^6  w^{12}  w^{12}  w^3  w^3  w^2  w^2  w^{14}  w^{14}  w^5  w^5  w^{11}  w^{11}  w^7  w^7  w  w  w^{13}  w^{13}  w^4  w^4  w^{10}  w^{10}
\end{array}
\right).
$$
}

By Magma \cite{Mag}, the code with generator matrix $a\cdot {\cal G}$ is self-dual, and it has parameters $[26,13,12]$ over $\F_{16},$ where 
$a=(          
w^{14},
    w^{14},
    w,
    w,
    w^6,
    w^6,
    w^{10},
    w^{10},
    w^9,$ $
    w^9,
    w^4,
    w^4,
    w^6,
    w^6,
    w^8,
    w^8,
    w^6,
    w^6,
    w^3,
    w^3,
    w^7,
    w^7,
    w,
    w,
    w^{13},
    w^{13}
)
.$ 
We also find self-dual codes over $\F_{16}$ with parameters $[28,14,13]$, $[30,15,14]$, $[32,16,15].$
\end{exam}

Next, we will consider hyper-elliptic curves over $\F_q,$ $q=p^m$ with $p$ an odd prime.

\begin{thm}\label{thm:hyper-elliptic}
Let $q=p^m$ and $t$ be a positive odd integer such that $\text{gcd}(t,q-1)=1$. If $\eta(n)=1$ and $4n|(q-1)$, then there exists a self-dual code with parameters $[2n,n,d\ge n+\frac{t-3}{2}].$
\end{thm}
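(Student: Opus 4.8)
The plan is to build the code as a one-point algebraic geometry code on a hyperelliptic curve and to derive self-duality from Lemma~\ref{lem:char1}, mirroring the even-characteristic hyperelliptic construction above. Take ${\cal X}:~y^2=F(x)$ with $F$ squarefree of degree $t$; as $t$ is odd, ${\cal X}$ is hyperelliptic of genus $g=\frac{t-1}{2}$ with a single rational place $P_\infty$ at infinity. Choose $U\subset\F_q$ with $|U|=n$ so that $\eta(F(\alpha))=1$ for every $\alpha\in U$; then each $\alpha$ lifts to two rational points $P_\alpha^{(1)}=(\alpha,\beta_\alpha)$ and $P_\alpha^{(2)}=(\alpha,-\beta_\alpha)$. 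Set $D=\sum_{\alpha\in U}(P_\alpha^{(1)}+P_\alpha^{(2)})$ (length $2n$), $h(x)=\prod_{\alpha\in U}(x-\alpha)$, and $\omega=\frac{dx}{h}$.

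For self-duality I would verify the two hypotheses of Lemma~\ref{lem:char1}. Each $\alpha\in U$ is a simple root of $h$, and $x-\alpha$ is a uniformizer at the non-Weierstrass places $P_\alpha^{(j)}$, so $v_{P_\alpha^{(j)}}(\omega)=-1$ and $\text{Res}_{P_\alpha^{(j)}}(\omega)=1/h'(\alpha)$ for $j=1,2$. The arithmetic hypotheses are used precisely to force every $1/h'(\alpha)$ to be a nonzero square: the condition $\gcd(t,q-1)=1$ makes $x\mapsto x^t$ a bijection of $\F_q$, which (through the coset identities of Lemma~\ref{lem:alpha_ij} and Lemma~\ref{lem:alpha_ij_2}) controls both the supply of admissible $x$-values and the squareness of the individual factors of $h'(\alpha)$, while $4n\mid(q-1)$ and $\eta(n)=1$ furnish the fourth roots of unity and the square $-1$ needed to conclude $\eta(h'(\alpha))=1$. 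Taking $G$ with $2G=D+(\omega)$ (so $\deg G=n+g-1$, since $\deg D=2n$ and $\deg(\omega)=2g-2$), Lemma~\ref{lem:char1}(2) produces a divisor $G'$ with $C_{\cal L}(D,G')\sim C_{\cal L}(D,G)$ self-dual; by Theorem~\ref{thm:distance} its dimension is $\deg G'-g+1=n$ and its length is $2n$.

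The decisive step is the minimum distance, and here I would apply Theorem~\ref{thm:distance} directly rather than invoke any heuristic. With $\deg G'=n+g-1=n+\frac{t-3}{2}$ and length $2n$ (and the mild requirement $t<2n+3$ that keeps $2g-2<\deg G'<2n$), the Goppa estimate yields $d\ge 2n-\deg G'=n-\frac{t-3}{2}$. This is the bound the AG-code machinery genuinely delivers, and it places the codes squarely in the almost-MDS / optimal regime that is the theme of this part of the paper (for $t=3$ it reads $n$, for $t=5$ it reads $n-1$, and so on).

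The main obstacle is therefore not a missing refinement but the sign in the claimed estimate, and I would confront it head-on instead of glossing over it. The literal bound $d\ge n+\frac{t-3}{2}$ cannot hold for a $[2n,n]$ code as soon as $t\ge 7$: the Singleton bound forces $d\le 2n-n+1=n+1<n+\frac{t-3}{2}$; for instance over $\F_9$ with $t=7$ and $n=2$ (which satisfy $\gcd(t,q-1)=1$, $4n\mid(q-1)$, $\eta(n)=1$) it would assert a $[4,2,4]$ code, which is impossible. Hence the correct distance is $d\ge n-\frac{t-3}{2}$ (equivalently $n+\frac{3-t}{2}$), and I would record the theorem with this minus sign. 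The only remaining work is then routine: exhibiting, by Lemma~\ref{lem:alpha_ij}, Lemma~\ref{lem:alpha_ij_2} and the hypotheses $\gcd(t,q-1)=1$, $4n\mid(q-1)$, $\eta(n)=1$, an explicit set $U$ of size $n$ for which all $h'(\alpha)$ are squares, so that the self-dual code constructed above indeed exists.
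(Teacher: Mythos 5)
Your diagnosis of the sign is correct: for a $[2n,n]$ code the Singleton bound forces $d\le n+1$, and the paper's own corollary (the cases $t=3$ and $t=5$, giving $d\ge n$ and $d\ge n-1$ respectively) confirms that the intended bound is $d\ge n-\frac{t-3}{2}$; the printed plus sign is a typo, and your computation $d\ge 2n-\deg G'=n-\frac{t-3}{2}$ with $\deg G'=n+g-1$ and $g=\frac{t-1}{2}$ is the right reading. Your overall skeleton --- a double cover of the line, a differential supported at the $2n$ chosen points, self-duality via Lemma~\ref{lem:char1} --- is also the paper's.

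The genuine gap is that you declare the construction of $U$ ``routine'' and postpone it, while that construction is the entire content of the theorem, and your sketch of how the hypotheses would enter is not the actual mechanism. The paper works on the specific curve $y^2=x^t$ and takes $U=U_n$, the multiplicative subgroup of order $n$ (available since $4n\mid(q-1)$), so that $h(x)=x^n-1$ and $h'(\alpha)=n\alpha^{n-1}$; squareness of every $h'(\alpha)$ is immediate from $\eta(n)=1$ together with the fact that $4n\mid(q-1)$ makes each $\alpha\in U_n$ a fourth power, and $\gcd(t,q-1)=1$ guarantees each $\alpha^t$ is again a nonzero fourth power, so each $\alpha$ lifts to two rational points $(\alpha,\pm\beta)$ with $\beta$ a square. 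Lemmas~\ref{lem:alpha_ij} and~\ref{lem:alpha_ij_2}, which you invoke, play no role here: they concern differences $\alpha_i^n-\alpha_j^n$ across distinct cosets and belong to Theorem~\ref{thm:new-multicoset} and the Hermitian constructions. Moreover, your verification of Lemma~\ref{lem:char1} is internally inconsistent for $y^2=F(x)$ with $F$ squarefree of degree $t=2g+1$: with $\omega=\frac{dx}{h}$ one computes $(dx)=\sum_{i=1}^{2g+1}W_i-3P_\infty$ and $(h)=D-2nP_\infty$, hence $D+(\omega)=\sum_{i=1}^{2g+1}W_i+(2n-3)P_\infty$, in which the finite Weierstrass points $W_i$ occur with odd multiplicity, so no divisor $G$ satisfies $2G=D+(\omega)$. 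One must instead take $\omega=\frac{dx}{yh}$, which makes $D+(\omega)=(2n+2g-2)P_\infty$ even but changes the residues at $(\alpha,\pm\beta)$ to $\pm\frac{1}{\beta h'(\alpha)}$ --- exactly the expression $\frac{1}{\beta_i h'(\alpha_i)}$ appearing in the paper's proof --- and then the squareness of $\beta$ and of $-1$ becomes an additional demand that the hypotheses $4n\mid(q-1)$ and $\gcd(t,q-1)=1$ are specifically there to meet through the explicit choice $U=U_n$. (Your generic squarefree $F$ would additionally require all $F(\alpha)$, $\alpha\in U$, to be nonzero squares, yet another unproved existence claim for a set of size $n$.) So the step you set aside is precisely where every hypothesis of the theorem is consumed, and as sketched your argument does not close.
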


\begin{proof} Denote $C_j=\{j\times i \pmod {q-1}| i=0,1,\hdots\}$. For $\theta$ a primitive element of $\F_q$, let $U_n=\{\theta^{i }| i\in C_{\frac{q-1}{n}}\},$ and label the elements of $U_n$ as $\alpha_1,\hdots,\alpha_n$. Under the condition $4n|(q-1)$, the set $U_n$ is a multiplicative subgroup of $\F_q^*$ of order $n.$
Put 
$$h(x)=\prod\limits_{\alpha\in U_n}(x-\alpha).$$ 
Clearly,  all the roots of $h(x)$ are simple, and the derivative $h'(x)=nx^{n-1}$, and thus for any $\alpha\in U_n$, we have that $h'(\alpha)$ is a square.
Consider the elliptic curve defined by 
$${\cal X}:~y^2 = x^t.$$
Since $\text{gcd}(t,q-1)=1$, the set $\{x^t| x \in \F_q\}$ is in bijection with $\F_q.$ For any $\alpha\in U_n$, there are two places, say $P_\alpha^{(1)}$ and $P_\alpha^{(2)}$, arising from $x$-component $\alpha.$ Put $D=\sum\limits_{\alpha \in U_n}P_\alpha^{(1)}+P_\alpha^{(2)}=P_1+\cdots+P_s,s=2n,G=nP_{\infty}$ and $\omega=\frac{dx}{h}.$ With the choice of $\alpha_i\in U_n$ and $\beta_i^2=\alpha_i^t$, the residue $\text{Res}_{P_{\alpha_i}}(\omega)=\frac{1}{\beta_ih'(\alpha_i)}$ is a square for any $\alpha_i\in U_n$, and by Lemma \ref{lem:char1}, the constructed code $a\cdot C_{\cal L}(D,G)$ is self-dual, where $a_i^2=\text{Res}_{P_i}(\omega).$
\end{proof}
\begin{cor}Let $q=p^m$. Then we have the following:
\begin{enumerate} 
\item if $\text{gcd}(3,q-1)=1$, $\eta(n)=1$ and $4n|(q-1)$, then there exists a self-dual code with parameters $[2n,n,d\ge n];$
\item  If $\text{gcd}(5,q-1)=1$, $\eta(n)=1$ and $4n|(q-1)$, then there exists a self-dual code with parameters $[2n,n,d\ge n-1].$
\end{enumerate}
\end{cor}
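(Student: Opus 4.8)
The plan is to deduce both items directly from Theorem~\ref{thm:hyper-elliptic}, which already packages the entire algebraic-geometry construction: the curve $\mathcal{X}\colon y^2=x^t$, the rational divisor $D=\sum_{\alpha\in U_n}\bigl(P_\alpha^{(1)}+P_\alpha^{(2)}\bigr)$ of degree $2n$, the divisor $G=nP_\infty$, the differential $\omega=dx/h$ with $h(x)=\prod_{\alpha\in U_n}(x-\alpha)$, and the verification (through Lemma~\ref{lem:char1}) that each residue $\mathrm{Res}_{P_i}(\omega)$ is a nonzero square, whence $a\cdot C_{\mathcal{L}}(D,G)$ is self-dual of length $2n$ and dimension $n$. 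Since the hypotheses of that theorem constrain $t$ only through ``$t$ a positive odd integer with $\gcd(t,q-1)=1$,'' for each item I need merely to exhibit the smallest admissible odd exponent whose coprimality condition coincides with the hypothesis of that item, and then read off the resulting distance estimate.

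For item~(1) I would take $t=3$, a positive odd integer with $\gcd(3,q-1)=1=\gcd(t,q-1)$; together with $\eta(n)=1$ and $4n\mid(q-1)$ this meets every hypothesis of Theorem~\ref{thm:hyper-elliptic}, which therefore returns a self-dual $[2n,n]$ code over $\F_q$. The underlying curve $y^2=x^3$ has genus $g=\frac{t-1}{2}=1$, so the Goppa-type lower bound for a self-dual algebraic geometry code of length $2n$ reads $d\ge n-g+1=n$, giving the asserted $[2n,n,d\ge n]$ code.

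For item~(2) I would take $t=5$, again odd and now with $\gcd(5,q-1)=1=\gcd(t,q-1)$, so Theorem~\ref{thm:hyper-elliptic} applies once more and produces a self-dual $[2n,n]$ code. The only new ingredient is the genus, which rises to $g=\frac{t-1}{2}=2$; the same bound $d\ge n-g+1$ now reads $d\ge n-1$, exactly as claimed.

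The step I expect to demand the most care is the minimum-distance bookkeeping in the genus-two case. Because $g=2$, the divisor $G$ no longer satisfies the equality $2G=D+(\omega)$ but only the inequality $2G\le D+(\omega)$, so one must first extract the self-orthogonal code furnished by the first part of Lemma~\ref{lem:char1} and then embed it into a self-dual code via Lemma~\ref{lem:embedding2-even} (or Lemma~\ref{lem:embedding2-even-g}); tracking the Singleton defect across this embedding is precisely what turns the ``$n$'' of the genus-one case into the ``$n-1$'' here. One should also confirm the mild range condition $2g-2<\deg(G)<2n$ needed to invoke Theorem~\ref{thm:distance}, which holds whenever $n>g-1$. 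Beyond that there is no genuine obstacle: self-duality, the length $2n$, and the dimension $n$ are all inherited unchanged from Theorem~\ref{thm:hyper-elliptic}, so the corollary is essentially the evaluation of its distance bound at the two smallest odd exponents $t=3,5$.
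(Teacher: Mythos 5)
Your proposal is correct and is essentially the paper's own (implicit) proof: the corollary is nothing more than Theorem~\ref{thm:hyper-elliptic} specialized at $t=3$ and $t=5$, and your genus accounting $d\ge n-g+1$ with $g=\frac{t-1}{2}$ reproduces exactly the stated bounds $n$ and $n-1$, confirming that the ``$+$'' in the theorem's printed bound $n+\frac{t-3}{2}$ is a sign slip for ``$-$'' (taken at face value the theorem would give the stronger $d\ge n+1$ for $t=5$, which still implies the claim). Note only that the genus values you invoke are the paper's implicit convention rather than geometric fact---$y^2=x^t$ with odd $t\ge 3$ is a cuspidal curve whose smooth model has genus $0$---so this is inherited from the paper and is not a gap in your argument relative to it.
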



\begin{exam} The hyper-elliptic curve over $\F_{25}$ defined by $$y^2=x^5,$$
has rational points in the set
$
\{
P_\infty=(1:0:0),
    ( 1: 1 :1),
    ( 1: 4 :1),
    ( w^{20}: w^2 :1),
    ( w^{20}: w^{14} :1),
    ( w^{16}: w^4 :1),
    ( w^{16}: w^{16} :1),
    ( 4: 2 :1),
    ( 4: 3 :1),
    ( w^8: w^8 :1),
    ( w^8: w^{20} :1),
    ( w^4: w^{10} :1),
    ( w^4: w^{22} )
\}
.$ Put $D=P_1+\cdots+P_{12}$ and $G=7P_\infty$. The set $\{\frac{x^iy^j}{z^{i+j}}| (i,j)\in \{    
    ( 0, 0 ),
    ( 0, 1 ),
    ( 1, 0 ),
    ( 1, 1 ),
    ( 2, 0 ),
    ( 3, 0 )
\}
\}$ is a basis for $C_{\cal L}(D,G)$, and thus its generator matrix is given by

$$
{\cal G}=\left(
\begin{array}{cccccccccccc}
   1&1&1&1&1&1&1&1&1&1&1&1\\
   1&4&  w^{2} &w^{14}&  w^{4} &w^{16}&2&3&  w^{8} &w^{20} &w^{10} &w^{22}\\
   1&1 &w^{20} &w^{20} &w^{16} &w^{16}&4&4&  w^{8}&  w^{8}&  w^{4}&  w^{4}\\
   1&4 &w^{22} &w^{10} &w^{20}&  w^{8}&3&2& w^{16}&  w^{4} &w^{14}&  w^{2}\\
   1&1 &w^{16} &w^{16}&  w^{8}&  w^{8}&1&1& w^{16} &w^{16}&  w^{8}&  w^{8}\\
   1&1&4&4&1&1&4&4&1&1&4&4\\
\end{array}
\right).
$$

By Magma \cite{Mag}, the code with generator matrix $a\cdot {\cal G}$ is self-dual, and it has parameters $[12,6,5]$ over $\F_{25},$ where 
$a=(
    1,
    3,
    w^{21},
    w^{15},
    3,
    1,
    w^{15},
    w^{21},
    1,
    3,
    w^{21},
    w^{15}
)
.$ 
\end{exam}
By considering curves in higher genus, we can release the gcd condition in Theorem \ref{thm:hyper-elliptic}.
\begin{thm}
Let $q=p^m$ with $p$ an odd prime.
\begin{enumerate}
 \item If $n$ is odd, $\eta(n)=1$ and $4n|(q-1)$, then there exists a self-dual code with parameters $[2n,n,d\ge \frac{n}{2}+2].$
\item If $n$ is even, $\eta(n)=1$ and $2n|(q-1)$, then there exists a self-dual code with parameters $[2n,n,d\ge \frac{n}{2}+2].$
\end{enumerate}
\end{thm}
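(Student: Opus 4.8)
The plan is to reuse wholesale the template of the preceding hyperelliptic theorem (Theorem~\ref{thm:hyper-elliptic}), the only genuinely new ingredient being the choice of curve. In that theorem the condition $\gcd(t,q-1)=1$ forces the model $y^2=x^t$ to behave like a rational (genus zero) curve, yielding near‑MDS codes; to \emph{release} it I would instead work on the same family $\mathcal X:\,y^2=x^t$ but with an \emph{odd} exponent tied to $n$, accepting the higher effective genus $g=\tfrac{t-1}{2}$ coming from the Weierstrass semigroup $\langle 2,t\rangle$ at the unique point $P_\infty$ at infinity. Concretely I would take $t=n$ when $n$ is odd (so $g=\tfrac{n-1}{2}$) and $t=n-1$ when $n$ is even (so $g=\tfrac{n}{2}-1$), both choices being odd and both generically violating $\gcd(t,q-1)=1$. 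The distance loss from $\approx n$ down to $\tfrac n2+2$ is exactly the price paid for this genus.

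With the curve fixed, the construction runs as in Theorem~\ref{thm:hyper-elliptic} and Lemma~\ref{lem:new-orthogonal}. First I would form the multiplicative subgroup $U_n\subset\F_q^*$ of order $n$ (it exists since $n\mid q-1$) and set $h(x)=x^n-1=\prod_{\alpha\in U_n}(x-\alpha)$, so that $h'(\alpha)=n\alpha^{\,n-1}=n\alpha^{-1}$. Each $\alpha\in U_n$ gives two rational points $P_\alpha^{(1)},P_\alpha^{(2)}$ on $\mathcal X$ precisely when $\alpha^{t}$ is a nonzero square, which I will arrange; I then put $D=\sum_{\alpha\in U_n}\bigl(P_\alpha^{(1)}+P_\alpha^{(2)}\bigr)$ of length $2n$, $\omega=\tfrac{dx}{h}$, and $G=\bigl(\tfrac{2g-2+2n}{2}\bigr)P_\infty$. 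Since $x-\alpha$ is a uniformizer at each $P_\alpha^{(i)}$ (these points are not branch points), one checks $v_{P_\alpha^{(i)}}(\omega)=-1$ and $\mathrm{Res}_{P_\alpha^{(i)}}(\omega)=1/h'(\alpha)=\alpha/n$, and that $2G=D+(\omega)$. Applying Lemma~\ref{lem:char1}(2) produces a self‑dual code $a\cdot C_{\mathcal L}(D,G)$ with $a_i^2=\mathrm{Res}_{P_i}(\omega)$, and Theorem~\ref{thm:distance} gives $d\ge 2n-\deg(G)=n-g+1$, i.e.\ $\tfrac n2+2$ (with $g=\tfrac n2-1$ for even $n$, and $g=\tfrac{n-1}{2}$ for odd $n$).

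The crux of the argument, and where I expect the real work to lie, is the squareness bookkeeping that \emph{replaces} the gcd hypothesis. I need each residue $\alpha/n$ to be a nonzero square and each $\alpha^{t}$ to be a square (so that $\alpha$ splits into two rational points), and I must also guarantee that the two residues attached to $\alpha$ are simultaneously squares, which forces $-1$ to be a square, i.e.\ $q\equiv1\pmod4$. The hypothesis $\eta(n)=1$ disposes of the factor $n$, while the divisibility conditions control the rest: writing $\alpha=\theta^{\frac{q-1}{n}j}$, every $\alpha\in U_n$ is a square iff $\tfrac{q-1}{n}$ is even. For $n$ even this is equivalent to $2n\mid q-1$, which moreover yields $4\mid q-1$ automatically (since then $4\mid 2n$), giving both requirements at once. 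For $n$ odd one already gets $\tfrac{q-1}{n}$ even from $n\mid q-1$, so the extra demand is only $4\mid q-1$; since $\gcd(4,n)=1$, the two together are exactly $4n\mid q-1$. This is precisely why the two cases carry the asymmetric divisibility conditions $4n\mid q-1$ and $2n\mid q-1$, and verifying this parity split cleanly—rather than any hard geometry—is the main obstacle. Once it is in place, the remainder is the same formal appeal to Lemma~\ref{lem:char1} and Theorem~\ref{thm:distance} already used above, so the final write‑up can legitimately read ``the proof follows that of Theorem~\ref{thm:hyper-elliptic}.''
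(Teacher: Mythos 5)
Your overall template (the subgroup $U_n$, $h(x)=x^n-1$, $\omega=dx/h$, $G$ a multiple of $P_\infty$, then Lemma~\ref{lem:char1} plus Theorem~\ref{thm:distance}) is exactly the paper's, and for $n$ odd you even choose the same curve $y^2=x^n$ that the paper's proof uses; for $n$ even the paper instead keeps the exponent and shrinks $U_n$ to $\{a^2\mid a\in\F_q,\ a^n=1\}$, while you lower the exponent to $n-1$ --- a reasonable repair, since $y^2=x^n$ is reducible when $n$ is even. The genuine problem is the step on which your entire dimension, self-duality and distance bookkeeping rests: the genus claim. In odd characteristic the curve $y^2=x^t$ with $t$ odd is \emph{rational}: the function $u=y/x^{(t-1)/2}$ satisfies $u^2=x$ and $y=u^t$, so $\F_q(\mathcal{X})=\F_q(u)$ and the smooth model has genus $0$, not $\frac{t-1}{2}$. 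Your Weierstrass-semigroup argument fails precisely because the affine ring $\F_q[x,y]/(y^2-x^t)$ is not integrally closed ($u$ is integral over it, with pole order $1$ at $P_\infty$), so $\langle 2,t\rangle$ is only a proper subsemigroup of the pole numbers at $P_\infty$; the true gap sequence there is empty.

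With $g=0$ the quantitative claims collapse. Since $\deg(\omega)=2g-2=-2$, the self-duality condition $2G=D+(\omega)$ forces $\deg G=n-1$, whereas your $G$ has degree $n+\frac{t-1}{2}-1$; hence Lemma~\ref{lem:char1}(2) does not apply, and by Riemann--Roch the code $C_{\mathcal{L}}(D,G)$ you construct has dimension $\deg G+1\neq n$, so it cannot be self-dual. The defect also cannot be cured by re-tuning $\deg G$: writing everything in the coordinate $u$ one finds $(\omega)=P_0-D+(2n-3)P_\infty$, so $D+(\omega)=P_0+(2n-3)P_\infty$ has odd coefficients and is not of the form $2G$ for \emph{any} divisor $G$ with this $\omega$. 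To genuinely obtain genus $\frac{t-1}{2}$ one needs a squarefree right-hand side, e.g.\ $y^2=h(x)=x^n-1$, and then the local analysis changes completely (the points above $U_n$ become ramification points, each $\alpha$ gives one place rather than two, and the residues must be recomputed), so the proof is not a routine transcription. In fairness, the paper's own proof commits the same error --- it also works on $y^2=x^n$ and implicitly assigns it genus about $n/2$ --- so you have faithfully reproduced the published argument; but as it stands that argument does not establish the stated parameters, and your explicit genus justification is exactly the point where it breaks.
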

\begin{proof} Assume that $n$ is odd. Let $U_n$ and $h(x)$ be defined as in Theorem \ref{thm:hyper-elliptic}. Consider an algebraic curve given by 
$${\cal X}:~y^2 = x^n.$$
Take $\omega =\frac{dx}{h}$ and $G=\frac{3n-4}{2}P_\infty.$ Then by Lemma \ref{lem:char1}, the code $a\cdot C_{\cal L}(D,G)$, where $a_i=\text{Res}_{P_i}(\omega)$, is self-dual with parameters $[2n,n,\frac{n}{2}+2],$ and this proves point 1). 

For point 2), we put $U_n=\{a^2| a \in \F_q, a^n=1\}.$ The rest follows from the same reasoning as the first part.
\end{proof}
\subsection{Self-dual codes from other curves}
In this subsection, we will construct self-dual codes over $\F_q$ from algebraic curves of high genus.

Let $q_0=p^m,q=q_0^2$ and ${\cal X}$ be the Hermitian
curve over $\F_{q}$ defined by
$${\cal X:~}y^{q_0} + y = x^{{q_0}+1}.$$ 
The Hermitian curve ${\cal X}$ has genus $g = \frac{q_0(q_0 -1)}{2},$ and for any $\alpha\in \F_{q},$ $x-\alpha$  has $q$ zeros of degree one in ${\cal X}$. All rational points  of the curve $\cal X$ different from the point at infinity are obtained in this way. Self-orthogonal AG codes from Hermitian curves were already considered in \cite{Stich88}. In what follows, we embed those codes into the self-dual ones and provide the parameters of the latter codes. We also construct new families of self-dual codes from this curve.

\begin{thm}\label{thm:hermitian-curve}
Let $p$ be an odd prime, $q_0=p^m$, $q=q_0^2,g=\frac{q_0(q_0-1)}{2}$. Put $d_0=\frac{s}{2}+1-g, s'=s+1,d_0'=\frac{s'}{2}-g.$
\begin{enumerate}
\item If $p|n,(n-1)|(q-1)$, then there exists a $q$-ary self-dual code with parameters $[s,\frac{s}{2},d\ge d_0]$ (resp. $[s',\frac{s'}{2},d\ge d_0']$), where $s=q_0n$ with $n$ even (resp. $n$ odd).
\item If $r|m$ , then there exists a $q$-ary self-dual code with parameters $[s',\frac{s'}{2},d\ge d_0']$, where $s=q_0p^r.$
\item If $(n-1)|(q-1)$, then there exists a $q$-ary self-dual code with parameters $[s',\frac{s'}{2},d\ge d_0']$, where $s=q_0(2n-1).$
\item If $n|\frac{q-1}{2}$, then there exists a $q$-ary self-dual code with parameters $[s,\frac{s}{2},d\ge d_0]$ (resp. $[s',\frac{s'}{2},d\ge d_0']$), where $s=q_0n$ with $n$ even (resp. $n$ odd).
\item If $1\le r< m$, then there exists a $q$-ary self-dual code with parameters $[s',\frac{s'}{2},d\ge d_0']$, where $s=q_0(2p^r-1).$
\item If $n=q_0-1,n\equiv 0 \pmod 4$, then there exists a $q$-ary self-dual code with parameters $[s',\frac{s'}{2},d\ge d_0']$, where $s=q_0\left(n(t+1)+1\right)$, for $t$ odd, $0\le t\le \frac{n}{2}+1.$
\item If $n=q_0-1,n\equiv 2 \pmod 4$, then there exists a $q$-ary self-dual code with parameters $[s',\frac{s'}{2},d\ge d_0']$, where $s=q_0\left(n(t+1)+1\right)$, $0\le t\le \frac{n}{2}.$

\item If $1\le r< m$ and $\frac{n(p^r+1)}{2(p^r-1)}$ is odd, then there exists a $q$-ary self-dual code with parameters $[s,\frac{s}{2},d\ge d_0]$ (resp. $[s',\frac{s'}{2},d\ge d_0']$), where $s=q_0(t+1)n$ (resp. $s=q_0\left((t+1)n+1\right)+1$), $n=\frac{q-1}{p^r+1}$, for $t$ odd , $1\le t\le p^r.$
\item If $1\le r< m$ and $\frac{n(p^r+1)}{2(p^r-1)}$ is even, then there exists a $q$-ary self-dual code with parameters $[s,\frac{s}{2},d\ge d_0]$  (resp. $[s',\frac{s'}{2},d\ge d_0']$), where $s=q_0(t+1)n$ (resp. $s=q_0\left((t+1)n+1\right)+1$), $n=\frac{q-1}{p^r+1}$, for $1\le t\le p^r.$
\item If $1\le r< m$, then there exists a $q$-ary self-dual code with parameters $[s,\frac{s}{2},d\ge d_0]$  (resp. $[s',\frac{s'}{2},d\ge d_0']$), where  $s=q_0(t+1)n$ (resp. $s=q_0\left((t+1)n+1\right)+1$), $n=\frac{q-1}{p^r-1},r|\frac{m}{2}$, for $1\le t\le p^r-2.$
\item  If $t$ is even such that $1\le t\le q_0$, then there exists a $q$-ary self-dual code with parameters $[s,\frac{s}{2},d\ge d_0]$, where $s=q_0(q_0t).$
\item  If $t$ is odd such that $1\le t\le q_0$, then there exists a $q$-ary self-dual code with parameters $[s',\frac{s'}{2},d\ge d_0']$, where $s=q_0(q_0t).$

\item If $r=p^k,k|m,0\le \ell<m/k,1\le t\le (r-1)/2$, then there exists a $q$-ary self-dual code with parameters $[s,\frac{s}{2},d\ge d_0]$, where $s=q_0(2tr^\ell).$
\item If $0\le \ell < 2m$, then there exists a $q$-ary self-dual code with parameters $[s,\frac{s}{2},d\ge d_0]$, where $ s=q_0(2p^\ell).$
\item If $r=p^k,k|m,0\le \ell<m/k,0\le t\le (r-1)/2$ or $(\ell, t)=(m/k,0)$, then there exists a $q$-ary self-dual code with parameters $[s',\frac{s'}{2},d\ge d_0']$, where $s=q_0(2t+1)r^\ell.$
\item If $0\le \ell < 2m$, then there exists a $q$-ary self-dual code with parameters $[s',\frac{s'}{2},d\ge d_0']$, where $ s=q_0p^\ell.$

\end{enumerate}

\end{thm}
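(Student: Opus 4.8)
The plan is to transport the genus-zero squareness analysis onto the Hermitian curve, where the only new ingredients are the geometry of $\mathcal{X}$ and a parity bookkeeping. First I would fix the differential $\omega=\frac{dx}{h(x)}$ with $h(x)=\prod_{\alpha\in U}(x-\alpha)$ for a set $U\subseteq\F_q$ specified case by case, and compute its divisor on $\mathcal{X}$. The essential geometric fact is that the covering $x\colon\mathcal{X}\to\mathbb{P}^1$ is \emph{unramified over every affine point}: the fibre over $x=\alpha$ is the solution set of $y^{q_0}+y=\alpha^{q_0+1}$, and since $y^{q_0}+y$ is an additive separable polynomial (its derivative is $1$), this fibre always consists of $q_0$ distinct points, all of which are rational exactly when $\alpha^{q_0+1}\in\F_{q_0}$, i.e. for every $\alpha\in\F_q$. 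Consequently $x-\alpha$ is a local uniformizer at each of the $q_0$ points $P^{(1)}_\alpha,\dots,P^{(q_0)}_\alpha$ above $\alpha$, and $dx$ has neither zeros nor poles at any affine point, so $(dx)=(2g-2)P_\infty=(q_0^2-q_0-2)P_\infty$ is supported only at infinity.

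With $\nu=|U|$ and $D=\sum_{\alpha\in U}\sum_{j=1}^{q_0}P^{(j)}_\alpha$ (so $s:=\deg D=q_0\nu$), the function $x-\alpha$ has divisor $\big(\sum_j P^{(j)}_\alpha\big)-q_0P_\infty$, whence $(h)=D-\nu q_0P_\infty$ and
$$(\omega)=(dx)-(h)=(2g-2+s)P_\infty-D.$$
This gives $v_{P_i}(\omega)=-1$ for every point $P_i$ occurring in $D$, so the first hypothesis of Lemma~\ref{lem:char1} holds. For the residues, since $t=x-\alpha$ uniformizes $P^{(j)}_\alpha$ and $h(x)=h'(\alpha)t+O(t^2)$, I obtain $\text{Res}_{P^{(j)}_\alpha}(\omega)=\tfrac{1}{h'(\alpha)}$, the very quantity that governs the genus-zero construction of Lemma~\ref{lem:new-orthogonal}. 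Thus the second hypothesis of Lemma~\ref{lem:char1} reduces to the \emph{same} requirement as in the genus-zero case, namely that $h'(\alpha)$ be a nonzero square in $\F_q$ for every $\alpha\in U$; for each of the sets $U$ below this has already been verified in Theorem~\ref{thm:6}, Theorem~\ref{thm:new-multicoset}, Lemma~\ref{lem:alpha_ij}, Lemma~\ref{lem:alpha_ij_2} and the families recorded in Table~\ref{table:1}.

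The theorem is then a case analysis driven entirely by the shape of $U$ and the parity of $s=q_0\nu$. For each item I would take $U$ to be the set realizing the corresponding genus-zero family: a subgroup-plus-zero set $U_{n-1}\cup\{0\}$ for items~1 and~4 (conditions $(n-1)\mid(q-1)$ and $n\mid\frac{q-1}{2}$), the multi-coset sets $U=U_n\cup\bigcup_i\alpha_iU_n$ built from Lemmas~\ref{lem:alpha_ij}--\ref{lem:alpha_ij_2} for items~8--10 (with $n=\frac{q-1}{p^r\pm1}$), the Theorem~\ref{thm:6}-type double-coset sets with $n=q_0-1$ for items~6--7, and the sets underlying the families of \cite{Yan}, \cite{FangFu} and \cite{Sok} for the remaining items; in every instance $|U|=\nu$ is the base length and $s=q_0\nu$. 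Because $q_0=p^m$ is odd, $s$ is even exactly when $\nu$ is even. When $s$ is even I take $G=\frac{2g-2+s}{2}P_\infty$, so that $2G=D+(\omega)$ and the second part of Lemma~\ref{lem:char1} (equivalently Lemma~\ref{lem:embedding2-even-g}) yields a self-dual code directly; in the stated ranges the hypotheses $2g-2<\deg G<s$ of Theorem~\ref{thm:distance} hold, giving $k=\frac{s}{2}$ and $d\ge s-\deg G=\frac{s}{2}+1-g=d_0$. When $s$ is odd I instead take $G=\frac{2g-3+s}{2}P_\infty$, obtain a self-orthogonal $[s,\frac{s-1}{2},\tfrac{s+3}{2}-g]$ code, and apply Lemma~\ref{lem:embedding2-odd-g} to embed it into a self-dual $[s',\frac{s'}{2},\ge\frac{s'}{2}-g]$ code with $s'=s+1$, so that $d\ge d_0'$.

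The routine but unavoidable burden is the bookkeeping matching each of the seventeen items to the correct $U$ and checking that the resulting $\nu$ has the claimed parity, so that the even/odd dichotomy above selects the right target parameters $[s,\frac{s}{2},\ge d_0]$ versus $[s',\frac{s'}{2},\ge d_0']$. The genuine obstacle, and the step I would carry out most carefully, is the divisor computation $(dx)=(2g-2)P_\infty$ together with the non-ramification of $x$ over all of $\F_q$: this is precisely what guarantees both $v_{P_i}(\omega)=-1$ and that $(\omega)+D$ is a multiple of $P_\infty$, and it is here that the characteristic-$p$ identities $d(y^{q_0}+y)=dy$ and $d(x^{q_0+1})=x^{q_0}\,dx$ (forcing $dy=x^{q_0}\,dx$ and the separability of the fibres) are indispensable. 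Once this geometric input is in place, the squareness of the residues is inherited wholesale from the genus-zero results, and no new number-theoretic work is required.
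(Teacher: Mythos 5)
Your proposal is correct and takes essentially the same route as the paper: both reduce, via $\omega=\frac{dx}{h}$ on the Hermitian curve, to checking that $h'(\alpha)$ is a nonzero square in $\F_q$ for case-by-case sets $U$ imported from the genus-zero constructions (subgroups, cosets of $U_n$, the multi-coset sets of Theorem \ref{thm:new-multicoset}, and the sets of \cite{Yan} and \cite{FangFu}), and then invoke Lemma \ref{lem:char1} together with the embedding Lemmas \ref{lem:embedding2-odd-g} and \ref{lem:embedding2-even-g} according to the parity of $s$. Your write-up in fact supplies details the paper leaves implicit, namely the non-ramification of $x$ over affine points, the computation $(dx)=(2g-2)P_\infty$ hence $(\omega)=(2g-2+s)P_\infty-D$, and $\mathrm{Res}_{P^{(j)}_\alpha}(\omega)=1/h'(\alpha)$.
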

\begin{proof}It should be noted that each $x$-component $\alpha\in \F_q$ gives $q$ places of degree one. Let $U$ be a subset of $\{\alpha\in \F_q| \beta^{q_0}+\beta=\alpha^{q_0+1}\}$ such that $q|U|=s.$ Put $$h(x)=\prod\limits_{\alpha\in U}(x-\alpha)\text{ and }\omega =\frac{dx}{h}.$$ For each case, it is enough to prove that the residue $\text{Res}_{P_\alpha}(\omega)$ of $\omega$ at place $P_\alpha$ is a nonzero square for any $\alpha\in U$, that is, $h'(\alpha)$ is a nonzero square in $\F_q$. Take $U$ as follows.
\begin{itemize}
\item for 1), $U=\{\alpha\in \F_q| \alpha^n=\alpha\},$
\item for 2), $U=\{\alpha \in \F_q|\alpha^{p^r}=\alpha\},$
\item for 3), $U=U_{n-1}\cup \alpha_1U_{n-1}$, where $U_{n-1}=\{\alpha \in \F_q|\alpha^{n-1}=1\}$ and $\alpha_1\in \F_q\backslash U_{n-1}$ such that $1-\alpha_1^{n-1}$ is a square,
\item for 4), $U=\{\alpha \in \F_q|\alpha^n=1\},$
\item for 5), $U=U_{n}\cup \alpha_1U_{n}\cup \{0\}$, where $n=p^r-1, U_{n}=\{\alpha \in \F_q|\alpha^{n}=1\}$ and $\alpha_1\in \F_q\backslash U_{n}$ such that $1-\alpha^n$ is a square,
\item for  6)--10), take $U=\{0\}\cup V\text { or }U=V, \text{ where }V=U_{n}\cup \alpha_1U_{n}\cup \cdots \cup \alpha_tU_n$, $U_{n}=\{\alpha \in \F_q|\alpha^{n}=1\}$ and $\alpha_1,\hdots,\alpha_t\in \F_q\backslash U_{n}$ as in Theorem \ref{thm:new-multicoset},
\item for 10)--12), label the elements of $\F_{q_0}$ as $a_1,\hdots ,a_{q_0}. $ For some fixed element $\beta\in \F_q\backslash \F_{q_0}, $ take $U=\{a_k\beta+a_j|1\le k, j\le q_0\},$
\item for 13)--16),  label the element of $\F_r$ as $a_0,\hdots,a_{r-1}$, take $H$ as an $\F_r$-subspace and set $H_i=H+a_i \beta$ for some fixed element 
$\beta \in \F_q\backslash \F_r.$ Put $U=H_0\cup \cdots \cup H_{2t-1}$ or $U=H_0\cup \cdots \cup H_{2t}$. 
\end{itemize}

For 1)--5), it can be easily checked that $h'(\alpha)$ is a square for any $\alpha\in U.$

For 6)--10), it has been already checked, in Theorem \ref{thm:new-multicoset}, that $h'(\alpha)$ is a square for any $\alpha\in U.$

For 11)--12), it was proved in \cite[Theorem 2]{Yan} that $h'(\alpha)$ is a square for any $\alpha \in U.$

For 13)--16),  it was proved in \cite[Theorem 4]{FangFu} that $h'(\alpha)$ is a square for any $\alpha \in U.$
\end{proof}
\begin{exam} The Hermitian curve defined over $\F_{9}$ has all rational points in the set
$
\{
P_\infty=(1:0:0),
    ( 0: 0 :1),
    ( 0: w^2 :1),
    ( 0: w^6 :1),
    ( 1: w :1),
    ( 1: w^3 :1),
    ( 1: 2 :1),
    ( w^2: w :1),
    ( w^2: w^3 :1),
    ( w^2: 2 :1),
    ( 2: w :1),
    ( 2: w^3 :1),
    ( 2: 2 :1),
    ( w^6: w :1),
    ( w^6: w^3 :1),
    ( w^6: 2 :1),
    ( w: 1 :1),
    ( w: w^5 :1),
    ( w: w^7 :1),
    ( w^3: 1 :1),
    ( w^3: w^5 :1),
    ( w^3: w^7 :1),
    ( w^5: 1 :1),
    ( w^5: w^5 :1),
    ( w^5: w^7 :1),
    ( w^7: 1 :1),
    ( w^7: w^5 :1),
    ( w^7: w^7 :1)
\}.$
Put $D=P_1+\cdots+P_{27},G=15P_\infty$. The code $C_{\cal L}(D,G)$  has parameters $[27,13,12]$. The set
$\{\frac{x^iy^j}{z^{i+j}}| (i,j)\in \{    
     ( 0, 0 ),
    ( 0, 1 ),
    ( 0, 2 ),
    ( 0, 3 ),
    ( 1, 0 ),
    ( 1, 1 ),
    ( 1, 2 ),
    ( 1, 3 ),
    ( 2, 0 ),
    ( 2, 1 ),$ $
    ( 2, 2 ),
    ( 3, 0 ),
   ( 3, 1 )
\}\}$
 is a basis for the code $C_{\cal L}(D,G)$, and thus its generator matrix is given by

{\tiny

$$
{\cal G}=\left(
\begin{array}{ccccccccccccccccccccccccccc}
1 1 1 1 1 1 1 1 1 1 1 1 1 1 1 1 1 1 1 1 1 1 1 1 1 1 1\\
0  w^2  w^6 w  w^3  2 w  w^3  2 w  w^3  2 w  w^3  2 1  w^5  w^7 1  w^5  w^7 1  w^5  w^7 1  w^5
  w^7\\
0  2  2  w^2  w^6 1  w^2  w^6 1  w^2  w^6 1  w^2  w^6 1 1  w^2  w^6 1  w^2  w^6 1  w^2  w^6 1
  w^2  w^6\\
0  w^6  w^2  w^3  w 2  w^3  w 2  w^3  w 2  w^3  w 2 1  w^7  w^5 1  w^7  w^5 1  w^7  w^5 1  w^7  w^5\\
0 0 0 1 1 1  w^2  w^2  w^2  2  2  2  w^6  w^6  w^6  w w w  w^3  w^3  w^3  w^5  w^5  w^5  w^7  w^7
  w^7\\
0 0 0 w  w^3  2  w^3  w^5  w^6  w^5  w^7 1  w^7 w  w^2 w  w^6 1  w^3 1  w^2  w^5  w^2  2  w^7  2
  w^6\\
0 0 0  w^2  w^6 1  2 1  w^2  w^6  w^2  2 1  2  w^6 w  w^3  w^7  w^3  w^5 w  w^5  w^7  w^3  w^7 w
  w^5\\
0 0 0  w^3  w 2  w^5  w^3  w^6  w^7  w^5 1 w  w^7  w^2  w 1  w^6  w^3  w^2 1  w^5  2  w^2  w^7
  w^6  2\\
0 0 0 1 1 1  2  2  2 1 1 1  2  2  2  w^2  w^2  w^2  w^6  w^6  w^6  w^2  w^2  w^2  w^6  w^6  w^6\\
0 0 0 w  w^3  2  w^5  w^7 1 w  w^3  2  w^5  w^7 1  w^2  w^7 w  w^6  w^3  w^5  w^2  w^7 w  w^6
  w^3  w^5\\
0 0 0  w^2  w^6 1  w^6  w^2  2  w^2  w^6 1  w^6  w^2  2  w^2  2 1  w^6 1  2  w^2  2 1  w^6 1  2\\
0 0 0 1 1 1  w^6  w^6  w^6  2  2  2  w^2  w^2  w^2  w^3  w^3  w^3  w w w  w^7  w^7  w^7  w^5  w^5
  w^5\\
0 0 0 w  w^3  2  w^7 w  w^2  w^5  w^7 1  w^3  w^5  w^6  w^3 1  w^2 w  w^6 1  w^7  2  w^6  w^5
  w^2  2\\
\end{array}
\right).
$$
}
Take ${\cal G'}=\left(\begin{array}{c} {\cal G}~0\\g_{14} \end{array}\right),$ where 
{ $g_{14}=(0, 1, 1,  2,  2, 1,  2,  2, 1,  2,  2, 1,  2,  2, 1, 1,  2,  2,$ $ 1,  2,  2, 1,  2,  2, 1,  2,  2, 1).$} 
By Magma \cite{Mag}, the code with generator matrix $a\cdot{\cal G'}$ is an optimal $[28,14,12]$ self-dual code with new parameters, where $a_i= w^6,1\le i\le 28.$ Other parameters from different constructions in Theorem \ref{thm:hermitian-curve} are given in Table \ref{table:hermitian-code}.
\end{exam}

\begin{table}[h]

\caption{Self-dual codes of length $s$  from Hermitian curves defined over $\F_9$}\label{table:hermitian-code}
{\scriptsize
$$
\begin{array}{cccc|ccc}
\text{Theorem }\ref{thm:hermitian-curve}&\text{Length } s&\text{Distance }&\text{Lower bound}&\text{Extended length}&\text{Distance}&\text{Lower bound}\\
\hline
\multirow{2}{3em}{1)}&3.3&3&3&10&3&2\\
&3.9&12&12&28&12&11\\
\hline
\multirow{1}{3em}{2)}&3.3&3&3&10&3&2\\
\hline
\multirow{1}{3em}{3)}&3(2.3-1)&6&6&16&6&5\\
\hline
\multirow{2}{3em}{4)}&3.2&3&1&-&-&-\\
&3.4&4&4&-&-&-\\
\hline
\multirow{1}{3em}{5)}&3(1+1)2&4&4&-&-&-\\
\hline
\multirow{1}{3em}{6)}&3\left((1+1)2+1\right)&6&6&16&6&5\\
\hline
\multirow{1}{3em}{11)}&3\left(3.2\right)&7&7&-&-&-\\
\end{array}
$$
}
\end{table}

\begin{thm}\label{thm:hermitian-2} Let $q_0=p^m, q=q_0^2$ be an odd prime power, $g=\frac{(q_0-1)^2}{4}$ and $s={q_0}\frac{q_0^2+1}{2}$. Then,
\begin{enumerate}
\item  there exists a $[s,\frac{s}{2},d\ge \frac{s}{2}-g+1]$ self-dual code if $s$ is even, and
\item there exists a $[s+1,\frac{s+1}{2},d\ge \frac{s+1}{2}-g]$ self-dual code if $s$ is odd.
\end{enumerate}
\end{thm}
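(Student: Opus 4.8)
The plan is to realise these parameters from an Artin--Schreier cover of the projective line whose genus is exactly $g=\frac{(q_0-1)^2}{4}$, namely the quotient Hermitian curve
\[
{\cal X}:\ y^{q_0}+y=x^{\frac{q_0+1}{2}}.
\]
Since $\frac{q_0+1}{2}$ divides $q_0+1$, this curve is maximal over $\F_q=\F_{q_0^2}$ with a single place $P_\infty$ at infinity, and the genus formula for $y^{q_0}+y=x^{m}$, namely $\frac{(q_0-1)(m-1)}{2}$, gives $\frac{(q_0-1)^2}{4}$ when $m=\frac{q_0+1}{2}$. First I would pin down which $x$-components carry rational points. For $\alpha\in\F_q$ the fibre equation $y^{q_0}+y=\alpha^{\frac{q_0+1}{2}}$ has $q_0$ solutions, and otherwise none, according to Lemma~\ref{lem:Hilbert90}, i.e.\ according to whether $\text{Tr}_{\F_q/\F_{q_0}}\!\big(\alpha^{\frac{q_0+1}{2}}\big)=0$. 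The crux identity is that $\text{Tr}_{\F_q/\F_{q_0}}\!\big(\alpha^{\frac{q_0+1}{2}}\big)=\alpha^{\frac{q_0+1}{2}}\big(1+\alpha^{\frac{q-1}{2}}\big)=\alpha^{\frac{q_0+1}{2}}\big(1+\eta(\alpha)\big)$, so the admissible set is $U=\{0\}\cup\{\alpha\in\F_q^*:\eta(\alpha)=-1\}$, of size $\frac{q_0^2+1}{2}$. Each such $\alpha$ yields $q_0$ places of degree one, so if $D$ is the sum of all of them then $\deg D=q_0\cdot\frac{q_0^2+1}{2}=s$, matching the maximality count $\#{\cal X}(\F_q)=s+1$.

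The decisive simplification is that $h(x)=\prod_{\alpha\in U}(x-\alpha)$ is available in closed form: the nonsquares of $\F_q$ are precisely the roots of $x^{\frac{q-1}{2}}+1$, whence $h(x)=x\big(x^{\frac{q-1}{2}}+1\big)=x^{\frac{q+1}{2}}+x$ and $h'(x)=2^{-1}x^{\frac{q-1}{2}}+1$ in $\F_q$ (using $q\equiv 0\bmod p$). Evaluating gives $h'(0)=1$ and $h'(\alpha)=2^{-1}$ for every nonsquare $\alpha$, so $h'(\alpha)\in\{1,2^{-1}\}$ on all of $U$. These are nonzero squares in $\F_q$ precisely because $2\in\F_{q_0}^*$ and every element of $\F_{q_0}^*$ is a square in $\F_{q_0^2}^*$ (indeed $\beta^{\frac{q_0^2-1}{2}}=(\beta^{q_0-1})^{\frac{q_0+1}{2}}=1$ for $\beta\in\F_{q_0}^*$). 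Taking $\omega=\frac{dx}{h}$, the residue $\text{Res}_{P_\alpha^{(j)}}(\omega)=1/h'(\alpha)$ is therefore a nonzero square at every point of $D$, which is exactly the hypothesis needed in Lemma~\ref{lem:char1}.

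Finally I would assemble the code. The cover is unramified over every finite place, since the $y$-derivative of $y^{q_0}+y-x^{\frac{q_0+1}{2}}$ is $1$; hence $x-\alpha$ is a uniformiser at each $P_\alpha^{(j)}$, $v_{P_\alpha^{(j)}}(\omega)=-1$, and $(\omega)=(2g-2+s)P_\infty-D$. When $s$ is even, set $G=\big(g-1+\tfrac{s}{2}\big)P_\infty$; then $2G=D+(\omega)$, Lemma~\ref{lem:char1}(2) yields a self-dual code, and Theorem~\ref{thm:distance} (its hypothesis $2g-2<\deg G<s$ holding trivially) gives the parameters $[s,\tfrac{s}{2},\ge\tfrac{s}{2}-g+1]$. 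When $s$ is odd, set $G=\tfrac{2g-3+s}{2}P_\infty$, so $2G\le D+(\omega)$ and Lemma~\ref{lem:char1}(1) produces a self-orthogonal $[s,\tfrac{s-1}{2}]$ code; the embedding argument of Lemma~\ref{lem:embedding2-odd-g} (applied with $n=s=\deg D$) then gives a self-dual $[s+1,\tfrac{s+1}{2},\ge\tfrac{s+1}{2}-g]$ code, the distance bound following from $C_{\cal L}(D,G)^\perp=a\cdot C_{\cal L}\big(D,\tfrac{2g-1+s}{2}P_\infty\big)$ by Lemma~\ref{lem:dual2}. The step I expect to carry the real weight is the divisor bookkeeping at $P_\infty$: because the ramification there is wild, the pole order of $\omega$ at infinity (equivalently the coefficient $2g-2+s$) must be justified carefully rather than read off naively from $\deg(dx)$, and confirming maximality to fix $\deg D=s$ is intertwined with it; by contrast the residue condition, which is usually the delicate part, collapses here to the single clean fact that $2$ is a square in $\F_{q_0^2}$.
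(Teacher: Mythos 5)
Your overall architecture is exactly the paper's: the same quotient-Hermitian curve $y^{q_0}+y=x^{\frac{q_0+1}{2}}$ of genus $\frac{(q_0-1)^2}{4}$, a closed-form $h(x)$ built from the admissible $x$-coordinates, the observation that the relevant values of $h'$ lie in $\F_{q_0}^*$ (hence are squares in $\F_q$), then Lemma~\ref{lem:char1} for self-duality when $s$ is even and Lemma~\ref{lem:embedding2-odd-g} for the embedding when $s$ is odd. However, there is a genuine error in the step you call the ``crux identity'': you apply a trace-zero solvability criterion to the equation $y^{q_0}+y=c$ over $\F_q=\F_{q_0^2}$. In odd characteristic the map $y\mapsto y^{q_0}+y$ is not an Artin--Schreier operator of the form $\sigma(y)-y$; it \emph{is} the relative trace $\mathrm{Tr}_{\F_q/\F_{q_0}}$, which surjects onto $\F_{q_0}$ with fibres of size $q_0$. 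So $y^{q_0}+y=c$ is solvable in $\F_q$ if and only if $c\in\F_{q_0}$, not if and only if $\mathrm{Tr}_{\F_q/\F_{q_0}}(c)=0$. For $\alpha\neq 0$ this reads $(\alpha^{\frac{q_0+1}{2}})^{q_0}=\alpha^{\frac{q_0+1}{2}}$, i.e.\ $\alpha^{\frac{q-1}{2}}=1$, i.e.\ $\eta(\alpha)=1$. Hence the admissible set is $U=\{0\}\cup\{\alpha\in\F_q^*:\eta(\alpha)=1\}$ --- the squares, as the paper states --- and your $U$ consists precisely of the $x$-coordinates over which the curve has \emph{no} rational points. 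The paper's own $\F_9$ example makes this concrete: for $q_0=3$, $\alpha=1$ one has $\mathrm{Tr}_{\F_9/\F_3}(1)=2\neq 0$, yet $y^3+y=1$ has the three solutions $y\in\{w,w^3,2\}$, giving the listed points $(1,w),(1,w^3),(1,2)$. As written, your divisor $D$ is supported on places of degree greater than one, so $C_{\mathcal L}(D,G)$ is not a length-$s$ code over $\F_q$ and the construction collapses. (Lemma~\ref{lem:Hilbert90} was safe to use in the paper's characteristic-two sections only because there $y^2+y=y^2-y$.)

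The repair is short, and either of two routes works. Route one: keep the curve and replace your $U$ by $\{0\}\cup\{\text{squares}\}$; then $h(x)=x^{\frac{q+1}{2}}-x$, so $h'(\alpha)=\tfrac{q+1}{2}\alpha^{\frac{q-1}{2}}-1=2^{-1}-1=-2^{-1}$ for nonzero square $\alpha$ and $h'(0)=-1$, all nonzero squares because $q=q_0^2\equiv 1\pmod 4$ and $\F_{q_0}^*\subseteq(\F_q^*)^2$; this is exactly the paper's proof, and the rest of your argument (choice of $G$, Lemma~\ref{lem:char1}, the embedding and the distance bounds) goes through unchanged. Route two: keep your set $U=\{0\}\cup\{\text{nonsquares}\}$ but change the curve to $y^{q_0}-y=x^{\frac{q_0+1}{2}}$, for which the additive Hilbert~90 criterion $\mathrm{Tr}_{\F_q/\F_{q_0}}(c)=0$ is the correct one and your computation of the admissible set, the point count, and the values $h'\in\{1,2^{-1}\}$ are all valid; the genus is the same, so the parameters are unaffected. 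In either case the sign of the Artin--Schreier term and the solvability criterion must be matched to one another, which is the single point your write-up gets wrong.
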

\begin{proof}
Consider an algebraic curve defined by
$${\cal X}: y^{q_0}+y=x^{\frac{{q_0}+1}{2}}.$$
The curve has genus $g=\frac{(q_0-1)^2}{4}$.
Put $$U=\{\alpha\in \F_q|\exists \beta\in \F_q\text{ such that }\beta^{q_0}+\beta=\alpha^{\frac{q_0+1}{2}}\}.$$ 
The set $U$ is the set of $x$-component solutions to the Hermitian curve whose elements are squares in $\F_q.$ There are $\frac{q_0^2+1}{2}$ square elements in $\F_q$, and this gives rise to ${q_0}\frac{q_0^2+1}{2}$ rational places. Write
$$h(x)=\prod\limits_{\alpha\in U}(x-\alpha)\text{ and } \omega=\frac{dx}{h}.$$
Then $h(x)=x^n-x$, where $n=\frac{q_0^2+1}{2},$ and thus $h'(x)=nx^{n-1}-1$. Since $q$ is a square, we have that $h'(\alpha)=n-1$ is a square for any $\alpha\in U \backslash \{0\}$.
Put $D=\sum\limits_{\alpha\in U}\left(P_\alpha^{(1)}+\cdots+P_\alpha^{(q_0)}\right)=P_1+\cdots+P_s,s=q_0\frac{q_0^2+1}{2}.$ Set 
$$G=
\begin{cases}
(g-1+\frac{s}{2})P_{\infty}\text{ if } s\text{ is even,}\\
(g-1+\frac{s-1}{2})P_{\infty}\text{ if } s\text{ is odd.}\\
\end{cases}
$$ Then the residue $\text{Res}_{P_{\alpha}}(\omega)=\frac{1}{h'(P_{\alpha})}$ is a square for any $\alpha\in U$, and by Lemma \ref{lem:char1}, the constructed code $a\cdot C_{\cal L}(D,G)$ is self-orthogonal, where $a_i^2=\text{Res}_{P_i}(\omega).$ If $s$ is even, then point 1) follows, otherwise the self-orthogonal code can be embedded into a self-dual code using Lemma \ref{lem:embedding2-odd-g}, and thus point 2) follows.
\end{proof}

\begin{exam} There exist self-dual codes with parameters 
$[ 16, 8, \ge 7 ]_{3^2
}$, $[ 66, 33, \ge 29 ]_{
5^2
}$, $[ 176, 88, \ge 79 ]_{
7^2
}$, $[ 370, 185, \ge 169 ]_{
9^2
}$, $[ 672, 336, \ge 311 ]_{
11^2
}$, $[ 1106, 553, \ge 517 ]_{
13^2
}$, $[ 2466, 1233, \ge 1169 ]_{
17^2
}$, $[ 3440, 1720, \ge 1639 ]_{
19^2
}$, $[ 7826, 3913, \ge 3769 ]_{
25^2}$.
We now calculate the exact distance of the self-dual code over $\F_{3^2}$. The algebraic curve over $\F_{9}$ defined by

$$y^3+y=x^2$$

has all rational points in the set
$
\{
P_\infty=(1:0:0),
    ( 0: 0 :1),
    ( 0: w^2 :1),
    ( 0: w^6 :1),
    ( 1: w :1),
    ( 1: w^3 :1),
    ( 1: 2 :1),
    ( 2: w :1),
    ( 2: w^3 :1),
    ( 2: 2 :1),
    ( w^2: 1 :1),
    ( w^2: w^5 :1),
    ( w^2: w^7 :1),
    ( w^6: 1 :1),
    ( w^6: w^5 :1),
    ( w^6: w^7:1)
\}.$
Put $D=P_1+\cdots+P_{15},G=7P_\infty$. The code $C_{\cal L}(D,G)$  has parameters $[15,7,8]$. 

The set
$\{\frac{x^iy^j}{z^{i+j}}| (i,j)\in \{    
     ( 0, 0 ),
    ( 0, 1 ),
    ( 0, 2 ),
    ( 0, 3 ),
    ( 1, 0 ),
    ( 1, 1 ),
    ( 1, 2 )
\}\}$
 is a basis for the code $C_{\cal L}(D,G)$, and thus its generator matrix is given by

$$
{\cal G}=\left(
\begin{array}{ccccccccccccccc}
  1&1&1&1&1&1&1&1&1&1&1&1&1&1&1\\
  0& w^2& w^6&w& w^3&2&w& w^3&2&1& w^5& w^7&1& w^5& w^7\\
  0&2&2& w^2& w^6&1& w^2& w^6&1&1& w^2& w^6&1& w^2& w^6\\
  0& w^6& w^2& w^3&w&2& w^3&w&2&1& w^7& w^5&1& w^7& w^5\\
  0&0&0&1&1&1&2&2&2& w^2& w^2& w^2& w^6& w^6& w^6\\
  0&0&0&w& w^3&2& w^5& w^7&1& w^2& w^7&w& w^6& w^3& w^5\\
  0&0&0& w^2& w^6&1& w^6& w^2&2& w^2&2&1& w^6&1&2\\
\end{array}
\right).
$$


Take $${\cal G'}=\left(\begin{array}{c} a\cdot {\cal G}~0\\g_{8} \end{array}\right),$$ where 
$a=( w^6,
    w^6,
    w^6,
    1,
    1,
    1,
    1,
    1,
    1,
    1,
    1,
    1,
    1,
    1,
    1
)$,
 $g_{8}=
(    0,   w,   0,   0,   0,   0,   0,   0, w^7, $ $  0, w^5, w^2, w^5, w^7, w^7,   2).$
By Magma \cite{Mag}, the code with generator matrix ${\cal G'}$ is a $[16,8,8]$ self-dual code, which is optimal and has new parameters.

\end{exam}

\begin{thm} Let $q_0=p^m, q=q_0^2$ be an odd prime power, $g=\frac{(q_0-1)^2}{4}$ and $s={q_0}\frac{q_0^2-1}{2}$. Then,
\begin{enumerate}
\item  there exists a $[s,\frac{s}{2},d\ge \frac{s}{2}-g+1]$ self-dual code if $s$ is even, and
\item there exists a $[s+1,\frac{s+1}{2},d\ge \frac{s+1}{2}-g]$ self-dual code if $s$ is odd.
\end{enumerate}
\end{thm}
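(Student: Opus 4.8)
The plan is to repeat the argument of Theorem~\ref{thm:hermitian-2} on the same curve
$${\cal X}:\ y^{q_0}+y=x^{\frac{q_0+1}{2}},\qquad g=\frac{(q_0-1)^2}{4},$$
changing only the defining set $U$: instead of \emph{all} square $x$-components, I would take $U$ to be the set of \emph{nonzero} squares of $\F_q$. First I would confirm that each nonzero square is genuinely an $x$-component of rational points. Writing $\alpha=\gamma^2$, one has $\alpha^{\frac{q_0+1}{2}}=\gamma^{q_0+1}=\gamma\cdot\gamma^{q_0}\in\F_{q_0}$; since $y\mapsto y^{q_0}+y$ is the surjective trace $\text{Tr}_{\F_q/\F_{q_0}}$ with every fibre of size $q_0$, each $\alpha\in U$ lifts to exactly $q_0$ rational places $P^{(1)}_\alpha,\dots,P^{(q_0)}_\alpha$. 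As $\F_q$ has $\frac{q-1}{2}=\frac{q_0^2-1}{2}$ nonzero squares, the total number of places is $s=q_0\frac{q_0^2-1}{2}$, exactly the length sought. This is precisely Theorem~\ref{thm:hermitian-2} with the point $x=0$ deleted.

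The decisive computation is the squareness of $h'(\alpha)$. Since $U$ is the zero set of $x^{\frac{q-1}{2}}-1$, I would write
$$h(x)=\prod_{\alpha\in U}(x-\alpha)=x^{n}-1,\qquad n=\tfrac{q_0^2-1}{2},$$
so that $h'(x)=nx^{n-1}$ and $h'(\alpha)=n\alpha^{n-1}$ for $\alpha\in U$. Here $\alpha^{n-1}$ is a square because $\alpha$ itself is a nonzero square, while $n$ is a nonzero square in $\F_q$: modulo $p$ it equals $-\tfrac12\neq 0$, hence lies in $\F_p^*\subseteq\F_{q_0}^*$, and every element of $\F_{q_0}^*$ is a square in $\F_q=\F_{q_0}^2$ (indeed $c^{\frac{q-1}{2}}=(c^{q_0-1})^{\frac{q_0+1}{2}}=1$ for $c\in\F_{q_0}^*$). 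I expect this squareness check, together with confirming that $\omega=\frac{dx}{h}$ has $v_{P_i}(\omega)=-1$ and residues $\text{Res}_{P_\alpha}(\omega)=\frac{1}{h'(\alpha)}$ that are squares, to be the only genuinely new point; everything else is mechanical. The pole order is correct because $h$ has simple zeros and, the defining equation being polynomial in $x$, the map $x\colon{\cal X}\to\mathbb{P}^1$ is unramified over every finite point and ramifies only over $x=\infty$, so $x-\alpha$ is a local uniformizer at each $P_\alpha$.

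With these verified, I would assemble the code exactly as before. Set $D=\sum_{\alpha\in U}\bigl(P^{(1)}_\alpha+\cdots+P^{(q_0)}_\alpha\bigr)=P_1+\cdots+P_s$ and rescale by $a$ with $a_i^2=\text{Res}_{P_i}(\omega)$. For $s$ even, taking $G=\bigl(g-1+\tfrac{s}{2}\bigr)P_\infty$ and applying Lemma~\ref{lem:char1} (equivalently, Lemma~\ref{lem:embedding2-even-g}) yields a self-dual code whose minimum distance satisfies $d\ge s-\deg(G)=\frac{s}{2}-g+1$ by Theorem~\ref{thm:distance}; this is point~1. For $s$ odd, taking $G=\bigl(g-1+\tfrac{s-1}{2}\bigr)P_\infty$ produces a self-orthogonal $[s,\frac{s-1}{2}]$ code which, by Lemma~\ref{lem:embedding2-odd-g}, embeds into a self-dual $[s+1,\frac{s+1}{2}]$ code with $d\ge\frac{s+1}{2}-g$; this is point~2. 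The only hypothesis to keep track of is the range $2g-2<\deg(G)<s$ required by Theorem~\ref{thm:distance}, which holds because $s=q_0\frac{q_0^2-1}{2}$ grows like $q_0^3/2$ while $g=\frac{(q_0-1)^2}{4}$ grows only like $q_0^2/4$.
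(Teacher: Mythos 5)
Your proposal is correct and takes essentially the same approach as the paper: the paper's own proof simply reuses the setting of Theorem~\ref{thm:hermitian-2} with $U$ replaced by $U'=U\backslash\{0\}$ (the nonzero squares), so that $h(x)=x^{(q_0^2-1)/2}-1$, and declares that the rest follows by the same reasoning. Your write-up merely makes explicit the details the paper leaves implicit — the count of $q_0\frac{q_0^2-1}{2}$ rational places, the squareness of $h'(\alpha)=n\alpha^{n-1}$ with $n\equiv -\tfrac12\pmod p$, and the applications of Lemma~\ref{lem:char1} and Lemma~\ref{lem:embedding2-odd-g} — all of which match the intended argument.
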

\begin{proof} Consider the same setting as the proof of Theorem \ref{thm:hermitian-2}. Take $U'=U\backslash \{0\},$ and write 
$$h(x)=\prod\limits_{\alpha\in U'}(x-\alpha)\text{ and } \omega=\frac{dx}{h}.$$
The rest follows with the same reasoning as that in Theorem \ref{thm:hermitian-2}.
\end{proof}

\begin{exam}There exist self-dual codes with parameters 
$[ 12, 6, 6 ]_{
3^2
}$, $[ 60, 30, \ge 27 ]_{
5^2
}$, $[ 168, 84, \ge 76 ]_{
7^2
}$, $[ 360, 180, \ge 165 ]_{
9^2
}$, $[ 660, 330, \ge 306 ]_{
11^2
}$, $[ 1092, 546, \ge 511 ]_{
13^2
}$, $[ 2448, 1224, \ge 1161 ]_{
17^2
}$, $[ 3420, 1710, \ge 1630 ]_{
19^2
}$, $[ 7800, 3900, \ge 3757 ]_{
25^2}.$
\end{exam}

We update parameters of MDS self-dual codes from the previous constructions in Table \ref{table:2}.

\begin{table}[th]
\caption{MDS self-dual codes of length $n$ over $\F_q$, $-$: no self-dual code exists with such a pair $(n,q)$, $+$: known parameters, $?$: unknown parameters, $^*$: new parameters}
{\scriptsize
$$
\begin{array}{c|c|c|c|c|c|c|c|c|c|c|c|c|c|c|c|c|c|c|c|c}
n/q&11&13&16&17&19&23&25&27&29&31&32&37&41&43&47&49&53&61&73&81\\
\hline
2&-&+&+&+&-&-&+&-&+&-&+&+&+&-&-&+&+&+&+&+\\
\hline
4&+&+&+&+&+&+&+&+&+&+&+&+&+&+&+&+&+&+&+&+\\
\hline
6&-&+&+&+&-&-&+&-&+&-&+&+&+&-&-&+&+&+&+&+\\
\hline
8&+&+&+&+&+&+&+&+&+&+&+&+&+&+&+&+&+&+&+&+\\
\hline
10&-&+&+&+&-&-&+&-&+&-&+&+&+&-&-&+&+&+&+&+\\
\hline
12&+&6&+&+&+&+&+&+&+&+&+&+&+&+&+&+&+&+&+&+\\
\hline
14&-&+&+&?&-&-&+&-&+&-&+&+&+&-&-&+&+&+&+&+\\
\hline
16&&&+&?&?&+&+&+&+&+&+&+&+&+&+&+&?&+&?&+\\
\hline
18&&&&+&-&-&+&-&+&-&+&+&+&-&-&+&?&?&+&+\\
\hline
20&&&&&+&?&+&?&?&+&+&?&+&+&+&+&?&+&+&+\\
\hline
22&-&&&&-&-&?&-&?&-&+&?&+&?&?&+&?&+&?&+\\
\hline
24&&&&&&+&?&?&?&+&+&?&?&+&+&+&?&?&+&*\\
\hline
26&-&&&&-&-&+&-&?&-&+&*&?&-&-&+&+&*&+&+\\
\hline
28&&&&&&&&+&?&?&+&?&?&?&?&+&?&?&?&+\\
\hline
30&-&&&&-&-&&&+&&+&?&?&-&-&?&?&+&?&+\\
\hline
32&&&&&&&&&&+&+&?&*&?&?&?&?&?&?&+\\
\hline
34&-&&&&-&-&&-&&-&&?&?&-&-&+&?&?&?&+\\
\hline
36&&&&&&&&&&&&?&?&?&?&+&?&?&+&+\\
\hline
38&-&&&&-&-&&-&&-&&+&?&-&-&+&?&?&+&?\\
\hline
40&&&&&&&&&&&&&?&?&?&?&?&?&?&+\\
\hline
42&-&&&&-&-&&-&&-&&&+&-&-&+&?&*&?&+\\
\hline
44&&&&&&&&&&&&&&+&?&?&?&?&?&?\\
\hline
46&-&&&&-&-&&-&&-&&&&-&-&?&?&?&?&+\\
\hline
48&&&&&&&&&&&&&&&+&?&?&?&?&?\\
\hline
50&-&&&&-&-&&-&&-&&&&-&-&+&?&?&*&+\\
\hline
52&&&&&&&&&&&&&&&&&?&?&?&+\\
\hline
62&-&&&&-&-&&-&&-&&&&-&-&&&+&?&+\\
\hline
72&&&&&&&&&&&&&&&&&&?&?&+\\
\end{array}
$$
}
\label{table:2}
\end{table}

\section{Conclusion}\label{section:conclusion}
In this correspondence, we have constructed new families of optimal $q$-ary Euclidean self-dual codes from algebraic curves. With the same spirit, constructing more families of Euclidean self-dual codes from genus zero and genus one curves (over $\F_q$ with $q$ a prime) is worth considering. Characterization and constructions of Hermitian self-dual codes from algebraic geometry codes are also valuable.


%




\begin{thebibliography}{99}
\bibitem{Ball} S. Ball, ``On sets of vectors of a finite space in which every subset of basis size is a basis," {\em J. Eur. Soc.} 14 (2012), 733-748.
\bibitem{BetGeoMas} K. Betsumiya, S. Georgiou, T.  A. Gulliver, M. Harada and C. Koukouvinos, ``On self-dual codes over some prime fields," {\em Discrete Math.} 262 (2003) 37--58.
\bibitem{Mag} W. Bosma and J. Cannon, {\em   Handbook of Magma Functions}, Sydney, 1995.
\bibitem{Cram}R. Cramer, V. Daza, I. Gracia, J. J. Urroz, G. Leander, J. Marti-Farre and  C. Padro, ``On codes, matroids, and secure multiparty computation
from linear secret-sharing schemes," {\em IEEE Trans. Inform. Theory},
vol. 54(6), pp. 2647--2657, 2008.
\bibitem{DouMesSol} S. T. Dougherty, S. Mesnager, and P. Sol\'e, ``Secret-sharing schemes
based on self-dual codes," in {\em Proc. Inf. Theory Workshop,} May 2008,
pp. 338--342.
\bibitem{DriSti}Y. Driencourt and H. Stichtenoth, ``A criterion for self-duality of geometric codes," {\em Communications in Algebra,} 1989, vol. 17(4), pp. 885--898.
\bibitem{FangFu} W. Fang and F. Fu, ``New Constructions of MDS Euclidean Self-dual Codes from GRS Codes and Extended GRS Codes," {\em IEEE Trans. Inform. Theory}, vol. 65(9), pp. 5574--5579, 2019.
\bibitem{GeoKou} S. Georgiou and C. Koukouvinos, ``MDS Self-Dual Codes over Large Prime Fields," {\em Finite Fields and Their Applications,} vol. 8, pp. 455--470, 2002.
\bibitem{Goppa}V. D. Goppa, ``Algebraico-geometric codes," {\em Math. USSR-lvz.} 21(1) (1983) 75-91.
\bibitem{GraGul} M. Grassl and T. A. Gulliver, ``On Self-Dual MDS Codes" {\em ISIT 2008}, Toronto, Canada, July 6 --11, 2008.
\bibitem{Gue} K. Guenda, ``New MDS self-dual codes over finite fields,"  {\em Des. Codes Cryptogr.} (2012) 62:31--42.


\bibitem{HufPle} W. C. Huffman and V. Pless,
{\em Fundamentals of Error-Correcting Codes,} Cambridge University Press, 2003.

\bibitem{JinXin} L. F. Jin and C. P. Xing, ``New MDS self-dual codes from generalized Reed-Solomon codes," {\em IEEE Trans. Inform. Theory,} vol. 63(3) , pp. 1434 --1438, 2017.
\bibitem{KimLee} J-L. Kim  and Y. Lee, ``Construction of MDS self-dual codes over Galois rings," {\em Des. Codes Cryptogr.} (2007) 45:247--258.
\bibitem {KimLee04} J-L. Kim and Y. Lee, ``Euclidean and Hermitian self-dual MDS codes over large finite fields," {\em J. Combin. Theory, Ser. A,} vol. 105, pp. 79--95, 2004.
\bibitem{MacSloTho} F. J. MacWilliams, N. J. A. Sloane and J.G. Thompson, ``Good self-dual codes exist,'' {\em Discrete Math. }, vol. 3, pp. 153--162, 1972.
\bibitem{MacSlo} F. J. MacWilliams and N. J. A. Sloane, {\em The Theory of Error-Correcting
Codes,} Amsterdam, The Netherlands: North Holland, 1977.
\bibitem{Massey} J. Massey, ``Some applications of coding theory in cryptography," in
{\em Proc. 4th IMA Conf. Cryptogr. Coding,} 1995, pp. 33--47.
\bibitem{Sok} L. Sok, ``Explicit constructions of MDS self-dual codes," {\em IEEE Trans. Inform. Theory}, DOI: 10.1109/TIT.2019.2954877.
\bibitem {Stich88} H. Stichtenoth, ``Self-dual Goopa codes," {\em Journal of Pure and Applied Algebra,} 55 (1988), pp. 199--211.
\bibitem {Stich} H. Stichtenoth, ``Algebraic function fields and codes," Springer, 2008.
\bibitem{TV} M.A. Tsfasman and S.G. Vl{a}dut, ``Algebraic-Geometric Codes" {\em Mathematics and Its Applications}, Springer, 1991.
%





\bibitem{TongWang} H. Tong and X. Wang, ``New MDS Euclidean and Hermitian self-dual codes over finite fields," {\em Adv. in Pure Math.} , vol. 7, pp. 325--333, May. 2017.
\bibitem{Yan} H. Yan, ``A note on the constructions of MDS self-dual codes," {\em Cryptogr. Commun.,} (2019) 11:259--268,
https://doi.org/10.1007/s12095-018-0288-3.
%
\end{thebibliography}
\end{document}